\documentclass[11pt]{article}
\pdfoutput=1
\usepackage{bm}
\usepackage[table,xcdraw,svgnames,dvipsnames]{xcolor}

\usepackage{microtype}
\usepackage[utf8]{inputenc}
\usepackage[T1]{fontenc}
\usepackage{lmodern}

\usepackage{graphicx,fullpage,paralist}
\usepackage{geometry}
\usepackage{amsmath,amssymb,amsthm}
\usepackage{comment,hyperref}
\usepackage{subcaption}
\usepackage{thmtools}
\usepackage{xspace}
\usepackage{enumitem}
\usepackage[font=footnotesize]{caption}
\usepackage{bbold}
\usepackage{booktabs}
\usepackage{mathtools}
\usepackage{thm-restate}
\usepackage{multirow}
\usepackage[ruled,linesnumbered]{algorithm2e}
\usepackage{tikz}
\usepackage{pgfplots}
\pgfplotsset{compat=1.10}
\usepgfplotslibrary{fillbetween}
\usetikzlibrary{backgrounds,patterns,positioning}

\hypersetup{
	colorlinks,
	linkcolor={red!50!black},
	citecolor={blue!50!black},
	urlcolor={blue!80!black}
}
\SetKwInput{KwData}{Input}
\SetKwInput{KwResult}{Output}

\newcommand{\calM}{\mathcal{M}}
\newcommand{\calL}{\mathcal{L}}

\newcommand{\calE}{\mathcal{E}}
\newcommand{\calI}{\mathcal{I}}
\newcommand{\calB}{\mathcal{B}}

\newcommand{\N}{\mathbb{N}}

\newcommand{\FF}{\mathbb{F}}

\newcommand{\E}{\mathbb{E}}

\DeclareMathOperator{\Po}{Poisson}
\DeclareMathOperator{\rank}{rank}

\DeclareMathOperator{\supp}{supp}

\newcommand{\AUX}{\mathrm{AUX}}
\newcommand{\OPT}{\mathrm{OPT}}
\newcommand{\ALG}{\mathrm{ALG}}
\newcommand{\Greedy}{\mathrm{Greedy}}

\newcommand{\JMSP}{J\text{-MSP}\xspace}
\newcommand{\MSP}{MSP\xspace}

\newcommand{\Trans}{\mathrm{T}}

\newcommand{\Prob}[4]{P[#1,#2,#3,#4]}

\theoremstyle{plain}
\newtheorem{theorem}{Theorem}[section]
\newtheorem{lemma}[theorem]{Lemma}
\newtheorem{corollary}[theorem]{Corollary}
\newtheorem{proposition}[theorem]{Proposition}

\newtheorem{observation}[theorem]{Observation}

\theoremstyle{definition}
\newtheorem{definition}[theorem]{Definition}

\newtheorem{remark}[theorem]{Remark}



\title{The Multiple-Choice Matroid Secretary Problem}
 \author{Matías Ortiz-Angel\thanks{Department of Mathematical Engineering and Center for Mathematical Modeling CNRS-IRL 2807, Universidad de Chile.} \and José A. Soto\footnotemark[1]}
\date{}

\begin{document}
\maketitle

\vspace{1em}

\begin{abstract}
We introduce and study the multiple-choice matroid secretary problem, denoted
the $(J,\kappa)$-MSP. When the matroid has rank one and $\kappa=\infty$, the model is exactly the classical secretary problem with $J$ choices. Elements
arrive in uniformly random order. The algorithm may keep a candidate pool
$\AUX$, but the pool must remain feasible in the $J$-fold union matroid
$\calM^{(J)}$ and must satisfy $|\AUX|\le \kappa\cdot\rank(\calM)$. After all
arrivals, the algorithm returns the maximum-weight independent subset of
$\AUX$ in the original matroid $\calM$. This model separates online storage
from the final feasible solution. We study two natural ways to implement these multiple choices:
multi-track algorithms, which maintain $J$ independent sets of $\calM$, and
union-based algorithms, which maintain the pool directly in $\calM^{(J)}$.

Our main result is an exact optimal algorithm for transversal matroids in the
uncapacitated $(J,\infty)$ setting. For every fixed $J$, the optimal
probability-competitive ratio is exactly the optimal success probability of the
classical secretary problem with $J$ choices. Thus the rank-one transversal
matroid is already the worst case for the whole transversal class, and the
optimal guarantee converges exponentially fast to $1$ as $J$ grows. We also
analyze a natural single-threshold routing algorithm for capacitated
transversal matroids, where each right vertex has capacity $b$ and the global
pool has capacity $\kappa\cdot\rank(\calM)$. Although the rule is simple, its
analysis gives explicit finite-parameter bounds and asymptotic formulas. These
formulas show how the finite-rank loss caused by the global capacity constraint
decays with the number of right vertices, and how the local capacity $b$, the
number of tracks $J$, and the global capacity $\kappa$ interact. Finally, we
instantiate the multi-track approach for
$k$-column-sparse matroids, obtaining a guarantee $1-O(e^{-J/(ke)})$, and the
union-based approach for laminar matroids, obtaining a guarantee
$1-O(e^{-J/e})$.
\end{abstract}
\newpage

\section{Introduction and model}\label{sec:intro}

Consider an online system for assigning seats to clients. Clients arrive one at a
time, in uniformly random order. Each client has a value and a set of acceptable
seats. The system gains the value of a client only if it assigns that client an
acceptable seat at the end. The seat itself does not affect the value.

This is an online selection problem. If a client is rejected when it arrives, then it is lost
forever. We study a matroidal version of multiple choice: the algorithm may keep a
bounded candidate pool and choose the final feasible solution after all arrivals.

We control the pool in two ways. First, we fix an integer $J$. The pool must be
coverable by at most $J$ feasible allocations, which we call tracks. When a
client arrives, the system may add the client to the pool only if the current
pool together with that client is still coverable by at most $J$ tracks. In the
seating example, each track is a matching between clients and seats. Thus no
seat is assigned to more than one client in a single track. Across the whole
pool, at most $J$ clients can be associated with the same seat. Second, we may
impose a global capacity constraint on the total size of the pool.

After all clients have arrived, the system computes the final allocation from the
candidate pool. The track decomposition is only a certificate that the pool was
feasible. The system may discard it. It then matches a subset of the pooled
clients to the seats so as to maximize the total value.

This seating example is an online bipartite matching problem with weights on the
clients. A set of clients can be served exactly when it can be matched to
distinct seats. These feasible sets form a transversal matroid. Thus the example
is already a matroidal online selection problem, but with a delayed final choice.
We now state the corresponding model for an arbitrary matroid.

Throughout the paper, all weights are strictly positive and distinct.
Equivalently, the weights induce a total order on the ground set. For
$X\subseteq N$, let $\OPT(X)$ denote the unique maximum-weight independent set
contained in $X$.

\begin{definition}[The $(J, \kappa)$-Multiple-Choice Matroid Secretary Problem]
Let $\calM=(N,\calI)$ be a matroid of rank $r=\rank(\calM)\ge 1$. Let
$J\ge 1$ be an integer and let $\kappa\in(0,\infty]$ be a global capacity
parameter. Elements arrive in a uniformly random order, revealing their weights
upon arrival. The algorithm maintains a candidate pool $\AUX\subseteq N$,
initially empty. When an element $e$ arrives, the algorithm must either reject it
irrevocably or add it to $\AUX$. It may add $e$ only if $\AUX\cup\{e\}$ is
independent in the $J$-fold union matroid $\calM^{(J)}$ and
$|\AUX\cup\{e\}|\le \kappa r$. After all elements have arrived, the algorithm
returns $\ALG=\OPT(\AUX)$.

We abbreviate this problem as the $(J,\kappa)$-MSP. We call the case
$\kappa=\infty$ the $J$-MSP.
\end{definition}

Recall that a set $F\subseteq N$ is independent in $\calM^{(J)}$ if and only if
it can be written as the union of $J$ independent sets of $\calM$. Thus the
online phase builds a pool that can be covered by $J$ feasible solutions, while
the final phase extracts one independent set from this pool.

When $J=1$ and $\kappa=\infty$, this is the Matroid Secretary Problem
(MSP)~\cite{Babaioff07}. Indeed, the pool $\AUX$ is independent in $\calM$, and
all weights are positive. Hence $\OPT(\AUX)=\AUX$, so the final output is exactly
the set accepted online.

When $\calM$ has rank one and $\kappa=\infty$, the problem is the classical
secretary problem with $J$ choices: the algorithm may keep at most $J$ elements
and wins if the best one is kept.

There are two useful ways to maintain this pool. A \emph{multi-track algorithm}
keeps $J$ independent sets $F_1,\ldots,F_J$ of $\calM$ and routes each accepted
element to one of them. Its pool is $\AUX=F_1\cup\cdots\cup F_J$, so feasibility
in $\calM^{(J)}$ is certified by construction. A \emph{union-based algorithm}
keeps only the pool $\AUX$ and checks feasibility in $\calM^{(J)}$ directly. It
therefore does not commit online to a partition of $\AUX$ into tracks. We use
multi-track algorithms when the structure gives a good routing rule, as in
transversal and $k$-column-sparse matroids. We use union-based algorithms when
the union constraint itself is easy to enforce, as in laminar matroids. This
distinction is algorithmic; both families solve the same $(J,\kappa)$-MSP.

\paragraph{Relation to the Matroid Secretary Conjecture.}
The Matroid Secretary Conjecture asks whether the MSP admits an algorithm whose
expected output weight is a constant fraction of the optimum weight on every
matroid. The distinction above is relevant to this question. Suppose that, for
some absolute constant $J$, there is a multi-track algorithm for the $J$-MSP on
general matroids that returns every element of $\OPT(N)$ with probability at
least $c$. Choose an index $i\in[J]$ uniformly at random before the process
starts. Then run the multi-track algorithm and accept exactly the elements routed
to $F_i$. The set $F_i$ is independent in $\calM$, so this gives a feasible MSP
algorithm.

For every $e\in\OPT(N)$, the probability that $e$ is routed to the chosen track
is at least $c/J$, because each accepted element is placed in exactly one track.
Since all weights are positive, linearity of expectation gives
\[
    \mathbb{E}[w(F_i)]
    \ge \frac{c}{J}\,w(\OPT(N)).
\]
Thus such a multi-track algorithm, with $J$ constant, would imply the Matroid
Secretary Conjecture. This argument uses the maintained tracks. It does not
apply to union-based algorithms, because a pool feasible in $\calM^{(J)}$ need
not be given online as the union of $J$ maintained independent sets.

\subsection{Our results}

\paragraph{Optimality for transversal matroids.}
For transversal matroids, we first study the
$(J,\infty)$-MSP. If the right side of the bipartite graph has a single vertex, then the model
is exactly the $J$-choice secretary problem: the algorithm may keep up to $J$ elements and wins
if it keeps the maximum-weight element. This is the $J$-choice secretary problem,
initially studied by Gilbert and Mosteller~\cite{GilbertMosteller}. Optimal
multiple-threshold policies are known for this problem, and their success
probability converges to $1$ as $J$ grows.

Our first result extends this one-vertex benchmark to all transversal matroids.
We give a multi-track algorithm. At each right vertex, the algorithm uses a
multiple-threshold rule analogous to the optimal rule for the $J$-choice
secretary problem. The assignments are determined by the canonical matching of
the elements seen so far. At the end, the algorithm returns the maximum-weight
independent set contained in the pool.

For every fixed $J$, each element of $\OPT$ is returned with probability equal
to the optimal success probability of the $J$-choice secretary problem. In the
terminology of Section~\ref{sec:prelim}, the algorithm is
probability-competitive with that ratio. This is also the best possible uniform
guarantee for transversal matroids, since the one-vertex case is exactly the
$J$-choice secretary problem.

\paragraph{A single-threshold algorithm for capacitated transversal matroids.}

We next study transversal matroids with capacity $b$ on the right vertices. In
these matroids, a set of left vertices is independent if it can be covered by a
$(1,b)$-matching: each left vertex is matched once, and each right vertex is
matched at most $b$ times. When the right side has a single vertex, this is the
uniform matroid of rank $b$.

We study an extension of the single-threshold $1/e$-competitive algorithm of
Kesselheim, Radke, T\"onnis, and V\"ocking~\cite{KesselheimRTV13} for transversal matroids to the
$(J,\kappa)$-MSP on these capacitated transversal matroids. Let $q$ be the number
of right vertices; in the setting analyzed below, the rank is $r=qb$. At every
step, our algorithm computes the optimal $(1,b)$-matching among the elements seen
so far. If the newly arrived element belongs to this matching, the algorithm
assigns it to the corresponding right vertex and then places it in the first
track where this assignment remains feasible. The pool must also satisfy the
global capacity bound $|\AUX|\le \kappa qb$.

For this routing algorithm, we study how to choose the sampling threshold $p$ as a function of the parameters $(J,\kappa,b,q)$. We give explicit bounds for finite parameter values, and we also describe several asymptotic regimes. For $b=1$ and $\kappa=\infty$, the classical choice $p=1/e$ gives guarantee $1/e$ when $J=1$, and the same fixed threshold yields guarantee $1-1/e$ as $J\to\infty$. If $p$ is optimized as a function of $J$, the guarantee increases further; Theorem~\ref{thm:routing-asymptotic} gives the asymptotic error term $C_0 J^{-1/(2e)}e^{-J/e}$.

The pool must satisfy $|\AUX|\le \kappa qb$, so $\kappa$ bounds the pool size as a multiple of the rank. In our analysis, the probability of violating this bound decays exponentially in the rank parameter $qb$, via standard concentration bounds for Poisson random variables. Our analysis shows that moderate global capacity already improves the one-track guarantee. In the large-$b$ limit, when $J=2$, taking $\kappa$ only slightly larger than $1$ suffices to exceed $1-1/e$ after taking $q\to\infty$.

We also study the limit $b\to\infty$. For $J=1$, the single-threshold rule recovers the asymptotic $1-1/e$ guarantee of Greedy-Improving from B\'erczi, Livanos, Soto, and Verdugo~\cite{BercziLSV24} for large-rank uniform matroids. For every fixed $J$, the optimized limit of this rule is $1-e^{-J}$.

\paragraph{Other matroid classes.}

We then study other classes of matroids. For $k$-column-sparse matroids, we use
a multi-track algorithm: each track follows the canonical rows associated with
the optimal elements, and the union of the tracks forms the pool. This gives a
probability-competitive ratio of $1-O(e^{-J/(ke)})$. For graphic matroids, this
becomes $1-O(e^{-J/(2e)})$, because graphic matroids are $2$-column-sparse.

For laminar matroids, we use a union-based greedy algorithm. It does not maintain
an explicit decomposition into $J$ tracks. Instead, it keeps the pool feasible
directly in $\calM^{(J)}$, and achieves a probability-competitive ratio of
$1-O(e^{-J/e})$.

\subsection{Related work}

\paragraph{The Matroid Secretary Problem.}
The Matroid Secretary Problem (\MSP) was introduced by Babaioff, Immorlica, and
Kleinberg~\cite{Babaioff07}, who posed the Matroid Secretary Conjecture: every
matroid should admit an online algorithm with a constant competitive ratio. The
conjecture remains open. For general matroids, the best known guarantee is an
$O(\log\log \rank)$-competitive algorithm, obtained by Lachish~\cite{Lachish14}
and Feldman, Svensson, and Zenklusen~\cite{FeldmanSZ15}.

Constant-competitive algorithms are known for several important classes of
matroids. For laminar matroids, constant utility-competitive guarantees were
obtained by Im and Wang~\cite{ImW11} and by Jaillet, Soto, and
Zenklusen~\cite{JailletSZ13}. More recently, B\'erczi, Livanos, Soto, and
Verdugo~\cite{BercziLSV24} gave a $1-\ln 2$ probability-competitive algorithm
for laminar matroids. For transversal matroids, Kesselheim, Radke, T\"onnis, and
V\"ocking~\cite{KesselheimRTV13} gave a $1/e$ utility-competitive algorithm based
on bipartite matchings. The same algorithm was later shown to be $1/e$
probability-competitive through the framework of $1$-forbidden
algorithms~\cite{SotoTV18}.

\paragraph{Multiple-choice secretary problems.}
Rank-one instances are the reason for the name of our model: the \JMSP becomes
the dowry problem with $J$ choices of Gilbert and Mosteller~\cite{GilbertMosteller}, also known as the
$J$-choice secretary problem. Chan, Chen and Jiang~\cite{ChanChen} study the more general
$(J,K)$-secretary problem in a continuous-time model, where the algorithm makes
$J$ choices and the payoff is the expected number of selected items among the
$K$ best. The case $K=1$ is the one used in this paper. They prove that
multiple-threshold algorithms are optimal in this regime and relate the
continuous-time model to the finite model through a continuous linear programming
formulation. Matsui and Ano~\cite{matsui2016lower} study the related
multiple-stopping odds problem and its connection with multiple-stopping
secretary problems.

Our analysis also uses the labeling-scheme framework of B\'erczi, Livanos, Soto,
and Verdugo~\cite{BercziLSV24}. In particular, this framework gives the
independent-increments property used in the analysis of the transversal
algorithms.

\paragraph{Matroid unions.}
Recently, Gujjar, Hua, Kleinberg, and Qiu~\cite{GujjarHKQ26} studied the matroid
secretary problem when the matroid is the $k$-fold union of a given matroid. They
obtained a utility-competitive ratio of $1-O(\sqrt{\log(n)/k})$, which gives a
constant lower bound when $k\ge \log n$. It is natural to ask whether
$k\ge \log r$ might suffice, but obtaining a guarantee for an absolute constant
$k$ remains open.

Their setting is different from ours. In their model, the secretary instance is
defined directly on the $k$-fold union matroid. In our model, the online phase
builds a pool feasible in $\calM^{(J)}$, and the final output is an independent
set of the original matroid $\calM$. For multi-track algorithms, we also maintain
the $J$ independent tracks of $\calM$ online. This maintained decomposition is
the structure used in the reduction from the $J$-MSP back to the classical MSP.

\paragraph{Online bipartite matching with free disposal.}
The feasibility structure of transversal matroids is close to online bipartite
matching. However, the $J$-MSP differs from online matching with free
disposal~\cite{fahrbach2020edgeweighted, blanc2021multiway}. In the
free-disposal model, an online element is assigned to an offline vertex
immediately, and conflicts are resolved locally by discarding previously assigned
elements. In contrast, the $J$-MSP builds a candidate pool and computes the final
matching globally after all arrivals. Free-disposal algorithms are provably
bounded away from $1$ even for uniform weights and random
arrivals~\cite{chierichetti2025new}. This does not contradict our transversal
result, because that result uses the final global optimization over the pool.
\section{Preliminaries}\label{sec:prelim}

In this section, we describe the matroid classes we use, the continuous-time
coupling model, and the probability-competitiveness metric.

Throughout, we work in the \emph{online-representation model}. The ground set
size $n=|N|$ is known in advance. A concrete representation of the matroid exists,
but is revealed online through the arriving elements. The form of the
representation depends on the matroid class. For each class, we specify what is
known from the start and what is revealed when an element arrives.

A matroid $\calM=(N,\calI)$ consists of a finite ground set $N$ and a nonempty family $\calI\subseteq 2^N$ of \emph{independent sets} such that:
\begin{enumerate}
    \item (Downward closed) If $I\in\calI$ and $I'\subseteq I$, then $I'\in\calI$.
    \item (Augmentation) If $I,J\in\calI$ and $|I|<|J|$, then there exists $z\in J\setminus I$ such that $I\cup\{z\}\in\calI$.
\end{enumerate}
A maximal independent set is a \emph{basis}. All bases have the same size, called the \emph{rank} $r=\rank(\calM)$.

\subsection{Matroid classes, continuous-time coupling, and competitiveness}

We consider the following matroid classes and information models. In all cases, the
algorithm knows the ground set size $n=|N|$ in advance.
\begin{itemize}
    \item \textbf{General matroid.} An abstract matroid given only through an
    independence oracle. Upon each arrival, the algorithm may query the oracle on
    any subset of the elements revealed so far.

    \item \textbf{Transversal matroid.} There exists a bipartite graph
    $G=(N\cup R,E)$ such that $X\subseteq N$ is independent if $G$ has a matching
    that matches every vertex of $X$. The algorithm knows $R$ from the start, and
    when $v\in N$ arrives it learns its neighborhood $\Gamma(v)\subseteq R$.

    \item \textbf{Graphic matroid.} There exists an undirected graph $G=(V,E)$
    with $N=E$ such that $X\subseteq E$ is independent if it is acyclic. We allow
    parallel edges (i.e., $G$ may be a multigraph). The algorithm knows $V$ from
    the start, and when an edge arrives it learns its endpoints.

    \item \textbf{$k$-column-sparse (linear) matroid.} There exists a matrix
    representation whose columns are indexed by $N$, where each column has at
    most $k$ nonzero entries. The number of rows and the field are known from the
    start, and when $v\in N$ arrives the algorithm learns its entire column (i.e.,
    the list of its nonzero entries, including row indices and values).

    \item \textbf{Laminar matroid.} There exists a laminar family $\calL$ over $N$
    and capacities $\mu(C)$ such that $X\subseteq N$ is independent if
    $|X\cap C|\le \mu(C)$ for all $C\in\calL$. Our laminar algorithms only require
    access to an independence oracle.

\end{itemize}
\paragraph{Random-order model and continuous-time coupling.}
Elements of $N$ arrive in a uniformly random order. When an element $v$ arrives,
the algorithm observes its weight and the class-specific representation
information revealed at its arrival.

We represent the random order through a continuous-time coupling. As above, the
algorithm knows $n=|N|$. Before any arrivals, it draws i.i.d. samples
$u_1,\ldots,u_n\sim\mathrm{Unif}[0,1]$ and sorts them as
$0<u_{(1)}<\cdots<u_{(n)}\le 1$. When the $i$-th element arrives in the random
order, we assign it arrival time $t_v:=u_{(i)}$. This coupling induces a uniform
random permutation (ties occur with probability zero), and it lets the algorithm
use time-based threshold rules based on the assigned value $t_v$.

Throughout the paper, all weights are strictly positive and distinct. For
$X\subseteq N$, let $\OPT(X)$ denote the unique maximum-weight independent set
contained in $X$.

Since all weights are positive, $\OPT(X)$ is a basis of the restriction
$\calM|X$. In particular, if $X$ has rank $r=\rank(\calM)$, then
$|\OPT(X)|=r$. We write $\OPT:=\OPT(N)$. For $t\in(0,1]$, let
$N_t:=\{v\in N: t_v\le t\}$ and $\OPT_t:=\OPT(N_t)$.

In the $(J,\kappa)$-MSP, the online algorithm builds a candidate pool $\AUX\subseteq N$ that is independent in the $J$-fold union matroid $\calM^{(J)}$ and satisfies the global capacity bound. It then outputs $\ALG:=\OPT(\AUX)$.
A multi-track algorithm enforces feasibility by maintaining $\AUX$ as the union of $J$ independent sets of $\calM$. A union-based algorithm instead checks membership in $\calM^{(J)}$ directly.
In both cases, we compare the offline optimum with the final output $\ALG$, not with the pool $\AUX$. We make this comparison using probability competitiveness rather than the standard utility-competitive ratio.

\begin{definition}[Probability competitiveness]\label{def:probability-competitive}
Let $\OPT=\OPT(N)$. An online algorithm is
\emph{$c$-probability-competitive} if $\Pr(e\in\ALG)\ge c$ for every
$e\in\OPT$.
\end{definition}

When we state a guarantee for a matroid class $\mathfrak C$, the guarantee is
uniform over finite instances in the class. For an algorithm $A$, define
\[
    \rho_{\mathfrak C}(A)
    := \inf_{I\in\mathfrak C}\ \min_{e\in\OPT(I)} \Pr_A(e\in\ALG(I)),
\]
where the infimum ranges over all finite instances $I$ in $\mathfrak C$. The
optimal probability guarantee for the class is $\sup_A \rho_{\mathfrak C}(A)$.

For transversal matroids, we also track the right vertex that certifies an
insertion. A matching on $N_t$ has the total weight of its matched left vertices.
A canonical maximum-weight matching $M_t$ is a matching that maximizes this total
weight, with ties resolved by a fixed deterministic rule. Its matched left set is
$\OPT_t$. If an arriving element $e\in N_t$ belongs to $\OPT_t$, let $f_e\in R$
be the right vertex matched to $e$ in $M_t$. We say that $e$ \emph{chooses its
partner} $f_e$.

\subsection{Improving elements and the labeling scheme}

\begin{definition}[Improving elements]
An arriving element $e_t$ at time $t$ is \emph{improving} if $e_t\in\OPT_t$.
\end{definition}

The following fact lets us focus on improving elements.

\begin{proposition}[Folklore, see \cite{SotoTV18}]
\label{prop:folklore}
For every matroid $\calM=(N,\calI)$ and every $A\subseteq N$,
\[
    \OPT(N)\cap A \subseteq \OPT(A).
\]
In particular, every $e\in\OPT$ is improving at its arrival time.
\end{proposition}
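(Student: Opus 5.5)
The plan is to prove the inclusion directly from the greedy characterization of $\OPT(\cdot)$. Since all weights are distinct and positive, for any $X\subseteq N$ the set $\OPT(X)$ is exactly the output of the greedy algorithm on $X$: scan $X$ in decreasing order of weight and keep an element whenever it is independent together with the elements kept so far. Equivalently, an element $e\in X$ belongs to $\OPT(X)$ if and only if $e$ is not spanned by the set $X_{>e}:=\{x\in X: w(x)>w(e)\}$, i.e.\ $\rank(X_{>e}\cup\{e\})>\rank(X_{>e})$. I will take this characterization as standard matroid theory. If needed, it follows from the fact that the greedy output is a basis of $\calM|X$ of maximum weight, together with uniqueness under distinct weights.

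With this in hand, fix $e\in\OPT(N)\cap A$. By the characterization applied to $X=N$, $e$ is not in the span of $N_{>e}=\{x\in N: w(x)>w(e)\}$. Since $A_{>e}=\{x\in A: w(x)>w(e)\}\subseteq N_{>e}$ and span is monotone, $e$ is not in the span of $A_{>e}$ either. Because $e\in A$, the characterization applied to $X=A$ gives $e\in\OPT(A)$. This proves $\OPT(N)\cap A\subseteq\OPT(A)$.

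For the "in particular" part, let $e\in\OPT$ arrive at time $t=t_e$, and take $A=N_t$, which contains $e$. The inclusion just proved gives $e\in\OPT(N_t)=\OPT_t$, so $e$ is improving at its arrival time by definition.

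The only nontrivial ingredient is the span characterization of the greedy or optimal set. If I had to justify it, I would argue as follows. Suppose $e\in\OPT(X)$ but $e\in\operatorname{span}(X_{>e})$. Then the fundamental circuit of $e$ with respect to a greedy basis of $X_{>e}$ consists only of heavier elements, and exchanging $e$ for one of them strictly increases the weight, contradicting optimality. Conversely, if $e\notin\operatorname{span}(X_{>e})$, greedy reaches $e$ with a kept set inside $X_{>e}$ and therefore keeps $e$. This is the only step I would write out with any care; the rest is immediate.
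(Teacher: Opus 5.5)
Your proof is correct and is essentially the paper's argument. Both identify $\OPT(\cdot)$ with the greedy output ($e\in\OPT(X)$ iff $e$ is not spanned by the heavier elements of $X$), and both use that the heavier elements of $A$ are a subset of those of $N$, so monotonicity of span carries $e\in\OPT(N)$ over to $e\in\OPT(A)$, with $A=N_{t_e}$ giving the ``in particular'' part. One loose spot is in your side justification of the span characterization: the element $f$ of the circuit that you swap in must be chosen so that $(\OPT(X)\setminus\{e\})\cup\{f\}\in\calI$. Such an $f$ exists, since otherwise $e\in\operatorname{span}(\OPT(X)\setminus\{e\})$. The paper avoids this step by using directly that greedy's kept set spans every element it has already processed.
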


\begin{proof}
Run greedy in decreasing order of weight; on $N$ it returns $\OPT(N)$, and on
$A$ it returns $\OPT(A)$. Fix $e\in\OPT(N)\cap A$. Since greedy accepts $e$ on
$N$, the elements of $N$ with weight larger than $w(e)$ do not span $e$. When
greedy is run on $A$, the elements considered before $e$ are exactly the elements
of $A$ with weight larger than $w(e)$, hence a subset of the larger-weight
elements of $N$. They do not span $e$, so greedy accepts $e$ on $A$. Thus
$e\in\OPT(A)$.
\end{proof}

\begin{observation}\label{obs:union-vs-output}
Let $U$ be the candidate pool produced by any execution of a $(J,\kappa)$-MSP
algorithm, and let $\ALG:=\OPT(U)$. Then, for every $v\in\OPT(N)$,
\[
    v\in U \iff v\in \ALG .
\]
\end{observation}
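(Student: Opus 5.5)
Both directions follow from Proposition~\ref{prop:folklore} together with the fact that $\ALG$ is, by definition, a subset of the pool. Fix an execution, let $U$ be its final pool, and fix $v\in\OPT(N)$.

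For the direction ($\Leftarrow$), suppose $v\in\ALG$. Since $\ALG=\OPT(U)$ is the maximum-weight independent subset of $U$, we have $\ALG\subseteq U$. Hence $v\in U$. This uses nothing about $v$ being in $\OPT(N)$.

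For the direction ($\Rightarrow$), suppose $v\in U$. We apply Proposition~\ref{prop:folklore} with $A:=U$, which gives
\[
  \OPT(N)\cap U\subseteq \OPT(U)=\ALG .
\]
Since $v\in\OPT(N)\cap U$, it follows that $v\in\ALG$.

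The argument does not depend on how $U$ was built. It does not matter whether $U$ is independent in $\calM^{(J)}$, whether it respects the capacity $\kappa r$, or whether it came from a multi-track or a union-based algorithm. The only requirement is that $\OPT(\cdot)$ is the well-defined unique maximum-weight independent subset, which holds because the weights are distinct and positive. I do not expect any step to be an obstacle. The only point to state explicitly is that Proposition~\ref{prop:folklore} holds for an arbitrary subset $A\subseteq N$, so it applies to the random set $U$ realized in each execution, pointwise on every outcome.

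As a consequence, $\Pr(v\in\ALG)=\Pr(v\in U)$ for every $v\in\OPT$. This is how the observation will be used later: to analyze probability competitiveness, it suffices to bound the probability that each optimal element enters the pool.
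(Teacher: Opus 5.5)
Your proposal is correct and matches the paper's proof: the inclusion $\ALG=\OPT(U)\subseteq U$ gives one direction, and Proposition~\ref{prop:folklore} applied with $A=U$ gives $\OPT(N)\cap U\subseteq\OPT(U)=\ALG$ for the other. Your remark that the argument holds pointwise for any realized pool, regardless of how it was constructed, is exactly how the paper uses it later.
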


\begin{proof}
The implication $v\in \ALG \implies v\in U$ is immediate, since
$\ALG=\OPT(U)\subseteq U$. Conversely, suppose that $v\in U$ and $v\in\OPT(N)$.
By Proposition~\ref{prop:folklore}, applied with $A=U$, we get
\[
    v\in \OPT(N)\cap U \subseteq \OPT(U)=\ALG .
\]
This proves the equivalence.
\end{proof}
\paragraph{Full-rank convention.}\label{par:full-rank-convention}
In the algorithms analyzed in this paper, we use one technical convention. We fix
a constant $\tau_0\in(0,1)$ and assume that, for every
$t\ge \tau_0$, the set $N_t$ has rank $r$. Equivalently, $\OPT_t$ is a basis
for every $t\ge \tau_0$.

This convention is not part of the $(J,\kappa)$-MSP model. It is only used in the
analysis. The algorithms remain algorithms for the random-order model. The
continuous-time coupling only assigns times to the positions of a uniformly
random permutation.

All algorithms in this paper have a first sampling threshold $p>0$. Before time
$p$, they do not accept elements and do not change their state. After time $p$,
they act only on improving arrivals. If the arriving element is not improving,
the algorithm does nothing. In each application, we choose $\tau_0<p$. Thus
every arrival that can change the state of the algorithm occurs after the
full-rank convention is already valid.

We justify the convention using the construction of B\'erczi, Livanos, Soto, and
Verdugo~\cite{BercziLSV24}. They give the formal construction in their paper. We
recall only the part that we need. Add a dummy basis $X$ of size $r$. For each
$x\in X$, consider parallel dummy copies $x^1,x^2,x^3,\ldots$ with positive,
distinct, infinitesimal weights
$w(x^1)>w(x^2)>w(x^3)>\cdots$, all smaller than every original weight. Draw
independent arrival times for these copies. Let $x^{k(x)}$ be the first copy
whose arrival time lies in $(0,\tau_0)$. Keep $x^1,\ldots,x^{k(x)}$ and
discard all later copies. Then, by time $\tau_0$, at least one copy of each
$x\in X$ has arrived.

This does not make the input infinite. The infinite sequence of copies is only a
device for sampling the finite prefix $x^1,\ldots,x^{k(x)}$. This prefix is
finite almost surely. All later copies are discarded before the augmented
execution starts.

The discarded copies do not affect the process after time $\tau_0$. Each
discarded copy of $x$ is parallel to $x^{k(x)}$ and has a smaller weight. Since
$x^{k(x)}$ has already arrived by time $\tau_0$, no discarded copy can be
improving at any time $t\ge \tau_0$. Thus, removing the discarded copies does
not change the improving-arrival process on $[\tau_0,1]$.

The representation of $X$ depends on the matroid class: for transversal
matroids, we realize $X$ using universal left vertices; for graphic matroids, we
use the edges of a fixed spanning tree; and for $k$-column-sparse linear
matroids, we use the $r$ columns of the $r\times r$ identity matrix. For laminar
and general matroids, it suffices to add $r$ free dummy elements (and their
parallel copies) and truncate to rank $r$.

Dummy elements have positive, distinct, infinitesimal weights, all smaller than
every original weight. Thus they do not change the relative order of the original
elements, and they do not change the set of original elements in $\OPT$.

We analyze the algorithms on the augmented instance. Thus a dummy element that
arrives after the sampling threshold is processed by the algorithm exactly as any
other improving element. It may update the maintained sets, it may be inserted
into $\AUX$, and, when $\kappa<\infty$, it may consume global capacity. At the
end, we project the output back to the original instance by discarding all dummy
elements. All probability-competitive guarantees refer only to original elements
$e\in\OPT$.

After the augmentation, the elements seen by time $\tau_0$ contain one copy
of each element of $X$. Hence $N_t$ has rank $r$ for every $t\ge \tau_0$, and
$\OPT_t$ is a basis for every $t\ge \tau_0$. We treat the augmented instance
as the instance and suppress this distinction for the analysis.

This also does not make the guarantee asymptotic in $|N|$. The original instance
is finite, and the augmented execution uses only finitely many dummy elements
almost surely. The proof is carried out on this finite augmented instance, and
then the output is projected by discarding dummy elements. Thus the stated
probabilities concern original elements of $\OPT$ in the projected output.

The role of $\tau_0$ is only to avoid the singularity at time $0$. We apply the
Poisson property only to intervals contained in $[\tau_0,1]$ and after the first
sampling threshold.

\paragraph{Labeling schemes and the Poisson property.}\label{par:labeling-schemes}
A labeling scheme, in the sense of B\'erczi, Livanos, Soto, and
Verdugo~\cite{BercziLSV24}, is a rule that assigns labels to the elements of
$\OPT_t$ at each improving time $t\ge \tau_0$.
More precisely, for each such time $t$, the scheme specifies a bijection
\[
    \ell_t:\OPT_t\to [r].
\]
If the arriving element $e_t$ is improving, then the label of this improving
arrival is $\ell_t(e_t)$. This label is not chosen at random at time $t$; it is
the label that $e_t$ has under the bijection specified by the scheme for
$\OPT_t$.

The definition of B\'erczi, Livanos, Soto, and Verdugo also requires the scheme
to be independent of the internal arrival order inside $N_t$. Thus $\ell_t$ may
depend on $\OPT_t$, on the fixed data used by the scheme, and on future improving
arrivals and their times, but not on the internal arrival order inside $N_t$.

The continuous-time coupling only represents the finite random-order model. The
Poisson statements used below concern improving arrivals, and they come from the
labeling-scheme result of B\'erczi, Livanos, Soto, and Verdugo.

Under the full-rank convention, Lemma~3 of~\cite{BercziLSV24} gives the following
Poisson property.

\begin{proposition}[B\'erczi, Livanos, Soto, and Verdugo, see~\cite{BercziLSV24}]
\label{prop:poisson-label-general}
Assume the full-rank convention above. For any labeling scheme in the sense above, any subset of labels
$S\subseteq [r]$, and any interval $[s,t)$ with $\tau_0\le s<t\le 1$, let $N_S[s,t)$ denote the number of
improving arrivals in $[s,t)$ whose label lies in $S$:
\[
N_S[s,t) := \bigl|\{\tau\in[s,t) : e_\tau \text{ is improving and } \ell_\tau(e_\tau)\in S\}\bigr|.
\]
Then $N_S[s,t)\sim \Po\bigl(|S|\ln(t/s)\bigr)$. Moreover, the random variables $N_S[s,t)$ for disjoint
subintervals of $[\tau_0,1]$ are independent.

More generally, if $S_1,\ldots,S_m\subseteq[r]$ are
pairwise disjoint subsets of labels and $[s_1,t_1), \ldots,$ $[s_m,t_m)$ are intervals in $[\tau_0,1]$
(not necessarily disjoint), then $N_{S_1}[s_1,t_1),\ldots,N_{S_m}[s_m,t_m)$ are mutually independent.
\end{proposition}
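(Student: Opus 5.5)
Proposition~\ref{prop:poisson-label-general} is cited from B\'erczi, Livanos, Soto, and Verdugo, so the plan is to rebuild their argument in the present setting. The starting point is the backward, or reverse-time, description of the random order. Condition on the set $N_t$. Then the last arrival in $N_t$ is uniform over $N_t$ and its time is the maximum of the arrival times. Going backwards, $e_t$ is improving exactly when the removed element belongs to $\OPT_t$. Under the full-rank convention, $|\OPT_t|=r$ for every $t\ge\tau_0$. So, conditionally on $N_t$, the label of the element removed at the next backward step is uniform on $[r]$, because $\ell_t$ is a bijection $\OPT_t\to[r]$. This step uses that the labeling does not depend on the internal order of $N_t$.

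Next, I will switch to the Poissonized continuous-time picture and argue in log-time. Conditioned on the future, meaning everything after time $t$ including $N_t$, each element of $N_t$ has arrival time uniform on $[0,t]$. Hence the hazard of the event ``the element with label $i$ is the last one in $N_t$, at time $t-dt$'' equals $dt/t$ for each label $i$. It is at most $dt/t$ in general and exactly $dt/t$ in the regime where labels stay well defined. The uniformity of $N_t$'s internal order is exactly what is needed here. Therefore, in the reverse-time variable $u=-\ln t$, the improving arrivals with label $i$ form a counting process with intensity $1$, and this intensity does not depend on the past of the reversed process.

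By the standard martingale characterization (Watanabe's theorem), a family of counting processes with no common jumps, whose compensators are deterministic and equal to $u$, consists of independent homogeneous Poisson processes. Thus the $r$ label processes are independent rate-one Poisson processes in $u$ on the interval $[0,-\ln\tau_0]$. Summing over $S$ gives $N_S[s,t)\sim\Po(|S|\ln(t/s))$. Independence over disjoint intervals follows from independent increments. Mutual independence for disjoint label sets $S_1,\ldots,S_m$, with arbitrary intervals, follows from independence of the individual label processes.

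The main obstacle is the conditioning step. The labeling $\ell_t$ is allowed to depend on future improving arrivals. I must check that this dependence is measurable with respect to the reverse-time filtration, which contains the future, so the intensity computation is still legitimate. I must also check that discretization effects from the finite set $N$ are handled, either through the dummy-copy augmentation or by passing to the limit of the coupling.
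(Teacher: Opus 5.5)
The paper gives no argument of its own for this proposition; it imports it from Lemma~3 of B\'erczi, Livanos, Soto, and Verdugo. Your reverse-time route is therefore a genuinely different, self-contained proof, and it is sound in outline. Given all information after time $t$, each element of $\OPT_t$ has conditional arrival density $1/t$ on $[0,t]$, and under every bijection $\ell_\tau$ exactly one of them carries label $i$. Hence label $i$ has reverse-time intensity $1/t$, i.e.\ rate one in $u=\ln(1/t)$, and the multivariate Watanabe theorem gives $r$ independent rate-one Poisson processes. That one statement yields all three claims at once, including independence for disjoint label sets over overlapping intervals. Via the strong Markov property at the reverse stopping time $\ln(1/t_{e^*})$, it also yields the conditional form the paper uses later. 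The citation, by contrast, buys brevity and inherits the source's handling of the dummy construction.

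The measurability issue you flag is automatic: $\ell_\tau$ is by definition a function of $\OPT_\tau$, fixed data, and arrivals after $\tau$, which is exactly reverse-time information. No limit of the coupling is needed either, since the continuous-time coupling is exact for every finite $n$ and nothing has to be Poissonized.

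What does fail as written is your claim that, given everything after $t$ including $N_t$, every element of $N_t$ is uniform on $[0,t]$. With a finite ground set and truly i.i.d.\ times this is true, but then the full-rank convention cannot hold almost surely, since all elements may arrive after $\tau_0$. In the finite augmented instance that realizes the convention, the claim is false for the dummies, because which copies are retained depends on arrival times below $\tau_0$. Indeed, $N_t$ reveals $k(x)$, the largest retained index. Suppose the lowest-index copy $x^{j_0}$ of $x$ in $N_t$ lies in $\OPT_t$. If $j_0<k(x)$, its time is uniform on $[\tau_0,t]$ and its backward hazard at $t$ is $1/(t-\tau_0)$; if $j_0=k(x)$, its time is below $\tau_0$ and the hazard is $0$. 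A rank-one instance with one original element, observed at a time $t\ge\tau_0$ before that element arrives, already exhibits this. So in that model the intensities are random, and Watanabe's theorem does not apply directly.

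The repair is to run your intensity computation in the model where all copies $x^1,x^2,\dots$ are kept with independent uniform times. There:
\begin{itemize}
\item conditional uniformity holds exactly;
\item $N_t$ has rank $r$ for every $t>0$;
\item copies with index above $k(x)$ are never improving after $\tau_0$;
\item as the paper notes, the improving arrivals on $[\tau_0,1]$ coincide with those of the finite augmented instance, and hence so do $\OPT_\tau$ and the labels there.
\end{itemize}
Since $N_t$ is infinite in that model, drop the ``last arrival of $N_t$'' picture from your first paragraph. Compute the hazard only over the $r$ elements of $\OPT_t$, which is all the argument uses.
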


We will use Proposition~\ref{prop:poisson-label-general} with labeling schemes
defined after fixing an element $e^*\in\OPT$. This is allowed by the definition
above, since the scheme may depend on the fixed target element and on future
improving arrivals, but not on the internal arrival order inside $N_t$.

We will also use the corresponding conditional form from the same labeling
scheme result: after conditioning on the arrival time $t_{e^*}=t$ of the fixed
target element, the label counts in intervals contained in $[\tau_0,t)$ have the
same Poisson distributions as in Proposition~\ref{prop:poisson-label-general}.

In all applications below, the first sampling threshold is larger than
$\tau_0$. Therefore, we only apply Proposition~\ref{prop:poisson-label-general}
to intervals after the sampling threshold, and these intervals are contained in
$[\tau_0,1]$.

\paragraph{Chernoff bounds for Poisson variables.}
We repeatedly upper bound Poisson tails using the following standard forms of the Chernoff (Cram\'er--Chernoff) bound.

\begin{lemma}\label{lem:poisson-chernoff}
Let $N\sim\Po(\mu)$.
\begin{enumerate}[label=(\alph*)]
    \item For every $x>\mu$,
    \[
        \Pr(N\ge x)
        \le
        \exp\bigl(-x\ln(x/\mu)+x-\mu\bigr)
        =
        \exp\bigl(-D(x\|\mu)\bigr),
    \]
    where $D(x\|\mu):=x\ln(x/\mu)-x+\mu$.
    \item For every integer $J\ge 1$ and every $0\le \lambda<J$,
    \[
        \Pr\bigl(\Po(\lambda)\ge J\bigr) \le \left(\frac{e\lambda}{J}\right)^J e^{-\lambda}.
    \]
\end{enumerate}
\end{lemma}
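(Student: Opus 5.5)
The plan is to use the exponential-moment (Cramér--Chernoff) method via Markov's inequality. I will start from the explicit moment generating function of a Poisson variable. For $N\sim\Po(\mu)$ and every $\theta\in\mathbb{R}$,
\[
\E[e^{\theta N}] = \sum_{k\ge 0} e^{-\mu}\frac{(\mu e^\theta)^k}{k!} = \exp\bigl(\mu(e^\theta-1)\bigr).
\]
For $\theta>0$, the map $y\mapsto e^{\theta y}$ is increasing and positive. Markov's inequality therefore gives
\[
\Pr(N\ge x)=\Pr\bigl(e^{\theta N}\ge e^{\theta x}\bigr)\le \exp\bigl(\mu(e^\theta-1)-\theta x\bigr).
\]

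For part (a), I optimize over $\theta>0$. The exponent $g(\theta)=\mu(e^\theta-1)-\theta x$ is convex. Its derivative vanishes at $e^\theta=x/\mu$, that is, at $\theta^*=\ln(x/\mu)$. The hypothesis $x>\mu$ is used exactly here, since it guarantees $\theta^*>0$ and so the choice is admissible. Substituting gives
\[
g(\theta^*)=\mu\bigl(x/\mu-1\bigr)-x\ln(x/\mu)=x-\mu-x\ln(x/\mu),
\]
which is $-D(x\|\mu)$. This proves (a).

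For part (b), I take $x=J$ and $\mu=\lambda$ in (a), assuming $0<\lambda<J$. The bound becomes
\[
\exp\bigl(-J\ln(J/\lambda)+J-\lambda\bigr)=\left(\frac{\lambda}{J}\right)^J e^J e^{-\lambda}=\left(\frac{e\lambda}{J}\right)^J e^{-\lambda}.
\]
The degenerate case $\lambda=0$ is handled separately. There $N=0$ almost surely, so $\Pr(N\ge J)=0$ because $J\ge 1$, and the right-hand side equals $0$ under the convention $0^J=0$.

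I do not expect a real obstacle in this argument. The only points that need care are the two sign conditions: checking that $\theta^*>0$ (which is where $x>\mu$ enters) and treating the boundary case $\lambda=0$ separately.
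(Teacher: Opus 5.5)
Your proof is correct and is exactly the standard Cram\'er--Chernoff argument that the paper invokes: Markov's inequality applied to $e^{\theta N}$, using the Poisson moment generating function and optimizing at $\theta^*=\ln(x/\mu)$; the paper itself states this lemma without proof as a standard fact. The only implicit assumption is $\mu>0$ in part (a), which the formula $\ln(x/\mu)$ already presupposes, and you handle its degenerate counterpart correctly in part (b).
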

In the later algorithms, we will define concrete labeling schemes for the
matroid classes we study. In the next section, Lemma~\ref{lem:admissibility}
verifies that the labeling scheme used for transversal matroids is a labeling
scheme in the sense above.

Following B\'erczi, Livanos, Soto, and Verdugo~\cite{BercziLSV24}, we define the
\emph{improving word} on $[p,1)$ as follows. Fix $p\ge \tau_0$. Take the
labels of the improving arrivals in $[p,1)$ and read them in reverse
chronological order. This gives a word $z$ over the alphabet $[r]$. By
Proposition~\ref{prop:poisson-label-general} with $S=[r]$, the length of $z$ has
distribution $|z|\sim \Po(r\ln(1/p))$. Moreover, conditional on $|z|=m$, the word
$z$ is uniform over $[r]^m$.

\section{Transversal matroids and the \texorpdfstring{$J$}{J}-choice secretary problem}
\label{sec:transversal}

In this section, we give online algorithms for the $J$-MSP on transversal
matroids. The analysis reduces the selection probability of each fixed
$v\in\OPT$ to the success probability of the classical $J$-choice secretary
problem. This yields the optimal probability-competitive guarantee for this
class.

Throughout this section, let $\calM=(N,\calI)$ be a fixed transversal matroid of
rank $r$. We write $G=(N\cup R,E)$ for an underlying bipartite graph presentation;
the right vertex set $R$ is known from the start, and when an element $v\in N$
arrives the algorithm learns its neighborhood $\Gamma(v)\subseteq R$.

\paragraph{Matching assumption.}
In what follows, we assume that $G=(N\cup R,E)$ contains a matching that saturates
the right side $R$.

Under the online-representation model, the algorithm knows $R$ in advance but
learns the neighborhoods of left vertices only upon their arrival. Therefore,
rather than testing whether the assumption holds, we simply enforce it in the
analysis by augmenting the instance with (dummy) universal left vertices adjacent
to all of $R$. This guarantees a right-saturating matching and does not change
the transversal matroid induced on the original ground set $N$ (see the full-rank
convention, p.~\pageref{par:full-rank-convention}).

The matching assumption implies that the transversal rank is $r=|R|$ in the
augmented instance used for the analysis. By the full-rank convention, $\OPT_s$
is a basis of size $r$ for every $s\ge \tau_0$. Hence the canonical matching
$M_s$ saturates $\OPT_s$ and also saturates $R$. Therefore, for every right
vertex $f\in R$, and in particular for $f_v$, there is a unique element of
$\OPT_s$ matched to $f$ in $M_s$.

\paragraph{Labeling scheme for a single element in $\OPT$.}
Fix $v\in\OPT$ for the analysis. We define a labeling scheme (as defined in
Section~\ref{sec:prelim}) that is tailored to this fixed element $v$. As allowed
by the definition, this construction may depend on $v$ and on future improving
arrivals.

Let $t_v$ be the arrival time of $v$. Since $v\in\OPT(N)$,
Proposition~\ref{prop:folklore} implies that $v\in\OPT_{t_v}$. Let $M_{t_v}$ be
the canonical maximum-weight matching at time $t_v$, and let $f_v\in R$ be the
right vertex matched to $v$ in $M_{t_v}$.

For each improving time $s\ge \tau_0$, we assign label $1$ as follows.
If $s\ge t_v$, we set $\ell_s(v)=1$. If $s<t_v$, we set label $1$ to be the
(unique) element of $\OPT_s$ that is matched to $f_v$ in the canonical matching
$M_s$ (this element may be a dummy). The remaining labels are assigned
deterministically to the remaining elements of $\OPT_s$ in a way that is
independent of the internal arrival order of $N_s$.

\begin{lemma}[The rule above is a labeling scheme]
\label{lem:admissibility}
Fix $v\in\OPT$ and let $f_v$ be the right vertex matched to $v$ in the
canonical maximum-weight matching at time $t_v$. The rule defined above is a
labeling scheme in the sense used in Proposition~\ref{prop:poisson-label-general}.
\end{lemma}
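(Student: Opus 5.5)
We check the two requirements in the definition of a labeling scheme from Section~\ref{sec:prelim}. First, for each improving time $s\ge\tau_0$ the rule must define a bijection $\ell_s:\OPT_s\to[r]$. Second, $\ell_s$ may depend only on $\OPT_s$, on fixed data (here the target $v$, the graph $G$, and the deterministic tie-breaking rule of the canonical matching), and on future improving arrivals and their times. It must not depend on the internal arrival order inside $N_s$.

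\emph{Well-definedness and bijectivity.} By the full-rank convention and the matching assumption, for $s\ge\tau_0$ the canonical matching $M_s$ is a perfect matching between $\OPT_s$ and $R$, with $|\OPT_s|=|R|=r$.
\begin{itemize}
\item If $s\ge t_v$, then $v\in N_s$ and $v\in\OPT(N)\cap N_s\subseteq\OPT_s$ by Proposition~\ref{prop:folklore}. So assigning label $1$ to $v$ is legitimate.
\item If $s<t_v$, then $f_v\in R$ is matched in $M_s$ to exactly one element of $\OPT_s$, and that element receives label $1$.
\end{itemize}
In both cases exactly one element of $\OPT_s$ gets label $1$. The remaining $r-1$ elements receive labels $2,\dots,r$ by a fixed deterministic rule, for instance ordering them by weight, which depends only on the set $\OPT_s$. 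Hence $\ell_s$ is a bijection.

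\emph{Independence from the internal order.} Split again by the position of $s$ relative to $t_v$.
\begin{itemize}
\item $M_s$ is the canonical maximum-weight matching on $N_s$. It is determined by the set $N_s$ (indeed by $\OPT_s$ together with $G$), the weights, and the fixed tie-breaking rule. It does not depend on the order in which the elements of $N_s$ arrived.
\item The vertex $f_v$ is determined by $M_{t_v}$, so it is a function of the set $N_{t_v}$.
\item For $s<t_v$, the time $t_v$ and the set $N_{t_v}$ concern the future relative to $s$. The arrival of $v$ at time $t_v$ is an improving arrival (Proposition~\ref{prop:folklore}), and $N_{t_v}\setminus N_s$ consists of future arrivals. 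Thus $f_v$ is future information, and the label at time $s$ depends on $N_s$ only through the unordered set.
\item For $s\ge t_v$, whether $s\ge t_v$ holds is determined by whether $v\in N_s$. This is a set property and not an order property, so the rule is again order-independent.
\end{itemize}

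\emph{Main obstacle.} The delicate step is showing that $f_v$ is legitimate "future" data. It depends on all of $N_{t_v}$, including non-improving future arrivals, whereas the definition explicitly permits dependence on future improving arrivals and their times. I would handle this as follows. The only requirement is that, conditioned on the set $N_s$ and everything after $s$, the label is invariant under permutations of the arrival order within $N_s$. Here $f_v$ is a function of the set $N_{t_v}$, which is invariant under permuting the order inside $N_s$. This is precisely the invariance used in the proof of Lemma~3 of~\cite{BercziLSV24}: the reverse-time exchangeability argument only needs $\ell_s$ to be measurable with respect to the set $N_s$ and the post-$s$ history. Making this precise is the crux of the argument; the remaining steps above are routine.
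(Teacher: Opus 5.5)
Your proposal is correct and follows essentially the same route as the paper. Bijectivity comes from the matching assumption and the full-rank convention, and you additionally spell out, via Proposition~\ref{prop:folklore}, that $v\in\OPT_s$ for $s\ge t_v$. Order-independence holds because $M_s$ is determined by the set $N_s$, the weights and the tie-breaking rule, and $t_v,f_v$ are admissible information revealed after $s$.

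The paper does not treat the dependence of $f_v$ on non-improving future arrivals as an obstacle, so your final paragraph is extra care rather than a missing step. It also largely dissolves: by Proposition~\ref{prop:folklore}, no non-improving arrival in $(s,t_v]$ lies in $\OPT_{t_v}$, so $\OPT_{t_v}$ is determined by $N_s$ and the future improving arrivals.
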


\begin{proof}
At each improving time $s$, the rule assigns a bijection from $\OPT_s$ to
$[r]$. For $s<t_v$, label $1$ is assigned to the element of $\OPT_s$ matched
to the fixed right vertex $f_v$ in the canonical matching at time $s$. This
element exists and is unique by the matching assumption and the full-rank
convention. For $s\ge t_v$, label $1$ is assigned to $v$. The remaining labels
are assigned deterministically to the remaining elements of $\OPT_s$.

The canonical matching at time $s$ is determined by the set $N_s$, the weights,
and the fixed tie-breaking rule. Hence the label assignment is independent of
the internal arrival order inside $N_s$. The rule is defined with respect to
the fixed target element $v$, and it uses the associated quantities $t_v$ and
$f_v$ (defined from the canonical matching at time $t_v$). Note that, when
$s<t_v$, the definition of a labeling scheme permits the label assignment at time
$s$ to depend on information revealed after time $s$, including $t_v$ and $f_v$.
The rule may also refer to future improving times, as permitted by the definition of a labeling
scheme in Section~\ref{sec:prelim}. Thus the rule is a labeling scheme in the
required sense.
\end{proof}

By Lemma~\ref{lem:admissibility}, the rule above is a labeling scheme in the
sense of Proposition~\ref{prop:poisson-label-general}. In particular, the
resulting label-$1$ counting process has independent increments. Moreover, for
any improving time $s<t_v$, the unique label-$1$ element is precisely the element
of $\OPT_s$ whose canonical partner in $M_s$ is $f_v$. Thus, before $v$ arrives,
the label-$1$ arrivals are exactly the improving arrivals that compete with $v$
for the right vertex $f_v$.

Throughout, we choose the parameter $\tau_0$ in the full-rank convention
(see Paragraph~\ref{par:full-rank-convention}) to be smaller than the first time
threshold used by the algorithm under analysis (e.g., $\tau_0<p$ for
Algorithm~\ref{alg:single} and $\tau_0<p_1$ for Algorithm~\ref{alg:optJ}).
Applying Proposition~\ref{prop:poisson-label-general}
to $S=\{1\}$ gives
\[
N_{\{1\}}[s,t)\sim\Po(\ln(t/s))
\qquad\text{for every }\tau_0\le s<t\le 1.
\]

\subsection{Warm-up: \texorpdfstring{MSP}{MSP} on transversal matroids, the case \texorpdfstring{$J=1$}{J=1}}

We start with the case $J=1$, that is, the classical \MSP on transversal
matroids~\cite{KesselheimRTV13}. The algorithm maintains a single independent set
throughout the execution, and it may accept multiple elements. We analyze its
probability guarantee by fixing an element $v\in\OPT$. We then use the labeling
scheme from the previous paragraph to track the improving arrivals that compete
with $v$.

\begin{algorithm}[ht]
\caption{Single-threshold routing algorithm for transversal matroids}
\label{alg:single}
\DontPrintSemicolon
\KwData{A transversal matroid $\calM=(N,\calI)$ with right vertex set $R$, a threshold $p\in(0,1)$, and access to the arrival time $t_e$ and neighborhood $\Gamma(e)\subseteq R$ of each element $e$ upon its arrival.}
\KwResult{An independent set in $\calI$.}
$\ALG\leftarrow \emptyset$, $P\leftarrow \emptyset$\;
Reject all arrivals before time $p$\;
\ForEach{arriving element $e$ at time $t_e\in[p,1)$ (in increasing order of $t_e$)}{
compute the canonical maximum-weight matching $M_{t_e}$ on the revealed graph induced by $N_{t_e}\cup R$\;
\If{$e\in\OPT_{t_e}$}{
let $f_e$ be the canonical partner of $e$ in $M_{t_e}$\;
\lIf{$f_e\notin P$}{add $e$ to $\ALG$ and add $f_e$ to $P$}\tcp*[r]{Otherwise discard $e$.}
}}
\Return{$\ALG$}\;
\end{algorithm}

The returned set is independent. Each accepted element is assigned to a distinct
right vertex in its revealed neighborhood.

\begin{theorem}
Algorithm~\ref{alg:single} is probability-competitive with guarantee $p\ln(1/p)$. The maximum over $p$ is $1/e$, attained at $p=1/e$.
\end{theorem}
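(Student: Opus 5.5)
Fix $v\in\OPT$ and use the labeling scheme of Lemma~\ref{lem:admissibility} tailored to $v$. By Observation~\ref{obs:union-vs-output} (with $J=1$, $\ALG$ is the pool), it suffices to bound $\Pr(v\in\ALG)$. By Proposition~\ref{prop:folklore}, $v$ is improving at time $t_v$ and chooses partner $f_v$. Hence, if $t_v\ge p$, the algorithm accepts $v$ exactly when $f_v\notin P$ at time $t_v$. The plan is to condition on $t_v=t$ and show that $f_v\notin P$ holds whenever no label-$1$ improving arrival occurs in $[p,t)$. The Poisson property then gives $\Pr(f_v\notin P\mid t_v=t)\ge e^{-\ln(t/p)}=p/t$. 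Integrating over $t\in[p,1]$, with $t_v$ uniform on $[0,1]$, yields
\[
\Pr(v\in\ALG)\ \ge\ \int_p^1 \frac{p}{t}\,dt = p\ln(1/p).
\]
Maximizing $p\ln(1/p)$ over $p\in(0,1)$ gives the maximizer $p=1/e$ and the value $1/e$.

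The key step is the deterministic claim. Suppose some element $e$ arriving at time $s\in[p,t_v)$ was accepted with $f_e=f_v$. Then $e$ is improving at time $s$, and its canonical partner in $M_s$ is $f_v$. By the definition of the scheme for $s<t_v$, label $1$ at time $s$ is the element of $\OPT_s$ matched to $f_v$ in $M_s$, which is $e$ itself. So $e$ is a label-$1$ improving arrival in $[p,t_v)$. Contrapositively, if $N_{\{1\}}[p,t_v)=0$, then $f_v$ was never added to $P$ before $t_v$, and $v$ is accepted. This is only a sufficient condition, which is enough for a lower bound. Dummy elements are handled uniformly, since they can only consume partners, and the claim already accounts for every element that adds $f_v$ to $P$.

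The main obstacle is the probabilistic step. We need the conditional form of Proposition~\ref{prop:poisson-label-general}: given $t_v=t$, the label-$1$ count on $[p,t)$ is $\Po(\ln(t/p))$. The paper states this conditional version, valid for intervals in $[\tau_0,t)$ since $\tau_0<p$. The subtlety is that the scheme's definition of label $1$ before $t_v$ depends on $t_v$ and $f_v$, and this dependence is exactly what the labeling-scheme framework permits. With $\Pr(N=0)=e^{-\ln(t/p)}=p/t$, the integration above completes the argument. The upper bound at the maximizer is not needed, since the statement only asserts the guarantee and its optimization over $p$.
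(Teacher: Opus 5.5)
Your proof is correct and is essentially the paper's argument. It uses the same $v$-tailored labeling scheme, the same deterministic observation that before $t_v$ only label-$1$ arrivals can insert $f_v$ into $P$, and the Poisson property of the label-$1$ process. The difference is bookkeeping: you condition on $t_v$ and integrate $p/t$, which needs the conditional form of Proposition~\ref{prop:poisson-label-general}. The paper instead uses that $v$ is the last label-$1$ arrival, together with the easy converse (the first label-$1$ arrival in $[p,t_v)$ finds $f_v$ free and takes it), to identify acceptance exactly with the event $N_{\{1\}}[p,1)=1$. This gives $\Pr(v\in\ALG)=p\ln(1/p)$ as an equality directly from the unconditional Poisson law.
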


\begin{proof}

Fix $v\in\OPT$ and use the labeling defined with respect to $v$. Then $v$ is the
last arriving element that carries label $1$. Indeed, for every $t\ge t_v$, the
labeling assigns label $1$ to $v$ inside $\OPT_t$, so no arrival after $t_v$ can
carry label $1$.

Before $v$ arrives, the label-$1$ arrivals are exactly the improving arrivals
whose canonical partner is $f_v$. Algorithm~\ref{alg:single} accepts $v$ if and
only if $t_v\ge p$ and no earlier label-$1$ arrival in $[p,t_v)$ has already used
the right vertex $f_v$. Therefore, $v$ is accepted if and only if $v$ is the
unique label-$1$ arrival in $[p,1)$, i.e., $N_{\{1\}}[p,1)=1$.

By Proposition~\ref{prop:poisson-label-general},
$\Pr(v\in\ALG)=\Pr\bigl(N_{\{1\}}[p,1)=1\bigr)=p\ln(1/p)$.
This is maximized at $p=1/e$, with value $1/e$.
\end{proof}

\subsection{Optimal algorithm for the \texorpdfstring{$J$}{J}-MSP on transversal matroids}
\label{sec:multiple-threshold}
For rank-one transversal matroids, the $J$-MSP coincides with the classical
$J$-choice secretary problem~\cite{GilbertMosteller}.

We recall the Gilbert--Mosteller policy for the $J$-choice secretary problem.
Fix $J$ and thresholds $0<p_1<\cdots<p_J<1$. The corresponding
multiple-threshold policy is defined below. Gilbert and
Mosteller~\cite{GilbertMosteller} derived the optimal limiting thresholds by
analyzing the finite $n$-candidate problem and then taking the limit as
$n\to\infty$.

Let $\gamma_J$ denote the success probability achieved by the optimal threshold
choice in the continuous-time limit. Chan, Chen and Jiang~\cite{ChanChen} showed that
$\gamma_J$ is the optimal asymptotic value: for every $\varepsilon>0$, no
algorithm can guarantee more than $\gamma_J+\varepsilon$ for all sufficiently
large $n$, while the Gilbert--Mosteller policy attains $\gamma_J$ asymptotically.

To describe the policy, we use the continuous-time coupling and arrival times $t\in[0,1]$ from
Section~\ref{sec:prelim}. Call an arrival a \emph{record} if its weight exceeds all previous weights; the
best candidate is the last record. The goal is to accept this best candidate while selecting at most $J$
candidates overall.

Given thresholds $(p_1,\dots,p_J)$, set $p_{J+1}:=1$. The policy rejects every record arriving before $p_1$.
Then, for each $j\in[J]$, over the interval $[p_j,p_{j+1})$ it accepts a record whenever it has accepted fewer
than $j$ records so far.

Given a realization of the arrival process (arrival times and weights), let $c_j$ be the number of record
arrivals that fall in the interval $[p_j,p_{j+1})$, and write $c=(c_1,\dots,c_J)\in\N^J$ for the resulting
count vector. Whether the policy (run on this realization) ends up accepting the last record depends only on
$c$. Let $\calB_J\subseteq\N^J$ be the set of count vectors $c$ for which the policy accepts the last record.
The next proposition gives the distribution of this count vector under the continuous-time coupling.

\begin{proposition}[Gilbert--Mosteller record decomposition, see~\cite{GilbertMosteller}]
\label{prop:GM}
Let $C=(C_1,\dots,C_J)$ count the record arrivals in the intervals
$[p_j,p_{j+1})$ under the continuous-time coupling. Then $C_1,\dots,C_J$ are
independent and
\[
\Pr(C=c)=\prod_{j=1}^{J}\frac{p_j}{p_{j+1}}\frac{(\ln(p_{j+1}/p_j))^{c_j}}{c_j!}
\qquad\text{for every } c\in\N^J.
\]
As a consequence,
\[
\Pr(\text{win on the }J\text{-choice secretary problem})
=\sum_{c\in\calB_J}\Pr(C=c).
\]
\end{proposition}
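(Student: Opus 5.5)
The plan is to treat the record process under the continuous-time coupling as an instance of the Poisson property in Proposition~\ref{prop:poisson-label-general}. For a rank-one matroid the rank is $r=1$, every arrival that is a record is exactly an improving arrival, and the only labeling scheme assigns label $1$ to $\OPT_t$. Alternatively, I would give the direct argument. Condition on the number $n$ of candidates and use the classical fact that the events $\{$the $i$-th arrival is a record$\}$ are independent with probability $1/i$. Under the coupling, record times form, in the limit (or exactly after the full-rank augmentation with dummy copies), a Poisson process on $[\tau_0,1]$ with intensity $dt/t$. I would cite Proposition~\ref{prop:poisson-label-general} with $S=\{1\}$ directly. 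It gives that the count in $[s,t)$ is $\Po(\ln(t/s))$ and that counts on disjoint intervals are independent.

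Next, I would apply this to the disjoint intervals $[p_j,p_{j+1})$, $j=1,\dots,J$, all contained in $[\tau_0,1]$ once $\tau_0<p_1$. This gives independence of $C_1,\dots,C_J$ and $C_j\sim\Po(\ln(p_{j+1}/p_j))$. The mass function is therefore
\[
\Pr(C_j=c_j)=e^{-\ln(p_{j+1}/p_j)}\frac{(\ln(p_{j+1}/p_j))^{c_j}}{c_j!}=\frac{p_j}{p_{j+1}}\frac{(\ln(p_{j+1}/p_j))^{c_j}}{c_j!}.
\]
Multiplying over $j$ yields the product formula.

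For the consequence, I would argue that the policy's decisions depend only on the sequence of record arrivals in $[p_1,1)$ and on which interval each falls in. Within an interval $[p_j,p_{j+1})$ the policy accepts records while its acceptance count is below $j$. Hence the full acceptance history, and in particular whether the last record (the best candidate) is accepted, is a deterministic function of $c$. If $c=0$, no record arrives after $p_1$ and the best arrived before $p_1$, so the policy loses, consistent with $0\notin\calB_J$. Therefore the win event equals $\{C\in\calB_J\}$, and summing the product formula over $\calB_J$ gives the win probability.

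The main obstacle is the precise sense of exactness. For finite $n$ the record process is not exactly Poisson, because the $n$ arrival times are fixed in number. I would handle this the way the paper does: work on the augmented instance with infinitely many infinitesimal dummy copies, for which Proposition~\ref{prop:poisson-label-general} gives exact Poisson counts after $\tau_0$. Equivalently, I would interpret the statement as the continuous-time limit that Gilbert and Mosteller obtained as $n\to\infty$. I would flag this interpretive step explicitly rather than rederive the Poisson property.
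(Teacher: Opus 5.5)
Your proposal is correct and follows the paper's proof. View the $J$-choice secretary problem as the rank-one case, identify records with improving arrivals, and apply Proposition~\ref{prop:poisson-label-general} with a single label to the disjoint intervals $[p_j,p_{j+1})$; this gives independent $\Po(\ln(p_{j+1}/p_j))$ counts, and multiplying them yields the formula. Your argument that the win event is a deterministic function of $c$, and your remark that exactness holds on the augmented dummy-copy instance rather than for a finite-$n$ record process, add justification the paper leaves implicit; the paper simply takes the final equality as the definition of $\calB_J$.
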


\begin{proof}
We derive this from Proposition~\ref{prop:poisson-label-general}. The $J$-choice
secretary problem is the $J$-MSP on a uniform matroid of rank one. In this case,
records are exactly the improving arrivals.

Apply Proposition~\ref{prop:poisson-label-general} with a single label. It implies that record arrivals
form a Poisson process, and that the record counts on disjoint time intervals are independent. Therefore,
for each $j\in[J]$, the number of records in $[p_j,p_{j+1})$ is Poisson with mean $\ln(p_{j+1}/p_j)$; that is,
$C_j\sim\Po(\ln(p_{j+1}/p_j))$. Hence
\[
\Pr(C_j=c_j)=e^{-\ln(p_{j+1}/p_j)}
\frac{(\ln(p_{j+1}/p_j))^{c_j}}{c_j!}
=\frac{p_j}{p_{j+1}}\frac{(\ln(p_{j+1}/p_j))^{c_j}}{c_j!}.
\]
Because the intervals are disjoint, the variables $C_j$ are independent. The
joint formula follows by multiplying these probabilities. The final equality is
just the definition of $\calB_J$.
\end{proof}

Let $(p_1^*,\dots,p_J^*)$ be a threshold vector maximizing the win probability in
Proposition~\ref{prop:GM}, and define
\[
\gamma_J := \Pr(\text{win}) = \Pr(C\in\calB_J)
\]
under this choice. We will use $(p_1^*,\dots,p_J^*)$ later when instantiating our transversal-matroid algorithm.

We now return to transversal matroids. Fix a threshold vector $(p_1,\dots,p_J)$ and set $p_{J+1}:=1$.
Think of each right vertex $f\in R$ as having $J$ slots, numbered $1,\dots,J$. Slot $i$ of $f$ can be filled by
at most one accepted element whose assigned partner is $f$. Equivalently, for each $i\in[J]$ we maintain a set
$F_i$ consisting of all elements placed into slot $i$ of their partner; by construction, $F_i$ is a matching
into $R$.

At arrival time $t_e$, we first test whether $e$ is improving, i.e., whether $e\in\OPT_{t_e}$. If not, we reject
it. Otherwise, we compute its canonical partner $f_e$ in $M_{t_e}$ and let $j$ be such that $t_e\in[p_j,p_{j+1})$.
View the $J$ slots of $f_e$ as being controlled by a Gilbert--Mosteller-type rule local to $f_e$: during
$[p_j,p_{j+1})$, $f_e$ may accept an additional assignment only if it has used fewer than $j$ of its slots so far.
Equivalently, we accept $e$ iff $f_e$ has a free slot among $1,\dots,j$, placing $e$ into the smallest such slot.
(The set $P_i$ in Algorithm~\ref{alg:optJ} records which
right vertices have already used slot $i$.)

\begin{algorithm}[ht]
\caption{Threshold algorithm for transversal matroids}
\label{alg:optJ}
\DontPrintSemicolon
\KwData{A transversal matroid $\calM=(N,\calI)$ with right vertex set $R$, an integer $J\ge 1$, thresholds
$0<p_1<\cdots<p_J<1$ (with $p_{J+1} := 1$), and access to the arrival time $t_e$ and neighborhood
$\Gamma(e)\subseteq R$ of each element $e$ upon its arrival.}
\KwResult{A feasible output for the $J$-MSP.}
Initialize $F_1,\dots,F_J\leftarrow\emptyset$ and partner sets $P_1,\dots,P_J\leftarrow\emptyset$\;
For each arriving element $e$ (at time $t_e\in[0,1]$), in increasing order of $t_e$:\;
\Indp
\lIf{$t_e<p_1$}{discard $e$ and continue}
let $j$ be the index such that $t_e\in[p_j,p_{j+1})$\;
compute the canonical maximum-weight matching $M_{t_e}$ on the revealed graph induced by $N_{t_e}\cup R$\;
\If{$e\in\OPT_{t_e}$}{
let $f_e$ be the canonical partner of $e$ in $M_{t_e}$\;
\If{there exists $i\in\{1,\dots,j\}$ such that $f_e\notin P_i$}{
    let $i^*$ be the smallest index in $\{1,\dots,j\}$ such that $f_e\notin P_{i^*}$\;
    add $e$ to $F_{i^*}$ and add $f_e$ to $P_{i^*}$\;
}
}
\tcp*[f]{Otherwise discard $e$.}

\Indm
\Return{$\OPT(F_1\cup\cdots\cup F_J)$}\;
\end{algorithm}
The algorithm is feasible. Whenever we add an element $e$ to some $F_i$, we assign it to its canonical
partner $f_e\in\Gamma(e)$ revealed at its arrival. The set $P_i$ ensures that no two elements of $F_i$ are assigned to
the same right vertex. Thus the assigned partners form a matching of $F_i$ into $R$, and hence $F_i\in\calI$.
The returned set is independent because it is $\OPT(F_1\cup\cdots\cup F_J)$ in $\calM$.

We now analyze Algorithm~\ref{alg:optJ}. Fix an element $v\in\OPT$. Under the
labeling defined with respect to $v$, the label-$1$ arrivals are exactly the
improving arrivals that compete with $v$ for the fixed right vertex $f_v$.
Our first step is to compare this label-$1$ process with the record process in
the $J$-choice secretary problem.

\begin{lemma}
\label{lem:transfer-distribution}
Fix $v\in\OPT$ and use the labeling defined with respect to $v$. For each $j\in[J]$, let
\[
C^{\Trans}_j := N_{\{1\}}[p_j,p_{j+1})
\]
be the number of label-$1$ arrivals in $[p_j,p_{j+1})$, and set $C^{\Trans}=(C^{\Trans}_1,\dots,C^{\Trans}_J)$.
Then the coordinates of $C^{\Trans}$ are independent and, for every $c=(c_1,\dots,c_J)\in\N^J$,
\[
\Pr(C^{\Trans}=c)=\prod_{j=1}^{J}\frac{p_j}{p_{j+1}}\frac{(\ln(p_{j+1}/p_j))^{c_j}}{c_j!}.
\]
In particular, $C^{\Trans}$ has the same distribution as the record-count vector $C$ in Proposition~\ref{prop:GM}.
\end{lemma}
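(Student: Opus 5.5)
The plan is to apply Proposition~\ref{prop:poisson-label-general} directly to the labeling scheme defined with respect to $v$, which is a valid labeling scheme by Lemma~\ref{lem:admissibility}. We first check that all intervals in question lie inside $[\tau_0,1]$. This holds because we choose $\tau_0<p_1$, so each $[p_j,p_{j+1})$ with $j\in[J]$ is contained in $[\tau_0,1]$. Taking $S=\{1\}$ and $[s,t)=[p_j,p_{j+1})$, the proposition gives
\[
C^{\Trans}_j=N_{\{1\}}[p_j,p_{j+1})\sim\Po\bigl(\ln(p_{j+1}/p_j)\bigr),
\]
so that
\[
\Pr(C^{\Trans}_j=c_j)=e^{-\ln(p_{j+1}/p_j)}\frac{(\ln(p_{j+1}/p_j))^{c_j}}{c_j!}=\frac{p_j}{p_{j+1}}\frac{(\ln(p_{j+1}/p_j))^{c_j}}{c_j!}.
\]

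Independence comes from the second clause of Proposition~\ref{prop:poisson-label-general}. The intervals $[p_1,p_2),\dots,[p_J,p_{J+1})$ are pairwise disjoint subintervals of $[\tau_0,1]$, so the counts $N_{\{1\}}$ on these intervals are mutually independent. The proposition states this for disjoint intervals and a single label set; if one wants it for more than two intervals at once, we read it as the standard independent-increments property of the label-$1$ counting process. The joint law is then the product of the marginals, which is exactly the displayed formula. Comparing with Proposition~\ref{prop:GM}, whose vector $C$ has the same product distribution with the same parameters, shows that $C^{\Trans}$ and $C$ are equal in distribution.

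The only delicate point is that the label-$1$ process is defined in terms of $v$, $t_v$ and $f_v$, so it depends on future information. One might worry that this conditioning spoils the Poisson structure. This is exactly what Lemma~\ref{lem:admissibility} settles: the labeling scheme is allowed to depend on the target element and on future improving arrivals. We therefore invoke the proposition unconditionally, without conditioning on $t_v$, and no further argument is needed. We also do not need to treat separately the label-$1$ arrivals after $t_v$ (which can only be $v$ itself): the proposition counts all label-$1$ improving arrivals in the interval regardless of their identity.
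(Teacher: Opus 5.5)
Your proposal is correct and follows the paper's proof essentially verbatim. It applies Proposition~\ref{prop:poisson-label-general} with $S=\{1\}$ on each $[p_j,p_{j+1})$ (valid because $\tau_0<p_1$), uses independence of counts on disjoint intervals, and multiplies the Poisson marginals $e^{-\ln(p_{j+1}/p_j)}(\ln(p_{j+1}/p_j))^{c_j}/c_j!$; your appeal to Lemma~\ref{lem:admissibility} for the dependence on $v$, $t_v$, $f_v$ matches how the paper justifies using the scheme.
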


\begin{proof}
By Proposition~\ref{prop:poisson-label-general} with $S=\{1\}$, the number of label-$1$ arrivals in
$[p_j,p_{j+1})$ is Poisson with mean $\ln(p_{j+1}/p_j)$, i.e.,
$C^{\Trans}_j\sim\Po(\ln(p_{j+1}/p_j))$. Moreover, counts on disjoint intervals are independent, so the variables
$C^{\Trans}_1,\dots,C^{\Trans}_J$ are independent. Therefore,
\[
\Pr(C^{\Trans}_j=c_j)=e^{-\ln(p_{j+1}/p_j)}\frac{(\ln(p_{j+1}/p_j))^{c_j}}{c_j!}
=\frac{p_j}{p_{j+1}}\frac{(\ln(p_{j+1}/p_j))^{c_j}}{c_j!}.
\]
Multiplying over $j$ yields the stated joint distribution.
\end{proof}

We next compare decisions in the two settings: the $J$-MSP on a transversal matroid (Algorithm~\ref{alg:optJ})
and the classical $J$-choice secretary problem (the Gilbert--Mosteller policy). Recall that $\calB_J\subseteq\N^J$
is the set of count vectors $c$ for which the Gilbert--Mosteller policy accepts the final record; note that
$\calB_J$ depends only on $J$ and not on the particular values of $(p_1,\dots,p_J)$.
For a fixed threshold vector $(p_1,\dots,p_J)$, we show that, under a count vector $c$, Algorithm~\ref{alg:optJ}
places $v$ into $F_1\cup\cdots\cup F_J$ if and only if the Gilbert--Mosteller policy accepts the final record under
the same count vector $c$.

\begin{lemma}
\label{lem:transfer-decision}
Fix $v\in\OPT$ and condition on $C^{\Trans}=c$. Then Algorithm~\ref{alg:optJ}
places $v$ into some set $F_i$ if and only if $c\in\calB_J$.
\end{lemma}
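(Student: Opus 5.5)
The plan is to show that, for the fixed right vertex $f_v$, the slot-occupancy dynamics of $f_v$ under Algorithm~\ref{alg:optJ} is driven exactly by the label-$1$ arrivals in $[p_1,1)$. We then identify these dynamics with the Gilbert--Mosteller policy run on a record sequence with the same count vector $c$.

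First, we determine which arrivals in $[p_1,1)$ ever touch the slots of $f_v$. An arrival $e$ at time $s$ changes the state of $f_v$ only if $e$ is improving and $f_e=f_v$. For $s<t_v$, the labeling defined with respect to $v$ makes the element of $\OPT_s$ matched to $f_v$ in $M_s$ carry label $1$. Hence an improving arrival at time $s<t_v$ has partner $f_v$ if and only if it is a label-$1$ arrival. At time $t_v$, the arrival $v$ itself has label $1$ and partner $f_v$ by definition. Arrivals after $t_v$ are irrelevant for whether $v$ gets placed, and by the argument in the proof for the $J=1$ case, $v$ is the last label-$1$ arrival. Therefore the label-$1$ arrivals in $[p_1,1)$ are exactly the improving arrivals in $[p_1,t_v]$ whose partner is $f_v$, and the last of them is $v$ (when $t_v\ge p_1$). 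Their numbers in the intervals $[p_j,p_{j+1})$ are the coordinates of $c$. The event $t_v<p_1$ corresponds to $c=0$, which is not in $\calB_J$ since the policy accepts no last record. Note that whether $v$ is placed also depends on slots $f_v$ filled by these earlier competitors; arrivals with other partners never touch $P_i\ni f_v$ membership.

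Second, we compare the two decision rules step by step. In Algorithm~\ref{alg:optJ}, a label-$1$ arrival in $[p_j,p_{j+1})$ is accepted iff $f_v$ has a free slot among $1,\dots,j$. Slots are always filled smallest-first, so the used slots of $f_v$ always form a prefix $\{1,\dots,u\}$. Thus "a free slot among $1,\dots,j$" is equivalent to "$u<j$", where $u$ is the number of previously accepted label-$1$ arrivals. This is precisely the Gilbert--Mosteller rule: in $[p_j,p_{j+1})$, accept a record iff fewer than $j$ records have been accepted. By induction over the label-$1$ arrivals in chronological order, the two processes make identical accept/reject decisions. Each decision depends only on which interval the arrival falls in and on the prior count, hence only on $c$. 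In particular, $v$, the last label-$1$ arrival, is accepted iff the final record is accepted under count vector $c$, i.e. iff $c\in\calB_J$.

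The main obstacle is the first step: one must justify that the partner of each pre-$t_v$ improving arrival equals $f_v$ exactly when its label is $1$. The subtle direction is that the arriving element itself, not merely some element of $\OPT_s$, is the one matched to $f_v$; this holds because $f_e$ is defined from the same canonical matching $M_s$. One must also confirm that no other mechanism, such as arrivals with different partners, alters $f_v$'s slots.
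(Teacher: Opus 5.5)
Your proposal is correct and follows essentially the same argument as the paper. You identify the pre-$t_v$ label-$1$ arrivals with the competitors for $f_v$, handle $c=0$ (equivalently $t_v<p_1$) separately, and show that the slot dynamics at $f_v$ coincide with the Gilbert--Mosteller acceptance dynamics under the same count vector. The only difference is granularity: you induct arrival by arrival using the prefix invariant on used slots, whereas the paper compares the interval-level recurrences $A_j=\min\{j,A_{j-1}+c_j\}$ and $R_j=\min\{j,R_{j-1}+c_j\}$ and then checks the condition $A_{j^*-1}+c_{j^*}\le j^*$ in the last nonempty interval.
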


\begin{proof}
For every improving time $s<t_v$, the label-$1$ arrival is exactly the improving
element whose canonical partner in $M_s$ is $f_v$. Hence, before $v$ arrives, the
only arrivals that can occupy slots of $f_v$ are the label-$1$ arrivals. Arrivals
after $t_v$ do not affect whether $v$ is placed into one of the sets
$F_1,\dots,F_J$, because the algorithm never removes an accepted element from a
track.

If $c_1=\cdots=c_J=0$, then there are no label-$1$ arrivals after time $p_1$.
Since $v$ is the last label-$1$ arrival, $v$ arrives before $p_1$. The final
record in the $J$-choice secretary problem also occurs before $p_1$, so both
policies reject.

Assume now that $c\ne 0$, and let $j^*:=\max\{j:c_j>0\}$. Since $v$ is the last
label-$1$ arrival, it lies in $[p_{j^*},p_{j^*+1})$. For $j<j^*$, let $A_j$ be
the number of slots of $f_v$ used by Algorithm~\ref{alg:optJ} by time
$p_{j+1}$, with $A_0=0$. During $[p_j,p_{j+1})$, only slots $1,\dots,j$ of
$f_v$ are available. Since all label-$1$ arrivals in intervals $j<j^*$ occur
before $v$, we have
\[
    A_j=\min\{j,A_{j-1}+c_j\}
    \qquad\text{for } j<j^* .
\]

Now consider the $J$-choice secretary problem under the same count vector $c$.
Let $R_j$ be the number of records accepted by the Gilbert--Mosteller policy by
time $p_{j+1}$, with $R_0=0$. During $[p_j,p_{j+1})$, the policy accepts records
until it has accepted $j$ records in total. Thus
\[
    R_j=\min\{j,R_{j-1}+c_j\}.
\]
Since $R$ and $A$ satisfy the same recurrence up to $j$, it follows that $A_j=R_j$ for every $j<j^*$.

During interval $[p_{j^*},p_{j^*+1})$, the element $v$ is the last label-$1$
arrival. Algorithm~\ref{alg:optJ} places $v$ into $F_1\cup\cdots\cup F_J$ if
and only if the $c_{j^*}$ label-$1$ arrivals in this interval, including $v$,
fit into the $j^*$ available slots of $f_v$. Equivalently,
\[
    A_{j^*-1}+c_{j^*}\le j^*.
\]
In the $J$-choice secretary problem, the final record also lies in
$[p_{j^*},p_{j^*+1})$, and the Gilbert--Mosteller policy accepts it if and only
if
\[
    R_{j^*-1}+c_{j^*}\le j^*.
\]
Since $A_{j^*-1}=R_{j^*-1}$, the two conditions are equivalent. Hence
Algorithm~\ref{alg:optJ} places $v$ into $F_1\cup\cdots\cup F_J$ if and only if
$c\in\calB_J$.
\end{proof}

We now combine Lemmas~\ref{lem:transfer-distribution} and~\ref{lem:transfer-decision} to obtain our main transfer
result: for any fixed threshold vector $(p_1,\dots,p_J)$, Algorithm~\ref{alg:optJ} matches the success probability
of the corresponding $J$-choice secretary threshold policy under the same continuous-time coupling.

\begin{theorem}
\label{thm:main-transfer}
Fix thresholds $(p_1,\dots,p_J)$ and run Algorithm~\ref{alg:optJ} with the continuous-time coupling from
Section~\ref{sec:prelim}. Then, for every $v\in\OPT$,
\[
\Pr(v\in\ALG)
= \Pr\bigl( C\in\calB_J \bigr)
= \Pr(\text{the corresponding $J$-choice threshold policy wins}).
\]
\end{theorem}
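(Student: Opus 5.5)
The plan is to combine Observation~\ref{obs:union-vs-output} with Lemmas~\ref{lem:transfer-distribution} and~\ref{lem:transfer-decision}. First, fix $v\in\OPT$. The output of Algorithm~\ref{alg:optJ} is $\ALG=\OPT(U)$ with $U=F_1\cup\cdots\cup F_J$. By Observation~\ref{obs:union-vs-output}, the event $\{v\in\ALG\}$ coincides with the event $\{v\in U\}$, i.e., that $v$ is placed into some track $F_i$. So it suffices to compute $\Pr(v\in U)$. This is done on the augmented instance (full-rank convention and matching assumption, with $\tau_0<p_1$). Dummy elements have smaller weights and do not change which original elements lie in $\OPT$, so the event for original $v$ is unaffected by projecting back.

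Second, use the labeling scheme tailored to $v$, which Lemma~\ref{lem:admissibility} certifies is admissible. Condition on the count vector $C^{\Trans}$ of label-$1$ arrivals in the intervals $[p_j,p_{j+1})$. Lemma~\ref{lem:transfer-decision} states that, on each realization, $v\in U$ holds exactly when $C^{\Trans}\in\calB_J$. Hence $\Pr(v\in U)=\Pr(C^{\Trans}\in\calB_J)$. By Lemma~\ref{lem:transfer-distribution}, $C^{\Trans}$ has the same law as the record-count vector $C$ of Proposition~\ref{prop:GM}. Summing over $c\in\calB_J$ gives
\[
\Pr(v\in\ALG)=\sum_{c\in\calB_J}\prod_{j=1}^J\frac{p_j}{p_{j+1}}\frac{(\ln(p_{j+1}/p_j))^{c_j}}{c_j!}=\Pr(C\in\calB_J).
\]
The second equality of the theorem is then the final identity in Proposition~\ref{prop:GM}.

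The main obstacle is the determinism claim inside Lemma~\ref{lem:transfer-decision}: the decision about $v$ must be a function of $c$ alone. This needs two facts. First, $v$ is the last label-$1$ arrival. Second, the slots of $f_v$ are consumed before $t_v$ only by label-$1$ arrivals. Both are already established, so I would simply cite that lemma.

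A second subtlety is the degenerate case where $v$ arrives before $p_1$ (or before $\tau_0$). Here $c=0$, so $c\notin\calB_J$, and $v$ is rejected; the two sides agree. Arrivals before $\tau_0$ are harmless for the same reason, since $N_{\{1\}}$ on $[p_1,1)$ is then zero. I would also note that no extra conditioning on $t_v$ is needed. The event is expressed entirely through the counts on $[p_1,1)\subseteq[\tau_0,1]$, where Proposition~\ref{prop:poisson-label-general} applies directly.
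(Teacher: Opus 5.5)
Your proposal is correct and takes essentially the same approach as the paper. You use Observation~\ref{obs:union-vs-output} to reduce $v\in\ALG$ to $v\in F_1\cup\cdots\cup F_J$, then Lemma~\ref{lem:transfer-decision} to express that event as $C^{\Trans}\in\calB_J$, and Lemma~\ref{lem:transfer-distribution} to replace $C^{\Trans}$ by $C$ before summing over $\calB_J$; your remarks on the case $c=0$ and on the augmented instance match how the paper handles these points inside Lemma~\ref{lem:transfer-decision} and the full-rank convention.
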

\begin{proof}
By Observation~\ref{obs:union-vs-output}, $v\in\ALG$ if and only if Algorithm~\ref{alg:optJ} places $v$ into
$F_1\cup\cdots\cup F_J$.
Fix $v\in\OPT$ and consider the count vector $C^{\Trans}$. By Lemma~\ref{lem:transfer-distribution},
$C^{\Trans}\stackrel{d}{=}C$. By Lemma~\ref{lem:transfer-decision}, Algorithm~\ref{alg:optJ} places $v$ into the
union if and only if $C^{\Trans}\in\calB_J$. Therefore
\begin{align*}
\Pr(v\in\ALG)
&= \sum_{c\in\calB_J}\Pr(C^{\Trans}=c)
= \sum_{c\in\calB_J}\Pr(C=c) \\
&= \Pr(\text{the $J$-choice threshold policy accepts the best candidate}).\qedhere
\end{align*}
\end{proof}

This term-by-term equivalence implies that, under the continuous-time coupling,
the optimal probability-competitive ratio for the \JMSP on transversal matroids
matches the optimal win probability of the $J$-choice secretary problem. Recall
that we denote this optimal win probability by $\gamma_J$.

\begin{corollary}
\label{cor:optimal}
Let $\gamma_J$ be the optimal win probability of the $J$-choice secretary
problem in the continuous-time model. Under the continuous-time coupling, the
optimal uniform probability guarantee for the $J$-MSP on transversal matroids is
$\gamma_J$. Moreover, Algorithm~\ref{alg:optJ}, run with an optimal $J$-choice
threshold vector, is $\gamma_J$-probability-competitive.
\end{corollary}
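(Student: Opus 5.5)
The corollary has two parts: an achievability statement and a matching upper bound. I would prove them separately.

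\emph{Achievability.} Take an optimal threshold vector $(p_1^*,\dots,p_J^*)$ from Proposition~\ref{prop:GM}. Choose $\tau_0<p_1^*$, so that the full-rank convention and the labeling scheme of Lemma~\ref{lem:admissibility} apply to every interval used in the analysis. Then apply Theorem~\ref{thm:main-transfer}. For every transversal instance and every $v\in\OPT$, it gives $\Pr(v\in\ALG)=\Pr(C\in\calB_J)=\gamma_J$. Since this holds uniformly over instances and over elements of $\OPT$, we get $\rho_{\mathfrak C}(\text{Alg.~\ref{alg:optJ}})\ge\gamma_J$. One point needs checking: the guarantee is stated on the augmented instance (with dummy universal left vertices and parallel copies), while we need it on the original one. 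Dummies have weight below every original element, so they do not change which original elements lie in $\OPT$. Projecting $\ALG$ by discarding dummies therefore does not affect the event $v\in\ALG$ for an original $v\in\OPT$. Also, $\OPT(F_1\cup\dots\cup F_J)$ restricted to original elements is exactly what the returned set contains, by Observation~\ref{obs:union-vs-output}.

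\emph{Upper bound.} Consider transversal instances with $|R|=1$ in which every left vertex is adjacent to the single right vertex. This matroid is the uniform matroid of rank one. Its $J$-fold union is the uniform matroid of rank $J$, so the pool is any set of at most $J$ elements. $\OPT$ is the single heaviest element, and $\OPT(\AUX)$ contains it exactly when it was kept. Hence on this subfamily, any $J$-MSP algorithm is precisely a $J$-choice secretary algorithm, and its probability-competitive ratio equals its win probability. By the optimality result of Chan, Chen and Jiang recalled before Proposition~\ref{prop:GM}, for every $\varepsilon>0$ and every algorithm $A$ there exists $n$ for which the win probability is at most $\gamma_J+\varepsilon$. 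Taking the infimum over the class gives $\rho_{\mathfrak C}(A)\le\gamma_J$ for every $A$, so $\sup_A\rho_{\mathfrak C}(A)\le\gamma_J$.

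\emph{Main obstacle.} The delicate step is aligning the two notions of optimality. The value $\gamma_J$ is defined as a continuous-time limit, and the paper's benchmark $\rho_{\mathfrak C}$ is an infimum over finite instances. Achievability is exact for every finite $n$ because of the continuous-time coupling, which is why the statement carries the qualifier ``under the continuous-time coupling''. For the converse, I would argue as follows. The infimum over the class already includes rank-one instances of arbitrarily large $n$. On those instances, the Chan--Chen--Jiang bound shows that no algorithm (even one that knows $n$) can guarantee more than $\gamma_J+\varepsilon$. Letting $\varepsilon\to0$ yields equality.
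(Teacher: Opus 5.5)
Your proposal is correct and follows the same route as the paper. The lower bound comes from Theorem~\ref{thm:main-transfer} applied with an optimal Gilbert--Mosteller threshold vector, and the upper bound comes from restricting to complete rank-one transversal instances, on which the $J$-MSP is exactly the $J$-choice secretary problem. Your extra steps (discarding the dummy elements, and the $\varepsilon$-argument via the Chan--Chen--Jiang bound to pass from finite $n$ to $\gamma_J$) only spell out details that the paper's short proof leaves implicit.
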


\begin{proof}
The lower bound follows from Theorem~\ref{thm:main-transfer} with an optimal
threshold vector for the $J$-choice secretary problem.

For the upper bound, restrict to complete rank-one transversal matroids, where
one right vertex is adjacent to every left vertex. On these instances, the
continuous-time coupling gives exactly the classical $J$-choice secretary
problem. Hence no algorithm can have a uniform probability guarantee larger than
$\gamma_J$ over all transversal matroids.
\end{proof}

For $J=1$, this recovers the known transfer from Dynkin's threshold rule to the
algorithm of Kesselheim, Radke, T\"onnis, and V\"ocking for transversal matroids. For every fixed $J$, the
worst case over transversal matroids is already witnessed by complete rank-one
instances.

For completeness, we describe the first values of the Gilbert--Mosteller
thresholds and recall how Chan, Chen and Jiang express their competitive ratios. Fix
$J$, and let $p^*(J)=(p^*_1(J),\ldots,p^*_J(J))$, with
$0<p^*_1(J)<\cdots<p^*_J(J)<1$, be the threshold vector in the order used in
this paper. We also use the reversed vector
$\widehat p(J)=(\widehat p_1(J),\ldots,\widehat p_J(J))
:=(p^*_J(J),\ldots,p^*_1(J))$.

Chan, Chen and Jiang~\cite[Theorem~1.2]{ChanChen} give a procedure that computes an
increasing sequence of rational numbers $\theta_1,\theta_2,\ldots$. In the
reversed order, $\widehat p_j(J)=e^{-\theta_j}$ for $j=1,\ldots,J$. Thus
$\widehat p(J)=(e^{-\theta_1},\ldots,e^{-\theta_J})$, while
$p^*(J)=(e^{-\theta_J},\ldots,e^{-\theta_1})$ in the order used in this paper.
Therefore, $\widehat p(J)$ is the prefix of $\widehat p(J+1)$, and $p^*(J)$ is
the suffix of $p^*(J+1)$.

The sequence starts with $\theta_1=1$, $\theta_2=3/2$, and
$\theta_3=47/24$. Hence
$p^*(1)=(e^{-1})$, $p^*(2)=(e^{-3/2},e^{-1})$, and
$p^*(3)=(e^{-47/24},e^{-3/2},e^{-1})\approx(0.1411,0.2231,0.3679)$.

Let $\gamma_J$ be the success probability obtained by the Gilbert--Mosteller
policy with $J$ choices. For the objective of selecting the best candidate,
Theorem~1.2 of Chan, Chen and Jiang gives
\[
    \gamma_J=\sum_{j=1}^J \widehat p_j(J)=\sum_{j=1}^J p^*_j(J).
\]
The numerical values in Table~\ref{tab:chan-chen} are taken from their
computation.

\begin{table}[ht]
\centering
\begin{tabular}{ccc}
\toprule
$J$ & Threshold exponent $\theta_J$ & Success probability $\gamma_J$ \\
\midrule
1 & $1$ & 0.367879 \\
2 & $3/2$ & 0.591010 \\
3 & $47/24$ & 0.732103 \\
4 & $\approx 2.3967$ & 0.823121 \\
5 & $\approx 2.8230$ & 0.882550 \\
6 & $\approx 3.2410$ & 0.921675 \\
7 & $\approx 3.6530$ & 0.947588 \\
8 & $\approx 4.0603$ & 0.964831 \\
\bottomrule
\end{tabular}
\caption{Gilbert--Mosteller thresholds for the $J$-choice secretary
problem~\cite{ChanChen}. Under our convention,
$p^*(J)=(e^{-\theta_J},e^{-\theta_{J-1}},\ldots,e^{-\theta_1})$. The table
lists the exponent $\theta_J$ of the new threshold added when passing from
$J-1$ choices to $J$ choices. The success probability is
$\gamma_J=\sum_{j=1}^J p^*_j(J)$.}
\label{tab:chan-chen}
\end{table}

\begin{remark}
Table~\ref{tab:chan-chen} lists the threshold exponents and the values
$\gamma_J$ for small $J$. In Section~\ref{sec:routing}, we study a simpler
single-threshold routing algorithm for transversal matroids and show that it
achieves a probability-competitive ratio of at least
$1 - O\bigl(J^{-1/(2e)} e^{-J/e}\bigr)$.
\end{remark}

\section{Single-threshold routing for capacitated transversal matroids}
\label{sec:routing}

We now consider transversal matroids with capacities on the right vertices.
The ground set stays the same, but a right vertex may be used more than once.
We start by defining the independence system and then show that it is a matroid.

The algorithm analyzed in this section uses one sampling threshold. After the
threshold, it routes each improving element to the first track whose local
capacity at the canonical partner is not full. This rule is less refined than
the optimal multi-threshold algorithm from
Section~\ref{sec:multiple-threshold}, but its local routing decisions can be
analyzed in the presence of right-vertex capacities and a global bound on the
candidate pool.

\subsection{Capacitated transversal matroids}

Let $G=(N\cup R,E)$ be a bipartite graph, where $N$ is the ground set. We fix an
integer capacity $b\ge 1$ and give every right vertex $f\in R$ capacity $b$. Let
$q:=|R|$. As in the uncapacitated case, we assume that $R$ and $b$
are known from the start, and when an element $v\in N$ arrives the algorithm
learns its neighborhood $\Gamma(v)\subseteq R$.

A set $X\subseteq N$ is independent if we can assign each element of $X$ to an
adjacent right vertex, so that each element is assigned once and each right
vertex is assigned at most $b$ elements. Equivalently, the induced graph on
$X\cup R$ has a $(1,b)$-matching that covers $X$.

This independence system is a matroid. One way to see this is to replace each
right vertex $f\in R$ by $b$ copies $f^1,\dots,f^b$, each with the same
neighbors as $f$. Then $X$ has a covering $(1,b)$-matching in the original graph
if and only if $X$ has a matching in the copied graph. Thus the independent sets
are exactly those of a (standard) transversal matroid.

We use the full-rank convention from Section~\ref{sec:prelim}. Accordingly, for the
analysis, we may analyze the case
in which the capacitated transversal matroid has rank $r=bq$. Thus, for every
$t\ge \tau_0$, the set $N_t$ has rank $r$, and $\OPT_t$ has size $bq$.
Equivalently, $\OPT_t$ has size $bq$ for every $t\ge \tau_0$, and the
canonical maximum-weight $(1,b)$-matching on $N_t$ saturates the total
right-side capacity.

We analyze the algorithm on this augmented instance. Dummy elements have
negligible weights and are processed as ordinary elements: if a dummy element
is improving after the sampling threshold, the algorithm may accept it, route it
to a track, and count it against the global capacity. At the end we discard all
dummy elements and keep only the original elements. All probability guarantees
below concern original elements of $\OPT$.

With this convention, the global capacity bound is $|\AUX|\le \kappa r=\kappa bq$.
\subsection{The \texorpdfstring{$J$}{J}-track routing algorithm}

Throughout this section, a canonical maximum-weight $(1,b)$-matching means a
maximum-weight $(1,b)$-matching with ties broken by a fixed deterministic rule.
Algorithm~\ref{alg:routingJ-bcap} gives pseudocode for the routing algorithm.

\begin{algorithm}[ht]
\caption{Routing algorithm for capacitated transversal matroids}
\label{alg:routingJ-bcap}
\DontPrintSemicolon
\KwData{A $b$-capacitated transversal matroid with right vertex set $R$, a
global capacity parameter $\kappa>0$, an integer $J\ge 1$, a threshold
$p\in(0,1)$, and access to the arrival time $t_e$ and neighborhood
$\Gamma(e)\subseteq R$ of each element $e$ upon its arrival.}
\KwResult{A feasible output for the $(J,\kappa)$-MSP.}

Initialize $F_1,\dots,F_J\leftarrow\emptyset$ and usage counters
$U_1,\dots,U_J$, with $U_i(f)=0$ for all $i\in[J]$ and $f\in R$\;
Initialize $\AUX\leftarrow\emptyset$ and $A\leftarrow 0$\;
Set $K\leftarrow\infty$ if $\kappa=\infty$, and
$K\leftarrow\lfloor\kappa r\rfloor$ otherwise\;
Reject all arrivals before time $p$\;

For each arrival time $t\in[p,1)$ in increasing order:\;
\Indp
let $v_t$ be the arriving element at time $t$\;
compute the canonical maximum-weight $(1,b)$-matching $M_t$ on $N_t$\;
\If{$v_t\in\OPT_t$ and $A<K$}{
    let $f_t$ be the right vertex matched to $v_t$ in $M_t$\;
    \If{there exists $i\in\{1,\dots,J\}$ such that $U_i(f_t)<b$}{
        let $i^*$ be the smallest index in $\{1,\dots,J\}$ such that $U_{i^*}(f_t)<b$\;
        add $v_t$ to $F_{i^*}$ and to $\AUX$\;
        increment $U_{i^*}(f_t)$\;
        set $A\leftarrow A+1$\;
    }
}
\Indm

\Return{$\OPT(\AUX)$}
\end{algorithm}

Algorithm~\ref{alg:routingJ-bcap} is a single-threshold routing algorithm. It
maintains $J$ independent sets $F_1,\dots,F_J$ of the $b$-capacitated transversal
matroid and their union
\[
    \AUX := F_1\cup\cdots\cup F_J .
\]
Each right vertex $f\in R$ has capacity $b$ in each set $F_i$, so it has $bJ$
available slots across the $J$ sets.

At each time $t\ge p$, the algorithm computes the canonical maximum-weight
$(1,b)$-matching on $N_t$, with ties broken by a fixed deterministic rule. If
the arriving element $v_t$ is improving, let $f_t$ be its matched right vertex.
The algorithm routes $v_t$ to the first set $F_i$ in which $f_t$ has been used
fewer than $b$ times. It also enforces the global capacity bound through $K$: if
$A=K$, then $v_t$ is rejected even if a local slot is available. When
$\kappa=\infty$, we have $K=\infty$, and this test never rejects an element.

The algorithm is correct. Whenever an element is added to $F_i$, it is assigned
to its canonical partner $f_t\in\Gamma(v_t)$. The counter $U_i(f)$ ensures that
at most $b$ elements of $F_i$ are assigned to each right vertex $f$. Hence each
$F_i$ is independent in the $b$-capacitated transversal matroid. Therefore
$\AUX=F_1\cup\cdots\cup F_J$ is independent in the $J$-fold union matroid. The
counter $A$ enforces $|\AUX|\le K$. Thus, when $\kappa<\infty$, the pool
satisfies $|\AUX|\le\lfloor\kappa r\rfloor$; when $\kappa=\infty$, no global
capacity bound is imposed. In both cases, the returned set $\OPT(\AUX)$ is
independent in the original matroid.

For the probability-competitive analysis, it is enough to study whether a fixed
original element $v\in\OPT$ enters $\AUX$. Indeed, by
Observation~\ref{obs:union-vs-output}, this is equivalent to
$v\in\OPT(\AUX)$.

The rest of this section analyzes Algorithm~\ref{alg:routingJ-bcap}. The
parameters are $J$, $\kappa$, $b$, and $q$. We write $\lambda:=\ln(1/p)$.
For a fixed original element $v^*\in\OPT$, let
$\Prob{J}{\kappa}{b}{q}(\lambda)$ denote the probability that $v^*$ belongs to
the final output when the algorithm uses threshold $p=e^{-\lambda}$. We first
analyze the case $\kappa=\infty$. In that regime we obtain a formula for general
$b$, optimize it for $b=1$, and study the pointwise limit as $b\to\infty$. The
following subsections then add the global capacity constraint, first in the
limit $q\to\infty$ and then for finite $q$.

\paragraph{Block labeling scheme.}
We use a labeling scheme that extends the one used for standard transversal
matroids. In the standard case, one label follows each right vertex. Here each
right vertex has capacity $b$, so we use $b$ labels for each right vertex.

Fix a partition of $[r]=[bq]$ into blocks $(S_f)_{f\in R}$, with $|S_f|=b$ for
every $f\in R$. We now define the label assigned to each element of $\OPT_s$ at
an improving time $s\ge \tau_0$.

At time $s$, the canonical matching $M_s$ covers $\OPT_s$. Since $|\OPT_s|=bq$
and the total right-side capacity is $bq$, every right vertex is used to full
capacity. Thus, for each $f\in R$, exactly $b$ elements of $\OPT_s$ are matched
to $f$.

For each $f\in R$, we give the labels in $S_f$ to the elements of $\OPT_s$
matched to $f$, using a fixed deterministic order. This defines a bijection
\[
    \ell_s:\OPT_s\to[bq].
\]
The matching $M_s$ is determined by $N_s$, the weights, and the fixed
tie-breaking rule. The deterministic order is also fixed in advance. Hence the
rule does not depend on the internal arrival order inside $N_s$. Therefore,
$(\ell_s)_{s\ge \tau_0}$ is a labeling scheme in the sense of
Section~\ref{par:labeling-schemes}.

\subsection{No global capacity: \texorpdfstring{$\kappa=\infty$}{kappa=infinity}}
\label{sec:ratio-gap}

In this subsection, the global capacity constraint is absent. We first analyze a
fixed original element $v^*\in\OPT$. The argument below shows that its selection
probability depends on $J$, $b$, and the threshold $p$ only through the
parameter $\lambda:=\ln(1/p)$. We therefore write
\[
    P_{J,b}(\lambda):=\Prob{J}{\kappa=\infty}{b}{q}(\lambda).
\]

Let $t^*$ be the arrival time of $v^*$, and let $f^*$ be the right vertex
matched to $v^*$ in $M_{t^*}$. For this fixed element, we use the block-labeling
construction with a distinguished block $B^*=\{1,\ldots,b\}$ assigned, at each
improving time $s$, to the $b$ elements of $\OPT_s$ matched to $f^*$ in $M_s$.
This is a labeling scheme by the same argument as above: the full-rank
convention gives exactly $b$ such elements at each time, and the assignment is
determined by $N_s$, the weights, the canonical matching, and fixed deterministic
rules. The definition of a labeling scheme permits the rule to depend on the
fixed target element $v^*$ and on the associated vertex $f^*$.

Before time $t^*$, an improving arrival has a label in $B^*$ exactly when its
canonical partner is $f^*$. Hence, by
Proposition~\ref{prop:poisson-label-general}, conditional on $t^*=t$,
\[
    N_{B^*}[p,t)\sim \Po(b\ln(t/p)).
\]

Since $\kappa=\infty$, there is no global capacity constraint. Thus the
algorithm accepts $v^*$ exactly when the local capacity at $f^*$ is not full.
This happens if and only if at most $bJ-1$ earlier improving arrivals have
appeared with labels in $B^*$. Equivalently,
\[
    N_{B^*}[p,t^*)\le bJ-1.
\]
Integrating over $t^*$ gives the formula below.

\begin{lemma}
\label{lem:local_prob}
For every $\lambda > 0$,
\[
    P_{J,b}(\lambda)
    =
    e^{-\lambda}\int_0^\lambda e^z
    \Pr\left(\Po(bz) \le bJ-1\right)\,dz.
\]
In particular, for $b=1$,
\[
    P_{J,1}(\lambda)
    =
    e^{-\lambda}\sum_{k=1}^J\frac{\lambda^k}{k!}.
\]
For $b>1$,
\[
    P_{J,b}(\lambda)
    =
    e^{-\lambda}
    \sum_{k=1}^{bJ}
    \frac{b^{k-1}}{(b-1)^k}
    \Pr\left(\Po\left((b-1)\lambda\right)\ge k\right).
\]
\end{lemma}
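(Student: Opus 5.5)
The plan is to condition on the arrival time $t^*$ of the fixed element $v^*\in\OPT$, and use the block labeling with distinguished block $B^*$ to reduce acceptance to a Poisson tail event.

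\textbf{Acceptance criterion.} First I verify the acceptance criterion stated before the lemma.
\begin{itemize}
\item Before $t^*$, the only arrivals that can occupy a slot of $f^*$ in any track are improving arrivals whose canonical partner is $f^*$. These are exactly the arrivals with label in $B^*$.
\item Since $\kappa=\infty$, each such arrival after time $p$ is accepted whenever some track still has $U_i(f^*)<b$. Hence the number of used slots at $f^*$ just before $t^*$ is $\min\{bJ,\,N_{B^*}[p,t^*)\}$.
\item Arrivals with other partners never touch the counters of $f^*$.
\item By Proposition~\ref{prop:folklore}, $v^*$ is improving at $t^*$ with partner $f^*$.
\end{itemize}
So $v^*$ enters $\AUX$ iff $t^*\ge p$ and $N_{B^*}[p,t^*)\le bJ-1$. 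By Observation~\ref{obs:union-vs-output}, this is the same as $v^*\in\OPT(\AUX)$.

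\textbf{Integral formula.} The arrival time $t^*$ of a fixed element is uniform on $[0,1]$ under the continuous-time coupling. By the conditional form of Proposition~\ref{prop:poisson-label-general} (intervals lie in $[\tau_0,t)$ since $\tau_0<p$), conditional on $t^*=t$ we have $N_{B^*}[p,t)\sim\Po(b\ln(t/p))$. Therefore
\[
P_{J,b}(\lambda)=\int_p^1 \Pr\bigl(\Po(b\ln(t/p))\le bJ-1\bigr)\,dt .
\]
Substituting $z=\ln(t/p)$, so that $t=pe^z$, $dt=pe^z\,dz$, and $z$ ranges over $[0,\lambda]$ with $p=e^{-\lambda}$, gives the first formula. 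This step is routine. The only point needing care is the claim that no other mechanism (for example, routing of non-$f^*$ elements) can block $v^*$, and this is immediate from the per-vertex counters.

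\textbf{Closed forms.} I expand the Poisson cdf inside the integral:
\[
e^z\Pr(\Po(bz)\le bJ-1)=\sum_{m=0}^{bJ-1}\frac{b^m}{m!}\,z^m e^{-(b-1)z}.
\]
\begin{itemize}
\item For $b=1$ the exponential cancels. Integrating $z^m/m!$ over $[0,\lambda]$ gives $\lambda^{m+1}/(m+1)!$, and reindexing with $k=m+1$ yields $e^{-\lambda}\sum_{k=1}^J\lambda^k/k!$.
\item For $b>1$ I use the Gamma–Poisson identity
\[
\int_0^\lambda z^m e^{-az}\,dz=\frac{m!}{a^{m+1}}\Pr\bigl(\Po(a\lambda)\ge m+1\bigr)\qquad\text{with } a=b-1 .
\]
This identity can be checked by differentiating both sides in $\lambda$, or by repeated integration by parts. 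Each term then becomes $\frac{b^m}{(b-1)^{m+1}}\Pr(\Po((b-1)\lambda)\ge m+1)$, and setting $k=m+1$ gives the stated sum.
\end{itemize}
The main (mild) obstacle is getting this identity and its indices right. Everything else is bookkeeping.
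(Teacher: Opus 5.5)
Your proposal is correct and follows the paper's proof essentially step for step. You condition on $t^*$ and use the $\Po(b\ln(t/p))$ count of label-$B^*$ arrivals. You then substitute $z=\ln(t/p)$, expand the Poisson cdf, and apply the Gamma--Poisson identity for $b>1$. Your explicit check of the acceptance criterion, namely that only label-$B^*$ arrivals touch the counters of $f^*$ and that the occupancy is $\min\{bJ,N_{B^*}[p,t^*)\}$ when $\kappa=\infty$, spells out what the paper states in the paragraph preceding the lemma.
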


\begin{proof}
By the block-labeling argument preceding the lemma and
Proposition~\ref{prop:poisson-label-general}, conditional on
$t^*=t\in[p,1]$, the number of earlier improving arrivals that can occupy slots
of $f^*$ has distribution $\Po(b\ln(t/p))$. Hence the acceptance probability of
$v^*$ is
\[
    \Pr\left(\Po(b\ln(t/p))\le bJ-1\right).
\]
Therefore,
\[
    P_{J,b}(\lambda)
    =
    \int_p^1
    \Pr\left(\Po(b\ln(t/p))\le bJ-1\right)\,dt .
\]
Put $p=e^{-\lambda}$ and set $z=\ln(t/p)$. Then $t=pe^z$ and
$dt=e^{-\lambda}e^z\,dz$. As $t$ ranges over $[p,1]$, $z$ ranges over
$[0,\lambda]$. Hence
\[
P_{J,b}(\lambda)
=
e^{-\lambda}\int_0^\lambda e^z
\Pr\left(\Po(bz)\le bJ-1\right)\,dz
=
e^{-\lambda}
\sum_{m=0}^{bJ-1}
\frac{b^m}{m!}
\int_0^\lambda e^{-(b-1)z}z^m\,dz.
\]
For $b=1$, this gives
\[
P_{J,1}(\lambda)
=
e^{-\lambda}
\sum_{m=0}^{J-1}
\frac{1}{m!}\int_0^\lambda z^m\,dz
=
e^{-\lambda}
\sum_{k=1}^J \frac{\lambda^k}{k!}.
\]
For $b>1$, use the substitution $x=(b-1)z$:
\[
\int_0^\lambda e^{-(b-1)z}\frac{z^m}{m!}\,dz
=
\frac{1}{(b-1)^{m+1}}
\int_0^{(b-1)\lambda} e^{-x}\frac{x^m}{m!}\,dx.
\]
By the Poisson-Gamma duality,
\[
    \int_0^{s}e^{-x}\frac{x^m}{m!}\,dx
    =
    \Pr\left(\Po(s)\ge m+1\right).
\]
Therefore
\[
\int_0^\lambda e^{-(b-1)z}\frac{z^m}{m!}\,dz
=
\frac{\Pr\left(\Po((b-1)\lambda)\ge m+1\right)}{(b-1)^{m+1}}.
\]
Substituting into the sum and reindexing with $k=m+1$ yields
\[
P_{J,b}(\lambda)
=
e^{-\lambda}
\sum_{k=1}^{bJ}
\frac{b^{k-1}}{(b-1)^k}
\Pr\left(\Po\left((b-1)\lambda\right)\ge k\right). \qedhere
\]
\end{proof}

Lemma~\ref{lem:local_prob} gives a one-dimensional expression for every $b$.
For $b=1$, the formula becomes explicit enough to optimize in closed form.

\paragraph{The case $b=1$.}

Assume $b=1$. We first choose the value of $\lambda$ that maximizes
$P_{J,1}(\lambda)$ over $\lambda>0$. By Lemma~\ref{lem:local_prob},
$P_{J,1}(\lambda)=e^{-\lambda}\sum_{k=1}^J\lambda^k/k!$. Differentiating gives
\[
    P_{J,1}'(\lambda)
    =
    e^{-\lambda}\left(1-\frac{\lambda^J}{J!}\right).
\]
Thus $P_{J,1}$ increases for $\lambda<(J!)^{1/J}$ and decreases for
$\lambda>(J!)^{1/J}$. Therefore the maximum over $\lambda>0$ is attained at
$\lambda^*=(J!)^{1/J}$. The next theorem estimates the error at this value.

\begin{theorem}
\label{thm:routing-asymptotic}
Let $\lambda^*=(J!)^{1/J}$. Then, as $J\to\infty$,
\[
    1-P_{J,1}(\lambda^*)
    =
    C_0 J^{-1/(2e)}e^{-J/e}(1+o(1)),
    \qquad
    \text{where } C_0=\frac{e}{e-1}(2\pi)^{-1/(2e)}.
\]
\end{theorem}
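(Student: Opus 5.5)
Since $\sum_{k\ge 0}\lambda^k/k!=e^{\lambda}$, Lemma~\ref{lem:local_prob} gives the exact identity
\[
    1-P_{J,1}(\lambda)
    = e^{-\lambda}\Bigl(1+\sum_{k>J}\frac{\lambda^k}{k!}\Bigr).
\]
The plan is to evaluate the prefactor $e^{-\lambda^*}$ and the bracket separately at $\lambda=\lambda^*$. The prefactor should produce $(2\pi)^{-1/(2e)}J^{-1/(2e)}e^{-J/e}$, and the bracket should converge to $1+\frac{1}{e-1}=\frac{e}{e-1}$.

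\emph{The prefactor.} By Stirling, $J!=\sqrt{2\pi J}\,(J/e)^J(1+O(1/J))$. Taking $J$-th roots,
\[
    \lambda^*=\frac{J}{e}\,(2\pi J)^{1/(2J)}\bigl(1+O(J^{-2})\bigr).
\]
Expanding $(2\pi J)^{1/(2J)}=1+\frac{\ln(2\pi J)}{2J}+O\bigl((\ln J)^2/J^2\bigr)$ then gives
\[
    \lambda^*=\frac{J}{e}+\frac{\ln(2\pi J)}{2e}+o(1).
\]
Hence $e^{-\lambda^*}=e^{-J/e}(2\pi J)^{-1/(2e)}(1+o(1))$. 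The only care needed here is that the error terms are genuinely $o(1)$ after multiplying by $J/e$; this holds because they are $O((\ln J)^2/J)$.

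\emph{The tail.} This is the main step. The defining property of $\lambda^*$ is $(\lambda^*)^J/J!=1$. Therefore, for $k=J+m$ with $m\ge1$,
\[
    \frac{(\lambda^*)^{k}}{k!}=\prod_{i=1}^{m}\frac{\lambda^*}{J+i}.
\]
Write $\rho_J:=\lambda^*/(J+1)$; then $\rho_J\to 1/e$ by the prefactor computation.

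For the upper bound, each factor is at most $\rho_J$, so
\[
    \sum_{k>J}\frac{(\lambda^*)^k}{k!}\le \sum_{m\ge1}\rho_J^m=\frac{\rho_J}{1-\rho_J}\to\frac{1}{e-1}.
\]
For the lower bound, fix $M$ and keep only the terms $m\le M$. For each fixed $m$ the product tends to $e^{-m}$, so $\liminf$ of the tail sum is at least $\sum_{m=1}^{M}e^{-m}$. Letting $M\to\infty$, the tail sum tends to $\frac{1}{e-1}$, and the bracket tends to $\frac{e}{e-1}$.

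\emph{Conclusion.} Multiplying the two asymptotics gives
\[
    1-P_{J,1}(\lambda^*)=\frac{e}{e-1}(2\pi)^{-1/(2e)}J^{-1/(2e)}e^{-J/e}(1+o(1)),
\]
which is the claim with $C_0=\frac{e}{e-1}(2\pi)^{-1/(2e)}$.
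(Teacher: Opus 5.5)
Your proposal is correct and follows essentially the same route as the paper. Both arguments use the identity $1-P_{J,1}(\lambda)=e^{-\lambda}\bigl(1+\sum_{k>J}\lambda^k/k!\bigr)$, use $(\lambda^*)^J=J!$ to rewrite each tail term as $\prod_{i\le m}\lambda^*/(J+i)$, and apply Stirling to obtain $e^{-\lambda^*}$. The only difference is cosmetic: you sandwich the tail between the geometric bound $\rho_J/(1-\rho_J)$ and a truncated sum, whereas the paper applies dominated convergence with the bound $u^h\le 2^{-h}$.
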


\begin{proof}
Set $\lambda=\lambda^*$ and $u=\lambda/J$. By
Lemma~\ref{lem:local_prob},
$P_{J,1}(\lambda)=e^{-\lambda}\sum_{k=1}^J\lambda^k/k!$. Since
$e^\lambda=\sum_{k\ge 0}\lambda^k/k!$, and since $\lambda^J=J!$, we get
\begin{align*}
    1-P_{J,1}(\lambda)
    &=
    e^{-\lambda}
    \left(
        1+\sum_{k\ge J+1}\frac{\lambda^k}{k!}
    \right) \\
    &=
    e^{-\lambda}
    \left(
        1+\sum_{h\ge 1}
        \frac{\lambda^{J+h}}{(J+h)!}
    \right) \\
    &=
    e^{-\lambda}
    \left(
        1+\sum_{h\ge 1}
        \frac{u^h}{\prod_{i=1}^h(1+i/J)}
    \right).
\end{align*}

We estimate the series in the last expression. Stirling's formula gives
$u=(J!)^{1/J}/J\to 1/e$ as $J\to\infty$, so $u\le 1/2$ for all large $J$.
For each fixed $h\ge 1$, as $J\to\infty$,
\[
    \frac{u^h}{\prod_{i=1}^h(1+i/J)}
    \longrightarrow e^{-h},
    \qquad
    0\le
    \frac{u^h}{\prod_{i=1}^h(1+i/J)}
    \le u^h\le 2^{-h}.
\]
Dominated convergence gives, as $J\to\infty$,
\[
    1+\sum_{h\ge 1}
    \frac{u^h}{\prod_{i=1}^h(1+i/J)}
    =
    1+\sum_{h\ge 1}e^{-h}+o(1)
    =
    \frac{e}{e-1}(1+o(1)).
\]

It remains to estimate $e^{-\lambda}$. Stirling's formula gives
\begin{align*}
    J!
    &=
    \sqrt{2\pi J}
    \left(\frac{J}{e}\right)^J
    \exp\left(\frac{1}{12J}+O(J^{-3})\right), \\
    \lambda=(J!)^{1/J}
    &=
    \frac{J}{e}
    \exp\left(
        \frac{\ln(2\pi J)}{2J}
        +\frac{1}{12J^2}
        +O(J^{-4})
    \right) \\
    &=
    \frac{J}{e}
    +\frac{\ln(2\pi J)}{2e}
    +o(1),
\end{align*}
as $J\to\infty$. Therefore, as $J\to\infty$,
\[
    e^{-\lambda}
    =
    \exp\left(
        -\frac{J}{e}
        -\frac{\ln(2\pi J)}{2e}
        +o(1)
    \right)
    =
    e^{-J/e}(2\pi J)^{-1/(2e)}(1+o(1)).
\]
The estimates for the series and for $e^{-\lambda}$, together with
$\lambda=\lambda^*$, yield
\begin{align*}
    1-P_{J,1}(\lambda^*)
    &=
    e^{-J/e}(2\pi J)^{-1/(2e)}
    \frac{e}{e-1}(1+o(1)) \\
    &=
    \frac{e}{e-1}(2\pi)^{-1/(2e)}
    J^{-1/(2e)}e^{-J/e}(1+o(1)) \\
    &=
    C_0J^{-1/(2e)}e^{-J/e}(1+o(1)).\qedhere
\end{align*}
\end{proof}

For small values of $J$, the optimized ratio has the exact form
\[
    P_{J,1}\left((J!)^{1/J}\right)
    =
    e^{-(J!)^{1/J}}
    \sum_{k=1}^J\frac{(J!)^{k/J}}{k!}.
\]
Table~\ref{tab:exact-ratios} gives the first values, rounded down to three
decimal places, together with the limit as $J\to\infty$.

\begin{table}[ht]
\centering
\begin{tabular}{|c|p{10cm}|c|}
\hline
$J$ & \multicolumn{1}{c|}{$P_{J,1}((J!)^{1/J})$} & Lower bound \\
\hline
$1$ & $e^{-1}$ & $0.367$ \\[4pt]
$2$ & $e^{-\sqrt{2}}(1+\sqrt{2})$ & $0.586$ \\[4pt]
$3$ & $e^{-6^{1/3}}\left(1+6^{1/3}+\dfrac{6^{2/3}}{2}\right)$ & $0.726$ \\[8pt]
$4$ & $e^{-24^{1/4}}\left(1+24^{1/4}+\dfrac{24^{1/2}}{2}+\dfrac{24^{3/4}}{6}\right)$ & $0.816$ \\[8pt]
$5$ & $e^{-120^{1/5}}\left(1+120^{1/5}+\dfrac{120^{2/5}}{2}+\dfrac{120^{3/5}}{6}+\dfrac{120^{4/5}}{24}\right)$ & $0.876$ \\[8pt]
$6$ & $e^{-720^{1/6}}\left(1+720^{1/6}+\dfrac{720^{2/6}}{2}+\dfrac{720^{3/6}}{6}+\dfrac{720^{4/6}}{24}+\dfrac{720^{5/6}}{120}\right)$ & $0.916$ \\[8pt]
\hline
$\infty$ & $\lim_{J\to\infty} P_{J,1}((J!)^{1/J})$ & $1$ \\
\hline
\end{tabular}
\caption{Optimized lower bounds for $b=1$ and $\kappa=\infty$, rounded down to
three decimal places.}
\label{tab:exact-ratios}
\end{table}

Table~\ref{tab:exact-ratios} studies the regime $b=1$ and $J\to\infty$. In
this regime, each right vertex contributes one element to $\OPT$, but the
algorithm has $J$ tracks available. The optimized lower bound tends to $1$.

The regime $J=1$ and $b\to\infty$ is different. There is only one track, and the
underlying transversal matroid has larger right-side capacities. Hence each
right vertex can contribute $b$ elements to $\OPT$. Our guarantee is
element-wise: it asks that each element of $\OPT$ appears in the output with a
prescribed probability. Thus increasing the number of tracks and increasing the
right-side capacities affect different objects. The next calculation shows that
these two regimes lead to different limits.

\paragraph{The limit $b\to\infty$.}
\label{sec:largeb}

Fix $J$ and a parameter $\lambda>0$. For this parameter, the uncapacitated
routing ratio is $P_{J,b}(\lambda)$, where the threshold is $p=e^{-\lambda}$.
For every $z<J$,
\[
    \Pr\left(\Po(bz)\le bJ-1\right)\longrightarrow 1
    \qquad
    \text{as } b\to\infty,
\]
whereas for every $z>J$,
\[
    \Pr\left(\Po(bz)\le bJ-1\right)\longrightarrow 0
    \qquad
    \text{as } b\to\infty.
\]
The point $z=J$ does not affect the integral. Since the integrand in
Lemma~\ref{lem:local_prob} is bounded by $e^z$ on $[0,\lambda]$, dominated
convergence gives
\[
    \lim_{b\to\infty}P_{J,b}(\lambda)
    =
    e^{-\lambda}\int_0^\lambda e^z\mathbf{1}_{\{z<J\}}\,dz
    =
    \begin{cases}
    1-e^{-\lambda}, & 0<\lambda\le J,\\
    e^{J-\lambda}-e^{-\lambda}, & \lambda>J.
    \end{cases}
\]
The limiting objective is increasing on $(0,J]$ and decreasing on $(J,\infty)$.
Thus, after optimizing over $\lambda$,
\[
    \sup_{\lambda>0}\lim_{b\to\infty}P_{J,b}(\lambda)
    =
    1-e^{-J}.
\]
In particular, for $J=1$, the optimized limiting value is $1-1/e$. This is
different from the limit $1$ obtained when $b=1$ and $J\to\infty$.

\begin{remark}
\label{rem:berczi-connection}
B\'erczi, Livanos, Soto, and Verdugo~\cite{BercziLSV24} showed that
\textsc{Greedy-Improving} on a uniform matroid tends to $1-1/e$ as the rank
goes to infinity. The formula above recovers the same value in the regime
$J=1$ and $b\to\infty$:
\[
    \sup_{\lambda>0}\lim_{b\to\infty}P_{1,b}(\lambda)
    =
    1-\frac{1}{e}.
\]
This is the regime where the rank of the underlying matroid grows. The regime
$b=1$ and $J\to\infty$ is different: the set $\OPT$ has only one element per
right vertex, but the algorithm has more tracks. In that regime,
Theorem~\ref{thm:routing-asymptotic} gives $P_{J,1}(\lambda^*)\to 1$. Hence
$J$ and $b$ are not interchangeable, even though the local routing capacity at
each right vertex is $bJ$.
\end{remark}

\subsection{Finite \texorpdfstring{$\kappa$}{kappa} and \texorpdfstring{$q\to\infty$}{q to infty}}

We now turn to the case $\kappa<\infty$. The execution with parameter
$\kappa<\infty$ differs from the execution with parameter $\kappa=\infty$ only
through the global constraint $|\AUX|\le \kappa bq$. The next argument compares
these two executions under the same arrivals, weights, neighborhoods, and
tie-breaking rules.

\begin{definition}[Per-vertex load]
Fix $p=e^{-\lambda}$. For each $f\in R$, let $S_f\subseteq [bq]$ be the block
of labels assigned to $f$, so $|S_f|=b$, and let
\[
    N_f:=N_{S_f}[p,1).
\]
By Proposition~\ref{prop:poisson-label-general}, the variables $N_f$, $f\in R$,
are mutually independent and satisfy $N_f\sim\Po(b\lambda)$. Define
\[
    \widehat A_f:=\min\{bJ,N_f\},
    \qquad
    \widehat A:=\sum_{f\in R}\widehat A_f,
\]
and
\[
    \delta_{J,b}(\lambda):=\E[\widehat A_f]
    =
    \E[\min\{bJ,\Po(b\lambda)\}].
\]
In the execution with parameter $\kappa=\infty$, $\widehat A_f$ is the number
of elements routed through $f$, and $\widehat A$ is the final size of the
candidate pool $\AUX$.
\end{definition}

We will use that $\delta_{J,b}$ is continuous and strictly increasing. Indeed,
if $L=bJ$ and $\mu=b\lambda$, then
\[
    \E[\min\{L,\Po(\mu)\}]
    =
    \sum_{i=1}^{L}\Pr(\Po(\mu)\ge i).
\]
Therefore,
\[
    \frac{d}{d\lambda}\delta_{J,b}(\lambda)
    = b\,\Pr\bigl(\Po(b\lambda)\le bJ-1\bigr) > 0.
\]

\begin{theorem}
\label{thm:global_b_kappa}
Fix $J\ge 1$, $b\ge 1$, and $\kappa>0$. Let $\lambda>0$ satisfy
$\delta_{J,b}(\lambda)<b\kappa$, and set $p=e^{-\lambda}$. Then
\[
    \Prob{J}{\kappa}{b}{q}(\lambda)
    \ge
    P_{J,b}(\lambda)
    -
    \exp\left(
        -\frac{2q(b\kappa-\delta_{J,b}(\lambda))^2}{(bJ)^2}
    \right).
\]
Consequently,
\[
    \lim_{q\to\infty}\Prob{J}{\kappa}{b}{q}(\lambda)
    =
    P_{J,b}(\lambda).
\]
\end{theorem}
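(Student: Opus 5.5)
We will couple the execution with parameter $\kappa<\infty$ to the execution with $\kappa=\infty$ on the same realization (arrival times, weights, neighborhoods, tie-breaking). The key observation is that the two executions coincide until the first moment at which the capacitated execution has $A=K$ and an improving arrival with a free local slot is rejected. Indeed, before that moment every decision in the two executions is identical: the canonical matchings $M_t$ do not depend on the algorithm's state, and the counters $U_i$ evolve the same way. Consequently, if the final pool size in the uncapacitated execution satisfies $\widehat A\le K$, the global test $A<K$ never binds. The two executions then produce the same pool, and in particular $v^*\in\AUX$ in one iff in the other. This gives
\[
    \Prob{J}{\kappa}{b}{q}(\lambda)\ \ge\ P_{J,b}(\lambda)-\Pr\bigl(\widehat A> K\bigr).
\]
We also record the identity $\widehat A=\sum_f\min\{bJ,N_f\}$ from the definition of per-vertex load. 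In the uncapacitated run, each improving arrival in $[p,1)$ with label in $S_f$ has canonical partner $f$, so it is routed through $f$ exactly when fewer than $bJ$ slots of $f$ are used. Thus the number routed through $f$ is $\min\{bJ,N_f\}$.

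Next we bound $\Pr(\widehat A>K)$. By Proposition~\ref{prop:poisson-label-general} applied to the disjoint label blocks $S_f$, the variables $\widehat A_f$ are independent and take values in $[0,bJ]$, with mean $\delta_{J,b}(\lambda)$. Hoeffding's inequality therefore gives, for $s>0$,
\[
    \Pr\Bigl(\widehat A-q\delta_{J,b}(\lambda)\ge s\Bigr)\le \exp\bigl(-2s^2/(q(bJ)^2)\bigr).
\]
We would take $s=q(b\kappa-\delta_{J,b}(\lambda))$, which yields exactly the exponent $-2q(b\kappa-\delta)^2/(bJ)^2$. The delicate point is the floor in $K=\lfloor\kappa bq\rfloor$. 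The event $\widehat A>K$ is the same as $\widehat A\ge K+1$, and since $\widehat A$ is an integer this is the same as $\widehat A>\kappa bq$, which is contained in $\{\widehat A-q\delta\ge s\}$. So the floor causes no loss. The hypothesis $\delta_{J,b}(\lambda)<b\kappa$ ensures $s>0$.

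For the limit, the lower bound gives $\liminf_q \Prob{J}{\kappa}{b}{q}(\lambda)\ge P_{J,b}(\lambda)$. For the matching upper bound, note that in the coupling the capacitated run never inserts an element that the uncapacitated run rejects. This is a monotonicity claim, and it is the main obstacle. We would try to prove it by induction: at every time, the capacitated run's slot usage at each $f$ is at most that of the uncapacitated run. Since both runs fill slots in the same first-free order, acceptance of $v^*$ in the capacitated run forces a free slot at $f^*$ in the uncapacitated run as well. This yields $\Prob{J}{\kappa}{b}{q}(\lambda)\le P_{J,b}(\lambda)$.

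If monotonicity turns out to be awkward, there is a fallback that avoids it. The two runs differ only on the event $\widehat A>K$, so
\[
    \bigl|\Prob{J}{\kappa}{b}{q}(\lambda)-P_{J,b}(\lambda)\bigr|\le\Pr(\widehat A>K).
\]
The right-hand side tends to $0$ as $q\to\infty$. Moreover, $P_{J,b}(\lambda)$ does not depend on $q$, by Lemma~\ref{lem:local_prob}. Either way, the limit equals $P_{J,b}(\lambda)$.
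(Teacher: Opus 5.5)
Your proposal is correct and follows the paper's proof: the same coupling of the capacitated and uncapacitated runs, Hoeffding's inequality on the independent loads $\widehat A_f\in[0,bJ]$ with mean $\delta_{J,b}(\lambda)$, the floor in $K=\lfloor\kappa bq\rfloor$ absorbed by the integrality of $\widehat A$, and a two-sided comparison for the limit.

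One caveat concerns your monotonicity sketch. The invariant you propose, that capacitated usage at each $f$ is at most uncapacitated usage, runs in the wrong direction. A free slot at $f^*$ in the capacitated run does not give a free slot in the uncapacitated run. The correct reason is simpler: $A$ never decreases, so once the capacitated run reaches $A=K$ it rejects every later arrival, and its pool is a time-prefix of the uncapacitated pool. In any case, your fallback bound, which controls the difference of the two selection probabilities by $\Pr(\widehat A>K)$, already gives the limit without any monotonicity.
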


\begin{proof}
Use the coupling described above. In the execution with parameter
$\kappa=\infty$, the final size of the candidate pool $\AUX$ is
$\widehat A=\sum_{f\in R}\widehat A_f$. The variables $\widehat A_f$ are
mutually independent, lie in $[0,bJ]$, and satisfy
$\E[\widehat A]=q\delta_{J,b}(\lambda)$. Hoeffding's inequality gives
\[
    \Pr(\widehat A>\kappa bq)
    \le
    \exp\left(
        -\frac{2q(b\kappa-\delta_{J,b}(\lambda))^2}{(bJ)^2}
    \right).
\]
On the event $\widehat A\le \kappa bq$, the execution with parameter
$\kappa=\infty$ never violates the global bound $|\AUX|\le \lfloor\kappa bq
\rfloor$, because $|\AUX|$ is nondecreasing and its final value is an integer.
Thus the two coupled executions add the same elements to $\AUX$ on this event.

Fix an original element $v^*\in\OPT$. Under the coupling above, the capacitated execution ($\kappa<\infty$) agrees with the uncapacitated execution ($\kappa=\infty$) on the event $\{\widehat A\le \kappa bq\}$. Therefore,
\[
\Pr(v^*\in\ALG) \ge P_{J,b}(\lambda) - \Pr(\widehat A>\kappa bq),
\]
which gives the stated lower bound.

For the matching upper bound, note that the execution with $\kappa<\infty$ never adds an element that the execution with $\kappa=\infty$ rejects. Hence, by Observation~\ref{obs:union-vs-output},
\[
    \Prob{J}{\kappa}{b}{q}(\lambda) \le P_{J,b}(\lambda).
\]
Together, these inequalities imply the convergence claim.
\end{proof}

\begin{remark}
\label{rem:boundary}
The strict inequality $\delta_{J,b}(\lambda)<b\kappa$ in
Theorem~\ref{thm:global_b_kappa} is only used for the finite-$q$ error term. It
is not needed for the limiting optimization problem. Suppose that
$\delta_{J,b}(\lambda)=b\kappa$. For every $\lambda'<\lambda$, we have
$\delta_{J,b}(\lambda')<b\kappa$, and Theorem~\ref{thm:global_b_kappa} gives
\[
    \lim_{q\to\infty} \Prob{J}{\kappa}{b}{q}(\lambda')
    =
    P_{J,b}(\lambda').
\]
Since $P_{J,b}$ is continuous by Lemma~\ref{lem:local_prob}, these values
converge to $P_{J,b}(\lambda)$ as $\lambda'\uparrow\lambda$. Thus the boundary
value is approached by thresholds with strict slack, and allowing boundary
points does not change the limiting supremum.
\end{remark}

\begin{remark}
\label{rem:overloaded-threshold}
Fix $J$, $b$, and $\kappa$. Consider a parameter $\lambda$ such that $\delta_{J,b}(\lambda)>b\kappa$. In the limit $q\to\infty$, this choice cannot be optimal.

By the monotonicity of $\delta_{J,b}$, there is a unique $\lambda_\kappa\in(0,\lambda)$ such that
\[
    \delta_{J,b}(\lambda_\kappa)=b\kappa.
\]

Run the algorithm with threshold $p=e^{-\lambda}$. For $0\le z\le\lambda$, let $\widehat A_\lambda(z)$ be the number of elements that the execution with $\kappa=\infty$ adds to $\AUX$ during $[p,pe^z]$. Then $\E[\widehat A_\lambda(z)]=q\delta_{J,b}(z)$ and $\widehat A_\lambda(z)$ is a sum of $q$ independent random variables in $[0,bJ]$.

Fix $0<\varepsilon<\lambda-\lambda_\kappa$. Since $\delta_{J,b}(\lambda_\kappa+\varepsilon)>b\kappa$, Hoeffding's inequality implies
\[
    \Pr\bigl(\widehat A_\lambda(\lambda_\kappa+\varepsilon)\le \kappa bq\bigr)\to 0
    \qquad\text{as }q\to\infty.
\]
Thus, with probability tending to one, the execution with finite $\kappa$ fills
$\AUX$ by time $pe^{\lambda_\kappa+\varepsilon}$. For a fixed $v^*\in\OPT$, this
first yields
\[
    \limsup_{q\to\infty} \Prob{J}{\kappa}{b}{q}(\lambda)
    \le e^{-\lambda}\int_0^{\lambda_\kappa+\varepsilon}
    e^z\Pr(\Po(bz)\le bJ-1)\,dz.
\]
Letting $\varepsilon\downarrow0$ and using continuity of the integrand gives
\[
    \limsup_{q\to\infty} \Prob{J}{\kappa}{b}{q}(\lambda)
    \le e^{-\lambda}\int_0^{\lambda_\kappa} e^z\Pr(\Po(bz)\le bJ-1)\,dz.
\]
On the other hand,
\[
    P_{J,b}(\lambda_\kappa)= e^{-\lambda_\kappa}\int_0^{\lambda_\kappa} e^z\Pr(\Po(bz)\le bJ-1)\,dz.
\]
Since $\lambda_\kappa<\lambda$, the boundary value gives a strictly larger limit. Therefore, after taking $q\to\infty$, it is enough to maximize $P_{J,b}(\lambda)$ over the feasible region $\delta_{J,b}(\lambda)\le b\kappa$.
\end{remark}

\paragraph{The case $b=1$.}
\label{sec:b1}

We now specialize to $b=1$. Write
\[
    \delta_J(\lambda):=\delta_{J,1}(\lambda)
    =
    \E[\min\{J,N\}],
    \qquad N\sim\Po(\lambda).
\]
Equivalently,
\[
    \delta_J(\lambda)
    =
    J-e^{-\lambda}\sum_{k=0}^{J-1}(J-k)\frac{\lambda^k}{k!}.
\]

The preceding theorem and remarks reduce the limit $q\to\infty$ (with $J$, $\kappa$, and $b=1$ fixed) to the one-dimensional optimization problem
\[
    \alpha^\infty_{J,\kappa} := \sup\{P_{J,1}(\lambda): \lambda\ge 0,\ \delta_J(\lambda)\le \kappa\}.
\]

We now solve this one-dimensional optimization problem. Since
\[
    P'_{J,1}(\lambda)
    =
    e^{-\lambda}\left(1-\frac{\lambda^J}{J!}\right),
\]
the function $P_{J,1}$ is maximized over $\lambda\ge 0$ at
\[
    \lambda_J^{\mathrm{loc}}=(J!)^{1/J}.
\]
Also, $\delta_J$ is continuous and strictly increasing, because
\[
    \delta'_J(\lambda)=\Pr(\Po(\lambda)\le J-1)>0.
\]
Therefore, if $\delta_J(\lambda_J^{\mathrm{loc}})\le \kappa$, the capacity
constraint does not bind and
\[
    \alpha^\infty_{J,\kappa} = P_{J,1}(\lambda_J^{\mathrm{loc}}).
\]
Otherwise the capacity constraint binds, and the optimizer is the unique value
$\lambda^*_{J,\kappa}\in(0,\lambda_J^{\mathrm{loc}})$ satisfying
\[
    \delta_J(\lambda^*_{J,\kappa})=\kappa.
\]
Equivalently, in both cases,
\[
    \lambda^*_{J,\kappa}
    =
    \min\left\{
        \lambda_J^{\mathrm{loc}},
        \sup\{\lambda\ge 0:\delta_J(\lambda)\le\kappa\}
    \right\},
    \qquad
    \alpha^\infty_{J,\kappa}
    =
    P_{J,1}(\lambda^*_{J,\kappa}).
\]

\newgeometry{margin=1.5cm}
\begin{table}[ht]
\centering
\begingroup
\setlength{\tabcolsep}{3pt}
\begin{tabular}{c|ccccccccccc|c}
\toprule
$\kappa$ & $J=1$ & $J=2$ & $J=3$ & $J=4$ & $J=5$ & $J=6$ & $J=7$ & $J=8$ & $J=9$ & $J=10$ & $J=100$ & $J \to \infty$ \\
\midrule
0.0 & 0.000 & 0.000 & 0.000 & 0.000 & 0.000 & 0.000 & 0.000 & 0.000 & 0.000 & 0.000 & 0.000 & 0.000 \\
0.2 & 0.179 & 0.181 & 0.181 & 0.181 & 0.181 & 0.181 & 0.181 & 0.181 & 0.181 & 0.181 & 0.181 & 0.181 \\
0.4 & 0.306 & 0.327 & 0.329 & 0.330 & 0.330 & 0.330 & 0.330 & 0.330 & 0.330 & 0.330 & 0.330 & 0.330 \\
0.6 & 0.367 & 0.442 & 0.450 & 0.451 & 0.451 & 0.451 & 0.451 & 0.451 & 0.451 & 0.451 & 0.451 & 0.451 \\
0.8 & \textbf{0.368} & 0.524 & 0.546 & 0.550 & 0.551 & 0.551 & 0.551 & 0.551 & 0.551 & 0.551 & 0.551 & 0.551 \\
1.0 & \textbf{0.368} & 0.573 & 0.621 & 0.630 & 0.632 & 0.632 & 0.632 & 0.632 & 0.632 & 0.632 & 0.632 & 0.632 \\
1.2 & \textbf{0.368} & \textbf{0.587} & 0.675 & 0.694 & 0.698 & 0.699 & 0.699 & 0.699 & 0.699 & 0.699 & 0.699 & 0.699 \\
1.4 & \textbf{0.368} & \textbf{0.587} & 0.710 & 0.743 & 0.751 & 0.753 & 0.753 & 0.753 & 0.753 & 0.753 & 0.753 & 0.753 \\
1.6 & \textbf{0.368} & \textbf{0.587} & 0.725 & 0.779 & 0.794 & 0.797 & 0.798 & 0.798 & 0.798 & 0.798 & 0.798 & 0.798 \\
1.8 & \textbf{0.368} & \textbf{0.587} & \textbf{0.726} & 0.803 & 0.826 & 0.833 & 0.834 & 0.835 & 0.835 & 0.835 & 0.835 & 0.835 \\
2.0 & \textbf{0.368} & \textbf{0.587} & \textbf{0.726} & 0.815 & 0.850 & 0.861 & 0.864 & 0.864 & 0.865 & 0.865 & 0.865 & 0.865 \\
2.2 & \textbf{0.368} & \textbf{0.587} & \textbf{0.726} & \textbf{0.817} & 0.867 & 0.883 & 0.887 & 0.889 & 0.889 & 0.889 & 0.889 & 0.889 \\
2.4 & \textbf{0.368} & \textbf{0.587} & \textbf{0.726} & \textbf{0.817} & 0.875 & 0.899 & 0.906 & 0.909 & 0.909 & 0.909 & 0.909 & 0.909 \\
2.6 & \textbf{0.368} & \textbf{0.587} & \textbf{0.726} & \textbf{0.817} & \textbf{0.877} & 0.910 & 0.921 & 0.924 & 0.925 & 0.926 & 0.926 & 0.926 \\
2.8 & \textbf{0.368} & \textbf{0.587} & \textbf{0.726} & \textbf{0.817} & \textbf{0.877} & 0.915 & 0.932 & 0.937 & 0.939 & 0.939 & 0.939 & 0.939 \\
3.0 & \textbf{0.368} & \textbf{0.587} & \textbf{0.726} & \textbf{0.817} & \textbf{0.877} & \textbf{0.917} & 0.939 & 0.947 & 0.949 & 0.950 & 0.950 & 0.950 \\
3.2 & \textbf{0.368} & \textbf{0.587} & \textbf{0.726} & \textbf{0.817} & \textbf{0.877} & \textbf{0.917} & 0.943 & 0.954 & 0.958 & 0.959 & 0.959 & 0.959 \\
3.4 & \textbf{0.368} & \textbf{0.587} & \textbf{0.726} & \textbf{0.817} & \textbf{0.877} & \textbf{0.917} & \textbf{0.944} & 0.959 & 0.964 & 0.966 & 0.967 & 0.967 \\
3.6 & \textbf{0.368} & \textbf{0.587} & \textbf{0.726} & \textbf{0.817} & \textbf{0.877} & \textbf{0.917} & \textbf{0.944} & 0.961 & 0.969 & 0.971 & 0.973 & 0.973 \\
3.8 & \textbf{0.368} & \textbf{0.587} & \textbf{0.726} & \textbf{0.817} & \textbf{0.877} & \textbf{0.917} & \textbf{0.944} & \textbf{0.962} & 0.972 & 0.976 & 0.978 & 0.978 \\
4.0 & \textbf{0.368} & \textbf{0.587} & \textbf{0.726} & \textbf{0.817} & \textbf{0.877} & \textbf{0.917} & \textbf{0.944} & \textbf{0.962} & 0.974 & 0.979 & 0.982 & 0.982 \\
5.0 & \textbf{0.368} & \textbf{0.587} & \textbf{0.726} & \textbf{0.817} & \textbf{0.877} & \textbf{0.917} & \textbf{0.944} & \textbf{0.962} & \textbf{0.974} & \textbf{0.982} & 0.993 & 0.993 \\
6.0 & \textbf{0.368} & \textbf{0.587} & \textbf{0.726} & \textbf{0.817} & \textbf{0.877} & \textbf{0.917} & \textbf{0.944} & \textbf{0.962} & \textbf{0.974} & \textbf{0.982} & 0.998 & 0.998 \\
7.0 & \textbf{0.368} & \textbf{0.587} & \textbf{0.726} & \textbf{0.817} & \textbf{0.877} & \textbf{0.917} & \textbf{0.944} & \textbf{0.962} & \textbf{0.974} & \textbf{0.982} & 0.999 & 0.999 \\
8.0 & \textbf{0.368} & \textbf{0.587} & \textbf{0.726} & \textbf{0.817} & \textbf{0.877} & \textbf{0.917} & \textbf{0.944} & \textbf{0.962} & \textbf{0.974} & \textbf{0.982} & 1.000 & 1.000 \\
9.0 & \textbf{0.368} & \textbf{0.587} & \textbf{0.726} & \textbf{0.817} & \textbf{0.877} & \textbf{0.917} & \textbf{0.944} & \textbf{0.962} & \textbf{0.974} & \textbf{0.982} & 1.000 & 1.000 \\
10.0 & \textbf{0.368} & \textbf{0.587} & \textbf{0.726} & \textbf{0.817} & \textbf{0.877} & \textbf{0.917} & \textbf{0.944} & \textbf{0.962} & \textbf{0.974} & \textbf{0.982} & 1.000 & 1.000 \\
100.0 & \textbf{0.368} & \textbf{0.587} & \textbf{0.726} & \textbf{0.817} & \textbf{0.877} & \textbf{0.917} & \textbf{0.944} & \textbf{0.962} & \textbf{0.974} & \textbf{0.982} & \textbf{1.000} & 1.000 \\
\midrule
$\kappa = J$ & \textbf{0.368} & \textbf{0.587} & \textbf{0.726} & \textbf{0.817} & \textbf{0.877} & \textbf{0.917} & \textbf{0.944} & \textbf{0.962} & \textbf{0.974} & \textbf{0.982} & \textbf{1.000} & \textbf{1.000} \\
\bottomrule
\end{tabular}
\endgroup
\caption{Asymptotic optimized probability-competitive ratios for $b=1$:
$\alpha^\infty_{J,\kappa}=P_{J,1}(\lambda^*_{J,\kappa})$. Here $\lambda^*_{J,\kappa}$ maximizes $P_{J,1}(\lambda)$ subject to $\delta_J(\lambda)\le \kappa$. Bold entries indicate that the capacity constraint is not active (equivalently, $\delta_J(\lambda_J^{\mathrm{loc}})\le\kappa$). Values are rounded to three decimals.}
\label{tab:prob_kappa}
\end{table}
\restoregeometry

\begin{remark}
\label{rem:Jinfty}
Fix $\kappa<\infty$ and let $J\to\infty$. For every fixed (bounded) $\lambda$, we have $\delta_J(\lambda)\to\lambda$ and $P_{J,1}(\lambda)\to 1-e^{-\lambda}$. Since $\lambda_J^{\mathrm{loc}}=(J!)^{1/J}\to\infty$, the capacity constraint is active for all large $J$, and $\lambda^*_{J,\kappa}\to\kappa$. Therefore,
\[
    \lim_{J\to\infty}\alpha^\infty_{J,\kappa} = 1-e^{-\kappa}.
\]
This differs from the diagonal choice $\kappa=J$, where the constraint is not active and the optimized value $P_{J,1}(\lambda_J^{\mathrm{loc}})$ tends to $1$ (Theorem~\ref{thm:routing-asymptotic}).
\end{remark}

Figure~\ref{fig:kappa-limit} illustrates the optimized limiting value
$\alpha^\infty_{J,\kappa}$ for $b=1$ and $q\to\infty$.

\begin{figure}[t]
\centering
\includegraphics[width=.84\textwidth]{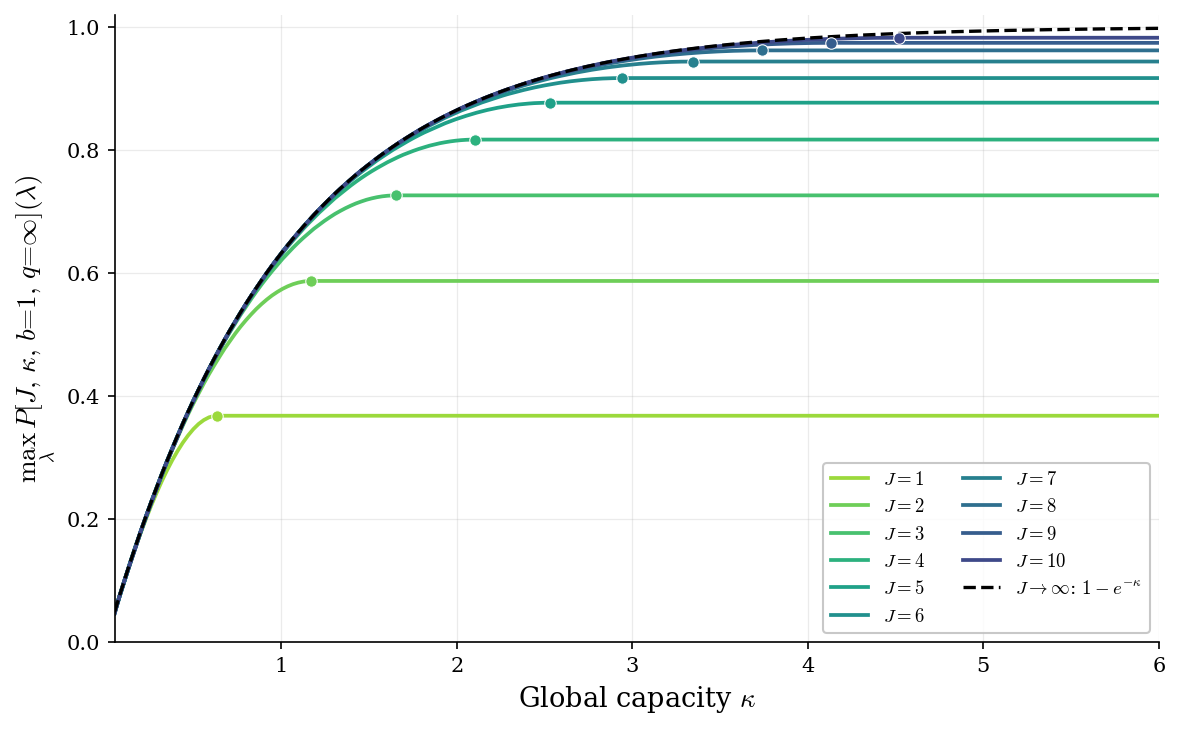}
\caption{Optimized limiting probability-competitive ratio for the
single-threshold routing algorithm with $b=1$ and $q\to\infty$. For each $J$,
the curve is
$\alpha^\infty_{J,\kappa}=\sup\{P_{J,1}(\lambda):\delta_J(\lambda)\le\kappa\}$.
The marker on each curve is the first value
$\kappa=\delta_J(\lambda_J^{\mathrm{loc}})$ at which the capacity constraint is
not active, where $\lambda_J^{\mathrm{loc}}=(J!)^{1/J}$. The dashed curve is
the limit $1-e^{-\kappa}$ as $J\to\infty$ for fixed $\kappa$.}
\label{fig:kappa-limit}
\end{figure}

\paragraph{The limit $b\to\infty$.}

Fix $J$ and $\kappa$. We first take $q\to\infty$, as in this subsection, and then let $b\to\infty$ with $\lambda$ fixed. For each fixed $\lambda$,
\[
    \frac{1}{b}\delta_{J,b}(\lambda)=\E\left[\min\left\{J,\frac{\Po(b\lambda)}{b}\right\}\right].
\]
Since $\Po(b\lambda)/b\to\lambda$ in probability and $x\mapsto \min\{J,x\}$ is bounded and continuous on $\mathbb R_+$,
\[
    \frac{1}{b}\delta_{J,b}(\lambda)\longrightarrow \min\{J,\lambda\}.
\]
Hence the constraint $\delta_{J,b}(\lambda)\le b\kappa$ converges to $\min\{J,\lambda\}\le\kappa$.

The pointwise limit of $P_{J,b}(\lambda)$ was computed in Section~\ref{sec:largeb}: $P_{J,b}(\lambda)\to L_J(\lambda)$, where
\[
    L_J(\lambda)=
    \begin{cases}
    1-e^{-\lambda}, & 0\le \lambda\le J,\\
    e^{J-\lambda}-e^{-\lambda}, & \lambda>J.
    \end{cases}
\]
The corresponding limiting optimization problem is
\[
    \sup\{L_J(\lambda): \lambda\ge 0,\ \min\{J,\lambda\}\le\kappa\}=1-e^{-\min\{\kappa,J\}}.
\]
\subsection{Finite \texorpdfstring{$\kappa$}{kappa} and finite \texorpdfstring{$q$}{q}}

Theorem~\ref{thm:global_b_kappa} already gives a finite-$q$ bound for every $\lambda$ with $\delta_{J,b}(\lambda)<b\kappa$, but the exponent depends on $J$ through the range $bJ$ of each variable $\widehat A_f$. We next record a simpler bound that is uniform in $J$, under the stronger assumption $0<\lambda<\kappa$.

\begin{proposition}
\label{prop:finite-q}
Fix $J\ge 1$, $b\ge 1$, $\kappa>0$, and $0<\lambda<\kappa$. For every $\varepsilon\in(0,1)$, if
\[
    q \ge q_\varepsilon := \frac{\ln(1/\varepsilon)}{b\,D(\kappa\,\|\,\lambda)},
\]
then
\[
    \Prob{J}{\kappa}{b}{q}(\lambda) \ge P_{J,b}(\lambda)-\varepsilon,
\]
where
\[
    D(\kappa\|\lambda) := \kappa\ln(\kappa/\lambda)-\kappa+\lambda > 0.
\]
In particular, $q_\varepsilon$ does not depend on $J$.
\end{proposition}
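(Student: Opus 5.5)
We reuse the coupling from the proof of Theorem~\ref{thm:global_b_kappa}: run the executions with parameters $\kappa$ and $\kappa=\infty$ on the same arrivals, weights, neighborhoods and tie-breaking. As shown there, the two executions add the same elements to $\AUX$ on the event $\{\widehat A\le \kappa bq\}$. This gives
\[
    \Prob{J}{\kappa}{b}{q}(\lambda)\ge P_{J,b}(\lambda)-\Pr(\widehat A>\kappa bq).
\]
So it suffices to show $\Pr(\widehat A>\kappa bq)\le\varepsilon$ whenever $q\ge q_\varepsilon$. Unlike the Hoeffding argument, we do not use the truncation at $bJ$ to control the range. Instead we discard it: $\widehat A_f=\min\{bJ,N_f\}\le N_f$, so
\[
    \widehat A\le N_{\mathrm{tot}}:=\sum_{f\in R}N_f .
\]
This is exactly where the dependence on $J$ disappears.

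Next we identify the law of $N_{\mathrm{tot}}$. Since the blocks $S_f$ partition $[bq]$, $N_{\mathrm{tot}}=N_{[bq]}[p,1)$ is the total number of improving arrivals in $[p,1)$. By Proposition~\ref{prop:poisson-label-general} with $S=[r]$ (or by independence of the $N_f\sim\Po(b\lambda)$), we have $N_{\mathrm{tot}}\sim\Po(bq\lambda)$. Set $\mu:=bq\lambda$ and $x:=\kappa bq$. The hypothesis $\lambda<\kappa$ gives $x>\mu$, so Lemma~\ref{lem:poisson-chernoff}(a) applies:
\[
    \Pr(\widehat A>\kappa bq)\le\Pr(N_{\mathrm{tot}}\ge \kappa bq)\le\exp\bigl(-D(\kappa bq\,\|\,bq\lambda)\bigr).
\]

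The remaining step is to check the homogeneity $D(cx\|c\mu)=c\,D(x\|\mu)$ for $c>0$. This holds because the logarithm depends only on the ratio $x/\mu$ and the other terms are linear. Hence $D(\kappa bq\|bq\lambda)=bq\,D(\kappa\|\lambda)$. Positivity $D(\kappa\|\lambda)>0$ for $\kappa\neq\lambda$ follows from strict convexity of $x\ln x$, or equivalently from $\ln y<y-1$ with $y=\lambda/\kappa$. Then $q\ge q_\varepsilon$ gives $bq\,D(\kappa\|\lambda)\ge\ln(1/\varepsilon)$, so the tail is at most $\varepsilon$, which proves the bound.

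The main point needing care is the first step. The coupling lemma requires that on $\{\widehat A\le\kappa bq\}$ the capacitated run never rejects an element that the uncapacitated run would accept. This uses the fact that $|\AUX|$ is nondecreasing and integer-valued, so $|\AUX|\le\lfloor\kappa bq\rfloor=K$ throughout. This is inherited from the proof of Theorem~\ref{thm:global_b_kappa}, so we cite it rather than reprove it. The requirement that dummy elements count against the capacity is already built into the definition of $\widehat A$, so it causes no trouble.
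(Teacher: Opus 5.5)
Your proposal is correct and follows the paper's proof essentially step for step. You drop the truncation via $\widehat A\le\sum_{f\in R}N_f\sim\Po(bq\lambda)$, then apply Lemma~\ref{lem:poisson-chernoff}(a) with the homogeneity $D(\kappa bq\|bq\lambda)=bq\,D(\kappa\|\lambda)$, and finally transfer the tail bound through the same coupling used for Theorem~\ref{thm:global_b_kappa}. You also check explicitly that $\kappa bq>bq\lambda$, so the Chernoff bound applies, and that $D(\kappa\|\lambda)>0$; the paper leaves both implicit.
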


\begin{proof}
For each $f\in R$, we have $\widehat A_f\le N_f$, where the random variables $N_f$ are independent and satisfy $N_f\sim\Po(b\lambda)$. Let
\[
    N:=\sum_{f\in R}N_f.
\]
Then $N\sim\Po(qb\lambda)$ and $\widehat A\le N$.

Set $\mu=qb\lambda$. By the Cram\'er--Chernoff bound for Poisson variables (Lemma~\ref{lem:poisson-chernoff}(a)),
\[
    \Pr(N>\kappa bq)
    \le \Pr(N\ge \kappa bq)
    \le \exp\bigl(-D(\kappa bq\|\mu)\bigr).
\]
Since $D(\kappa bq\|\mu)=D(\kappa bq\|qb\lambda)=qb\,D(\kappa\|\lambda)$, we obtain
\[
    \Pr(\widehat A>\kappa bq) \le \Pr(N>\kappa bq) \le \exp\bigl(-qb\,D(\kappa\|\lambda)\bigr).
\]
If $q\ge q_\varepsilon$, this probability is at most $\varepsilon$.

Use the same coupling as in the proof of
Theorem~\ref{thm:global_b_kappa}. In the execution with parameter
$\kappa=\infty$, the final size of the candidate pool $\AUX$ is $\widehat A$.
On the event $\widehat A\le \kappa bq$, this integer final size is at most
$\lfloor \kappa bq\rfloor$. Since $|\AUX|$ is nondecreasing during the
execution, the uncapacitated execution never violates the global capacity bound
on this event. Hence the executions with parameters $\kappa<\infty$ and
$\kappa=\infty$ add the same elements to $\AUX$ on this event.

Therefore, for every fixed original element $v^*\in\OPT$,
\[
    \Prob{J}{\kappa}{b}{q}(\lambda)
    \ge
    P_{J,b}(\lambda)-\Pr(\widehat A>\kappa bq)
    \ge
    P_{J,b}(\lambda)-\varepsilon .\qedhere
\]
\end{proof}

Figure~\ref{fig:finite-q-bound} illustrates the finite-$q$ bound of
Proposition~\ref{prop:finite-q} for one fixed choice of parameters.

\begin{figure}[t]
\centering
\includegraphics[width=.78\textwidth]{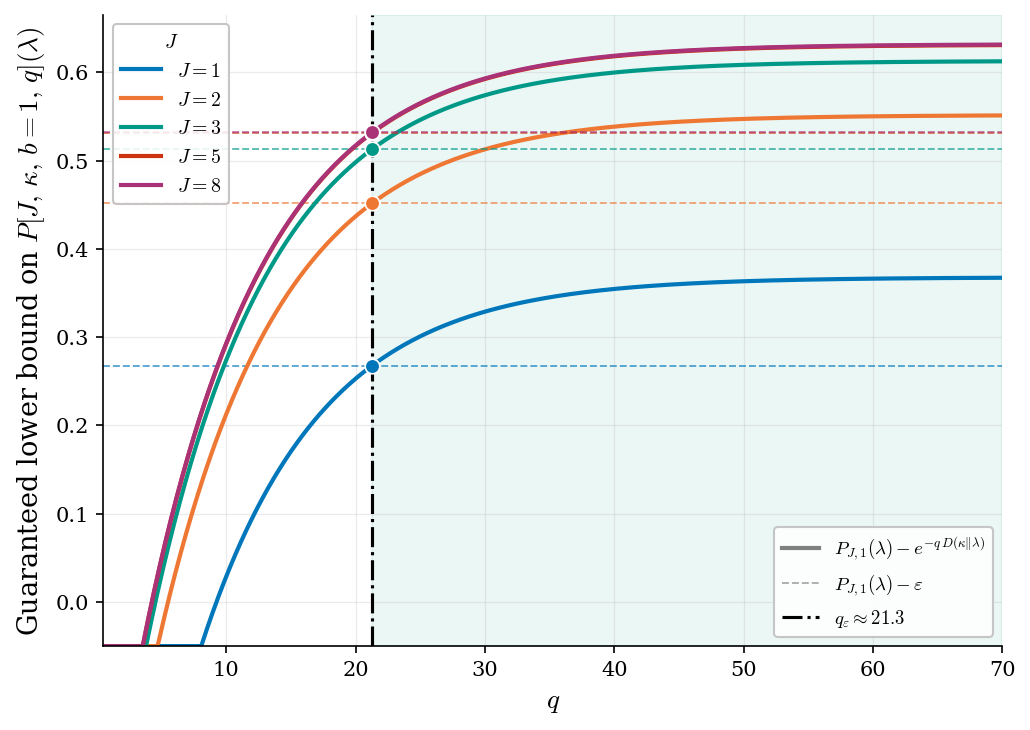}
\caption{Finite-$q$ lower bound from Proposition~\ref{prop:finite-q} for
$b=1$, $\kappa=1.5$, $\lambda=1$, and $\varepsilon=0.1$. For each displayed
value of $J$, the solid curve is
$P_{J,1}(1)-\exp(-qD(1.5\,\|\,1))$. The dashed horizontal line is
$P_{J,1}(1)-0.1$. The vertical dash-dotted line marks
$q_\varepsilon=\ln(10)/D(1.5\,\|\,1)\approx 21.28$. Thus, for
$q\ge q_\varepsilon$, the displayed bound is within $0.1$ of $P_{J,1}(1)$,
uniformly over the displayed values of $J$.}
\label{fig:finite-q-bound}
\end{figure}

\paragraph{The case $b=1$.}

For $b=1$, Proposition~\ref{prop:finite-q} gives a finite-$q$ guarantee for
every parameter $0<\lambda<\kappa$. This condition is stronger than the
condition $\delta_J(\lambda)\le\kappa$ used in the limiting optimization of the
previous subsection.

For example, take $J=2$ and $\kappa=1$. Proposition~\ref{prop:finite-q} applies
to every fixed $\lambda<1$. As $\lambda\uparrow 1$,
\[
    P_{2,1}(\lambda)
    \longrightarrow
    P_{2,1}(1)
    =
    e^{-1}\left(1+\frac{1}{2}\right)
    =
    \frac{3}{2e}
    \approx 0.552 .
\]
Thus the proposition gives guarantees arbitrarily close to $3/(2e)$ by taking
$\lambda<1$ sufficiently close to $1$, with $q_\varepsilon$ depending on this
choice of $\lambda$.

The optimizer in the limiting problem of the previous subsection is different.
It is the solution of
\[
    \delta_2(\lambda)
    =
    2-(2+\lambda)e^{-\lambda}
    =
    1,
\]
namely $\lambda\approx 1.146$. This parameter is larger than $\kappa=1$, so it
is not covered by Proposition~\ref{prop:finite-q}. To get a finite-$q$ bound
approaching this limiting value, one must use
Theorem~\ref{thm:global_b_kappa} with parameters $\lambda'<1.146$ satisfying
$\delta_2(\lambda')<1$, and then let $\lambda'$ approach the root. The rate then
depends on the gap $1-\delta_2(\lambda')$.

This completes the analysis of routing algorithms for capacitated transversal
matroids. The next section extends the $J$-MSP framework to $k$-column-sparse
and laminar matroids, where the routing structure and feasibility certificates
take a different form.

\section[Beyond transversal matroids: k-column-sparse and laminar matroids]{Beyond transversal matroids: \texorpdfstring{$k$}{k}-column-sparse and \\ laminar matroids}
\label{sec:examples}

The transversal-matroid setting admits a canonical routing rule: each improving
arrival is routed to the offline vertex that certifies its membership in the
current optimum, and the fate of a fixed $e^*\in\OPT$ can be followed using a
single forbidden label. In the terminology of Soto, Turkieltaub, and
Verdugo~\cite{SotoTV18}, this corresponds to forbidden sets of size $1$.

This section illustrates how the $(J,\kappa)$-MSP framework extends beyond
transversal matroids by presenting two representative classes where the
certificate structure is no longer induced by routing to offline vertices.

For $k$-column-sparse matroids, we keep the multi-track paradigm: the algorithm
maintains $J$ independent sets and, upon accepting an element into a track,
records a row index of the representing matrix from its support. For a fixed
$e^*\in\OPT$, the analysis focuses on the conflicts created on the (at most $k$)
rows where the column of $e^*$ is nonzero, yielding at most $k$ forbidden
labels.

For laminar matroids, we adopt a different approach: the algorithm maintains a
single candidate pool and enforces feasibility directly in $\calM^{(J)}$, never
routing elements online to specific tracks.

\subsection{Multi-track algorithms for \texorpdfstring{$k$}{k}-column-sparse matroids}
\label{sec:ksparse}

\subsubsection{The multi-track algorithm}

Let $\calM$ be a $k$-column-sparse matroid, as defined in
Section~\ref{sec:prelim}. Thus $\calM$ is represented by a matrix $M$ over a
field $\FF$, with rows indexed by $[m]$ and columns indexed by the ground set
$N$, and each column has at most $k$ nonzero entries. For $e\in N$, write
\[
    \supp(e):=\{a\in[m]:M_{a,e}\neq 0\}.
\]
We refer to $\supp(e)$ as the (row) \emph{support} of the column corresponding to $e$.

We use the following standard consequence of this representation, as in
Soto~\cite{Soto13}. Let $X\subseteq N$ be independent and consider the bipartite
graph with left side $X$ and right side $[m]$, where $e\in X$ is adjacent to the
rows in $\supp(e)$. For every $Y\subseteq X$, its neighborhood has size at least
$|Y|$; otherwise, the columns of $Y$ would be supported on fewer than $|Y|$ rows
and would be linearly dependent. Hence Hall's condition holds, so this graph has
a matching that covers $X$.

We select one such matching using a deterministic tie-breaking rule applied to
the bipartite graph induced by $X$ (equivalently, to the columns revealed in
$X$). This defines an injective map $\pi_X:X\to[m]$ such that
$\pi_X(e)\in\supp(e)$ for every $e\in X$. We call $\pi_X(e)$ the
\emph{canonical row} of $e$ with respect to $X$.

The algorithm maintains $J$ independent sets $F_1,\dots,F_J$. For each track
$i\in[J]$, it also maintains a set $P_i\subseteq[m]$ of row indices already used
in track $i$. When the algorithm inserts an element $e$ into track $i$ at time
$t$, it stores the canonical row $\pi_{\OPT_t}(e)\in\supp(e)$ in $P_i$. We call
this row index the \emph{stored row} of $e$ in track $i$.

When an improving element $e$ arrives at time $t$, the algorithm computes the
current optimum $\OPT_t$ and its canonical row map $\pi_{\OPT_t}$. It then scans
tracks in increasing order. Track $i$ is \emph{safe} for $e$ if none of the row
indices in $\supp(e)$ has been used before in that track, i.e., if
\[
    \supp(e)\cap P_i=\emptyset.
\]
If a safe track exists, the algorithm inserts $e$ into the first such track $i$
and stores $\pi_{\OPT_t}(e)$ in $P_i$.

Thus the algorithm uses the full support $\supp(e)$ to test safety, but stores
only a single row index, namely $\pi_{\OPT_t}(e)$. It also enforces the global
capacity constraint: if $\kappa<\infty$ and accepting $e$ would make the total
number of accepted elements exceed $\lfloor\kappa r\rfloor$, then $e$ is
rejected even if a safe track exists.
Algorithm~\ref{alg:ksparse} gives the pseudocode.

\begin{algorithm}[ht]
\caption{Multi-track algorithm for $k$-column-sparse matroids}
\label{alg:ksparse}
\DontPrintSemicolon
\KwData{A $k$-column-sparse matroid $\calM$ with matrix representation $M$, a
global capacity parameter $\kappa\in(0,\infty]$, an integer $J\ge 1$, a
threshold $p\in(0,1)$, and access to the arrival time $t_e$ and column $M_e$ of
each element $e$.}
\KwResult{A feasible output for the $(J,\kappa)$-MSP.}

Initialize $F_i\leftarrow\emptyset$ and $P_i\leftarrow\emptyset$ for all
$i\in[J]$, and initialize $\AUX\leftarrow\emptyset$ and $A\leftarrow 0$\;
Set $K\leftarrow\infty$ if $\kappa=\infty$, and
$K\leftarrow\lfloor\kappa r\rfloor$ otherwise\;
Reject all arrivals before time $p$\;

\For{each arrival time $t\in[p,1)$ in increasing order}{
    let $e_t$ be the arriving element at time $t$\;
    compute the canonical maximum-weight independent set $\OPT_t$ and the
    canonical row map $\pi_{\OPT_t}$\;

    \If{$e_t\in\OPT_t$ \textbf{and} $A<K$}{
        \If{there exists $i\in[J]$ such that $\supp(e_t)\cap P_i=\emptyset$}{
            let $i^*$ be the smallest index $i\in[J]$ such that $\supp(e_t)\cap P_i=\emptyset$\;
            add $e_t$ to $F_{i^*}$ and to $\AUX$; set
            $P_{i^*}\leftarrow P_{i^*}\cup\{\pi_{\OPT_t}(e_t)\}$; set
            $A\leftarrow A+1$\;
        }
    }
}

\Return{$\OPT(\AUX)$}\;
\end{algorithm}

The next lemma shows that each track maintained by Algorithm~\ref{alg:ksparse}
remains independent, and that the auxiliary set respects the global capacity
constraint.

\begin{lemma}
\label{prop:ksparse-feasibility}
Throughout the execution of Algorithm~\ref{alg:ksparse}, each set $F_i$ is
independent in $\calM$. Hence $\AUX$ is independent in $\calM^{(J)}$. Moreover,
if $\kappa<\infty$, then $|\AUX|\le \lfloor\kappa r\rfloor$.
\end{lemma}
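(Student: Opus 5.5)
The plan is to prove, by induction over the arrivals, an invariant that certifies independence of each track directly. Fix a track $i$ and list its elements $e_1,\dots,e_s$ in the order of insertion. Write $a_j:=\pi_{\OPT_{t_{e_j}}}(e_j)$ for the stored row of $e_j$, so that $P_i=\{a_1,\dots,a_s\}$. The invariant I aim for is:
\[
    a_j\in\supp(e_j)\quad\text{for every } j,
    \qquad\text{and}\qquad
    a_j\notin\supp(e_{j'})\quad\text{for every } j'<j .
\]
The first property holds by the definition of the canonical row map, since $\pi_X(e)\in\supp(e)$. For the second, note that when $e_{j'}$ was inserted, the safety test gave $\supp(e_{j'})\cap P_i=\emptyset$ at that moment. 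That test only looks at rows stored before $e_{j'}$, so it says nothing about $a_j$ with $j>j'$.

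So I would state the invariant from the perspective of the later element. When $e_j$ is inserted, safety means $\supp(e_j)\cap\{a_1,\dots,a_{j-1}\}=\emptyset$. Hence $a_{j'}\notin\supp(e_j)$ for all $j'<j$; that is, $M_{a_{j'},e_j}=0$ whenever $j'<j$.

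Next, consider the $s\times s$ submatrix of $M$ with rows $a_1,\dots,a_s$ and columns $e_1,\dots,e_s$. The rows $a_j$ are distinct: $a_j\in\supp(e_j)$, and $\supp(e_j)$ is disjoint from the earlier stored rows. In this submatrix, the entry in row $a_{j'}$ and column $e_j$ vanishes whenever $j'<j$, while every diagonal entry $M_{a_j,e_j}$ is nonzero. The submatrix is therefore lower triangular with nonzero diagonal, hence nonsingular. It follows that the columns $e_1,\dots,e_s$ are linearly independent, so $F_i\in\calI$. This holds after every step, because insertion only appends a new column and a new row that preserve the triangular structure.

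Finally, $\AUX=F_1\cup\cdots\cup F_J$ is a union of $J$ independent sets, so it is independent in $\calM^{(J)}$. For the capacity claim, the counter $A$ equals $|\AUX|$, since every insertion into $\AUX$ adds a new element and increments $A$. An insertion happens only when $A<K$, so $A\le K=\lfloor\kappa r\rfloor$ at all times.

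The main subtlety is the triangular orientation. Safety is checked against the full support of the new element but only against the single stored rows of the old ones, and the argument must use exactly this. Distinctness of the stored rows within a track also needs a check, and it follows from the same safety condition.
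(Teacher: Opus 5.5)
Your proposal is correct and follows the paper's proof essentially step by step. The ingredients match: distinctness of the stored rows from the safety test, the submatrix on rows $a_1,\dots,a_s$ and columns $e_1,\dots,e_s$ being lower triangular with nonzero diagonal, and the counter $A$ for the capacity bound. The invariant you first display ($a_j\notin\supp(e_{j'})$ for $j'<j$) has the wrong orientation and is not guaranteed by the algorithm. You then replace it with $a_{j'}\notin\supp(e_j)$ for $j'<j$, which is exactly what the argument needs; drop the first version in a final write-up.
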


\begin{proof}
Fix a track $i\in[J]$. Let $e_1,\ldots,e_s$ be the elements inserted into
$F_i$, ordered so that $t_{e_1}<\cdots<t_{e_s}$. For each $j\in[s]$, define the
stored row index
\[
    q_j:=\pi_{\OPT_{t_{e_j}}}(e_j)\in[m].
\]

Before inserting $e_j$, the set $P_i$ contains $q_1,\ldots,q_{j-1}$. The
algorithm inserts $e_j$ into track $i$ only if $\supp(e_j)\cap P_i=\emptyset$.
Since $q_j\in\supp(e_j)$, we have $q_j\notin\{q_1,\ldots,q_{j-1}\}$. Hence the
row indices $q_1,\ldots,q_s$ are distinct.

Let $Q=(q_1,\ldots,q_s)$ and $E_i=(e_1,\ldots,e_s)$. Use the order of $Q$ for
the rows and the order of $E_i$ for the columns, and consider the restricted
matrix $M[Q,E_i]$. Its diagonal entries are nonzero because
$q_j\in\supp(e_j)$ for every $j$. Its entries above the diagonal are zero:
if $\ell<j$, then $q_\ell\in P_i$ when the algorithm inserts $e_j$, and the
condition $\supp(e_j)\cap P_i=\emptyset$ gives $q_\ell\notin\supp(e_j)$.
Thus $M_{q_\ell,e_j}=0$.

Thus $M[Q,E_i]$ is lower triangular with nonzero diagonal, so its columns are
linearly independent. Since these columns are restrictions of the columns
$M_{e_1},\ldots,M_{e_s}$ of $M$, the full columns $M_{e_1},\ldots,M_{e_s}$ are
also linearly independent. Therefore $F_i$ is independent in $\calM$.

Since every $F_i$ is independent in $\calM$, the pool
$\AUX=F_1\cup\cdots\cup F_J$ is independent in $\calM^{(J)}$. The counter $A$
and the condition $A<K$ ensure $|\AUX|\le\lfloor\kappa r\rfloor$ when
$\kappa<\infty$.
\end{proof}

\subsubsection{Labeling scheme and row conflicts}

We next define a labeling scheme for the analysis of
Algorithm~\ref{alg:ksparse}. Fix an element $e^*\in\OPT$. The goal is to control
how many improving arrivals before $e^*$ store a row index in $\supp(e^*)$.

We use the full-rank convention from Section~\ref{sec:prelim}. Starting from a
$k$-column-sparse representation with rows indexed by $[m]$, we add dummy unit
columns of the $m\times m$ identity matrix. These dummy columns are part of the
augmented instance used in the analysis; no row preprocessing is performed.
The resulting augmented matroid has rank $m$. From this point on in the
analysis, we write $r=m$ for this rank. Moreover, for every $s\ge \tau_0$ the
set of revealed columns $N_s$ has full
rank $r$, so $\OPT_s$ is a basis of size $r$. Accordingly, we index rows by
$[r]$.

The dummy columns are $1$-sparse, so the augmented instance is still
$k$-column-sparse. As in the full-rank convention, dummy elements are processed
as ordinary elements in the augmented execution. They may be improving, may be
accepted, may be inserted into $\AUX$, may update the maintained sets, and may
consume global capacity when $\kappa<\infty$. All guarantees are stated only for
original elements of $\OPT$.

\paragraph{Labeling scheme for $k$-column-sparse matroids.}
Let $t_{e^*}$ be the arrival time of $e^*$, and write
\[
    \supp(e^*)=\{a_1,\ldots,a_h\}.
\]
Since each column has at most $k$ nonzero entries, we have $h\le k$ (and also
$h\le r$).

For each improving time $s\ge \tau_0$, the map
$\pi_{\OPT_s}:\OPT_s\to[r]$ is injective. Since $\OPT_s$ has size $r$, this map
is a bijection. Hence, for each row $a\in[r]$, there is a unique element of
$\OPT_s$ whose canonical row is $a$.

We assign labels as follows. For each $j\in[h]$, assign label $j$ to the unique
element of $\OPT_s$ whose canonical row is $a_j$. Then assign labels
$h+1,\ldots,r$ to the remaining elements of $\OPT_s$ in a fixed deterministic
order independent of the internal arrival order inside $N_s$.

This rule gives a bijection $\ell_s:\OPT_s\to[r]$ at every improving time
$s\ge \tau_0$. The rule may depend on $\OPT_s$ and on $\supp(e^*)$, but not
on the internal arrival order inside $N_s$. Hence it is a valid labeling scheme
in the sense of Section~\ref{par:labeling-schemes}.

Now consider an improving arrival at time $s<t_{e^*}$. If the row stored by
Algorithm~\ref{alg:ksparse} for this arrival lies in $\supp(e^*)$, then this
stored row is some $a_j$. By the labeling rule, the arriving element has label
$j$. Therefore, the number of improving arrivals before $t_{e^*}$ whose stored
row lies in $\supp(e^*)$ is at most $N_{[h]}[p,t_{e^*})$.

By Proposition~\ref{prop:poisson-label-general}, for every
$\tau_0\le s<t\le 1$,
\[
    N_{[h]}[s,t)\sim \Po\bigl(h\ln(t/s)\bigr),
\]
with independent increments on disjoint subintervals of $[\tau_0,1]$.
\subsubsection{Uncapacitated probability-competitive ratio}

We first analyze the case $\kappa=\infty$. We do not optimize the guarantee as in
the previous sections. We give a simple bound showing that the failure
probability decays exponentially with $J/k$. In particular, taking $J=ck$ for a
constant $c>0$, the ratio converges to $1$ at an exponential rate in $c$.

\begin{theorem}
\label{thm:ksparse}
For every $k$-column-sparse matroid $\calM$ and every $J \ge 1$,
Algorithm~\ref{alg:ksparse} with capacity parameter $\kappa=\infty$ and
threshold $p = e^{-J/(ke)}$ achieves a probability-competitive ratio of at
least $1 - O(e^{-J/(ke)})$.
\end{theorem}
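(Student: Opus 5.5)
Fix an original element $e^*\in\OPT$ and use the labeling scheme for $k$-column-sparse matroids defined above, with $\supp(e^*)=\{a_1,\ldots,a_h\}$ and $h\le k$. By Observation~\ref{obs:union-vs-output}, it suffices to lower bound $\Pr(e^*\in\AUX)$. By Proposition~\ref{prop:folklore}, $e^*$ is improving at its arrival. With $\kappa=\infty$, Algorithm~\ref{alg:ksparse} therefore inserts $e^*$ whenever $t_{e^*}\ge p$ and some track $i$ satisfies $\supp(e^*)\cap P_i=\emptyset$. The sets $P_i$ are only modified by insertions, and each insertion adds exactly one stored row to exactly one track. Hence the number of tracks $i$ with $P_i\cap\supp(e^*)\neq\emptyset$ at time $t_{e^*}$ is at most the number of accepted arrivals in $[p,t_{e^*})$ whose stored row lies in $\supp(e^*)$. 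By the discussion after the labeling scheme, this is at most $N_{[h]}[p,t_{e^*})$. So a safe track exists whenever $N_{[h]}[p,t_{e^*})\le J-1$, and
\[
    \Pr(e^*\in\ALG)\ \ge\ \Pr\bigl(t_{e^*}\ge p,\ N_{[h]}[p,t_{e^*})\le J-1\bigr).
\]

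Next I would condition on $t_{e^*}=t\in[p,1]$. I would use the conditional form of Proposition~\ref{prop:poisson-label-general} stated in the preliminaries: given $t_{e^*}=t$, the label counts on intervals contained in $[\tau_0,t)$ keep their Poisson laws. This gives $N_{[h]}[p,t)\sim\Po(h\ln(t/p))$, which is stochastically dominated by $\Po(k\lambda)$ with $\lambda=\ln(1/p)=J/(ke)$. Here I use that $t\le1$ and that $\tau_0<p$ is chosen as in the full-rank convention. Since $t_{e^*}$ is uniform on $[0,1]$,
\[
    \Pr(e^*\in\ALG)\ \ge\ (1-p)\Bigl(1-\Pr\bigl(\Po(k\lambda)\ge J\bigr)\Bigr)\ \ge\ 1-p-\Pr\bigl(\Po(J/e)\ge J\bigr).
\]

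Finally, $k\lambda=J/e<J$, so Lemma~\ref{lem:poisson-chernoff}(b) gives
\[
    \Pr\bigl(\Po(J/e)\ge J\bigr)\le \Bigl(\tfrac{e\cdot J/e}{J}\Bigr)^J e^{-J/e}=e^{-J/e}\le e^{-J/(ke)}.
\]
Together with $p=e^{-J/(ke)}$, this gives failure probability at most $2e^{-J/(ke)}$, which is the claimed $1-O(e^{-J/(ke)})$.

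The main obstacle is the first step: making rigorous that the number of blocked tracks is bounded by the label-$[h]$ count. The concern is that a track can be blocked by a stored row in $\supp(e^*)$ coming from an element whose label at its arrival time differs from the label of that row later. The safety test uses $P_i$, which records stored rows, and a stored row $a_j$ of an arrival at time $s$ is by definition $\pi_{\OPT_s}$ of that arrival. The labeling at time $s$ then assigns it label $j$, so the bound holds arrival by arrival, and I would verify this carefully. The conditioning on $t_{e^*}$ also needs care: I rely on the conditional Poisson statement quoted in the preliminaries rather than rederiving it.
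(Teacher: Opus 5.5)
Your proposal is correct and follows the paper's proof essentially step by step. Both arguments bound the number of blocked tracks by the label-$[h]$ count $N_{[h]}[p,t_{e^*})$, using that each stored row $a_j$ gets label $j$ at its own arrival time. Both then condition on $t_{e^*}$ and apply Lemma~\ref{lem:poisson-chernoff}(b) at mean $J/e$, which gives failure probability at most $p+(1-p)e^{-J/e}\le 2e^{-J/(ke)}$.

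The only difference is cosmetic. You dominate $\Po(h\ln(t/p))$ directly by $\Po(J/e)$, using that the Poisson law is monotone in its mean. The paper instead applies the Chernoff bound at $\lambda=k\ln(t/p)$ and notes that the bound is increasing in $\lambda$ on $[0,J/e]$.
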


\begin{proof}

Fix $e^*\in\OPT$, and let $t_{e^*}$ be its arrival time. We first condition on
$t_{e^*}=t\in[p,1)$.

Let $B$ be the number of \emph{accepted} elements arriving before $t$ whose
stored row index lies in $\supp(e^*)$. Each such accepted element $e$ arrives at
some time $s\in[p,t)$, is improving at time $s$, and stores a row index
$\pi_{\OPT_s}(e)\in\supp(e^*)=\{a_1,\ldots,a_h\}$. By the labeling rule at time
$s$, this implies that $e$ has a label in $[h]$. Hence $B\le N_{[h]}[p,t)$. By
Proposition~\ref{prop:poisson-label-general},
$N_{[h]}[p,t)\sim \Po(h\ln(t/p))$. Since $h\le k$, this random variable is
stochastically dominated by $\Po(k\ln(t/p))$.

If $B<J$, then not all tracks can contain a stored row index from $\supp(e^*)$.
Indeed, if $P_i\cap\supp(e^*)\neq\emptyset$ for every $i\in[J]$, then each
track contributes at least one accepted element counted in $B$, so $B\ge J$.
Consequently, there exists a track $i$ with
$P_i\cap\supp(e^*)=\emptyset$. This track is safe for $e^*$, so the algorithm
accepts $e^*$. Since $e^*\in\OPT$, Observation~\ref{obs:union-vs-output}
implies that $e^*\in\ALG$.

Therefore, conditional on $t_{e^*}=t$,
\[
    \Pr(e^*\in\ALG\mid t_{e^*}=t)
    \ge
    \Pr\bigl(\Po(k\ln(t/p))<J\bigr).
\]
Set $\lambda:=k\ln(t/p)$. Since $t\le 1$ and $p=e^{-J/(ke)}$, we have
$\lambda\le k\ln(1/p)=J/e$. By the Poisson Chernoff bound in
Lemma~\ref{lem:poisson-chernoff}(b),
\[
    \Pr(\Po(\lambda)\ge J)
    \le
    \left(\frac{e\lambda}{J}\right)^J e^{-\lambda}.
\]
The right-hand side is increasing for $\lambda<J$. Hence, on the interval
$[0,J/e]$, it is maximized at $\lambda=J/e$, where it equals $e^{-J/e}$.
Therefore,
\[
    \Pr(\Po(\lambda)\ge J)
    \le e^{-J/e}.
\]
Thus, for every $t\in[p,1)$,
\[
    \Pr(e^*\in\ALG\mid t_{e^*}=t)\ge 1-e^{-J/e}.
\]
Unconditioning over $t_{e^*}$ (uniform on $[0,1]$) and recalling that the
algorithm rejects every arrival before $p$, we obtain
\[
    \Pr(e^*\notin\ALG)
    =
    \int_0^1 \Pr(e^*\notin\ALG\mid t_{e^*}=t)\,dt
    \le
    p + \int_p^1 e^{-J/e}\,dt
    =
    p+(1-p)e^{-J/e}.
\]
Substituting $p=e^{-J/(ke)}$ gives
\[
    \Pr(e^*\notin\ALG)
    \le
    e^{-J/(ke)}+\bigl(1-e^{-J/(ke)}\bigr)e^{-J/e}
    \le
    e^{-J/(ke)}+e^{-J/e}
    \le
    2e^{-J/(ke)},
\]
where the last inequality uses $k\ge 1$. Hence
$\Pr(e^*\in\ALG)\ge 1-O(e^{-J/(ke)})$. Since this holds for every
$e^*\in\OPT$, the algorithm has the claimed probability-competitive ratio.

\end{proof}

\begin{corollary}
\label{cor:graphic}
For every graphic matroid and every $J \ge 1$, Algorithm~\ref{alg:ksparse} with
threshold $p = e^{-J/(2e)}$ achieves a probability-competitive guarantee of at
least $1 - O(e^{-J/(2e)})$.
\end{corollary}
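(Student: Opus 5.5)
The plan is to derive Corollary~\ref{cor:graphic} from Theorem~\ref{thm:ksparse} by showing that every graphic matroid is $2$-column-sparse in the online-representation model, and then substituting $k=2$.

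The representation is the signed vertex--edge incidence matrix of the graph $G=(V,E)$ over $\FF=\mathbb{Q}$, or over $\mathrm{GF}(2)$, where signs do not matter. Rows are indexed by $V$ and columns by $E$. Orient each edge $uv$ arbitrarily and put $+1$ in row $u$, $-1$ in row $v$, and zeros elsewhere. A set of edges is linearly independent exactly when it is acyclic:
\begin{itemize}
\item A cycle gives a dependency, since the signed sum of its columns around the cycle vanishes.
\item Conversely, a forest is independent: repeatedly pick a leaf vertex, whose row has a single nonzero entry among the remaining columns, and peel it off inductively.
\end{itemize}
Parallel edges give equal columns up to sign, which is consistent with the fact that they form a $2$-cycle. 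A self-loop would give the zero column, and hence a loop of the matroid. Such an element never belongs to any independent set, so it is never improving and can be ignored; alternatively one may assume $G$ is loopless. Every non-loop column therefore has exactly two nonzero entries, so the matrix is $2$-column-sparse.

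Next I would check the information model. The number of rows $|V|$ and the field are known from the start, because $V$ is known. When an edge arrives, its endpoints are revealed, and these determine its column entirely, namely the row indices and the values $\pm1$ under a fixed orientation convention such as ``smaller-indexed endpoint gets $+1$''. So Algorithm~\ref{alg:ksparse} can be run on graphic matroids in exactly the access model that Theorem~\ref{thm:ksparse} assumes.

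Applying Theorem~\ref{thm:ksparse} with $k=2$ and $\kappa=\infty$, the threshold becomes $p=e^{-J/(2e)}$ and the guarantee becomes $1-O(e^{-J/(2e)})$. From the proof of that theorem one even gets the explicit bound $\Pr(e^*\notin\ALG)\le 2e^{-J/(2e)}$.

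The only point needing care is the full-rank augmentation used inside the proof of Theorem~\ref{thm:ksparse}, which adds unit columns. Those dummy columns need not be graphic. However, the theorem is stated for all $k$-column-sparse matroids, and the augmentation is part of its internal analysis, so nothing further is required. Alternatively, one can use the spanning-tree dummies mentioned in the full-rank convention.
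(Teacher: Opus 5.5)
Your proposal is correct and follows the paper's proof: the incidence matrix (signed over $\mathbb{Q}$, or unsigned over $\mathrm{GF}(2)$) is $2$-column-sparse, and the endpoints revealed at arrival determine each edge's column online, so Theorem~\ref{thm:ksparse} with $k=2$ gives the claim. Your treatment of the field and of self-loops is more careful than the paper's one-line argument. For the dummy columns, rely on your first justification: unit columns keep the instance $2$-column-sparse, and the augmentation is internal to Theorem~\ref{thm:ksparse}. The ``spanning-tree dummies'' alternative would not plug in directly, because the augmented graphic matroid then has rank $|V|-1$ while the incidence matrix has $|V|$ rows, so the canonical row map is no longer the bijection that the labeling scheme requires.
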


\begin{proof}
A graphic matroid admits a $2$-column-sparse representation via its incidence
matrix, since each edge is incident to exactly two vertices. Moreover, under the
graphic-matroid input model from Section~\ref{sec:prelim}, each arriving edge is
revealed together with its endpoints, so its incidence column (and hence its
support) is known online.
Applying Theorem~\ref{thm:ksparse} with $k=2$ gives the result.
\end{proof}

\subsubsection{Capacitated probability-competitive ratio}
We now add the global capacity constraint $|\AUX|\le\lfloor\kappa r\rfloor$.
We couple the capacitated and uncapacitated executions and show that, when
$\kappa>J/(ke)$, the additional loss vanishes exponentially in the rank $r$.

\begin{corollary}
\label{cor:ksparse-kappa}
Fix $J\ge 1$ and set $\lambda:=J/(ke)$. Let $\kappa>\lambda$.
For every $k$-column-sparse matroid of rank $r$, Algorithm~\ref{alg:ksparse}
with capacity parameter $\kappa$ and threshold $p=e^{-\lambda}$ satisfies, for
every $v\in\OPT$,
\[
    \Pr(v\in\ALG)
    \ge
    1-O\bigl(e^{-\lambda}\bigr)
    -\exp\bigl(-r\,D(\kappa\|\lambda)\bigr),
\]
where $D(\kappa\|\lambda):=\kappa\ln(\kappa/\lambda)-\kappa+\lambda>0$.

In particular, for fixed $J$, $k$, and $\kappa>\lambda$, this bound approaches
$1-O(e^{-\lambda})$ as $r\to\infty$.
\end{corollary}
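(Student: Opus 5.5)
The plan is to follow the coupling argument from the proof of Theorem~\ref{thm:global_b_kappa} and Proposition~\ref{prop:finite-q}. We run the capacitated and the uncapacitated ($\kappa=\infty$) executions of Algorithm~\ref{alg:ksparse} on the same arrivals, weights, columns and tie-breaking rules. Let $\widehat A$ be the final size of $\AUX$ in the uncapacitated execution. The state of the algorithm changes only at accepted improving arrivals, and $|\AUX|$ is nondecreasing with integer final value. Hence on the event $\{\widehat A\le \kappa r\}$ the uncapacitated execution never reaches $A=K=\lfloor\kappa r\rfloor$ before its last insertion, so the test $A<K$ never blocks anything. The two executions therefore insert exactly the same elements into the same tracks and store the same rows. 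For a fixed original $v\in\OPT$ this gives
\[
    \Pr(v\in\ALG)\ \ge\ \Pr(v\in\ALG^{\infty})-\Pr(\widehat A>\kappa r),
\]
where $\ALG^{\infty}$ is the output of the uncapacitated execution. By Theorem~\ref{thm:ksparse} with $p=e^{-J/(ke)}=e^{-\lambda}$, the first term is at least $1-O(e^{-\lambda})$. The bound there is $p+e^{-J/e}\le 2e^{-J/(ke)}$, which is exactly $O(e^{-\lambda})$.

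The remaining task is to bound $\Pr(\widehat A>\kappa r)$. Every element inserted into $\AUX$ is an improving arrival in $[p,1)$ of the augmented instance; dummies count as well, which is harmless since they are also counted by the labeling process. So $\widehat A\le N_{[r]}[p,1)$. By Proposition~\ref{prop:poisson-label-general} with $S=[r]$ (any labeling scheme works here, e.g.\ the canonical-row labeling, since $\OPT_s$ is a basis of size $r=m$ after augmentation), we have $N_{[r]}[p,1)\sim\Po(r\lambda)$. We also have $\kappa r>\lambda r$ because $\kappa>\lambda$. Then Lemma~\ref{lem:poisson-chernoff}(a) with $x=\kappa r$ and $\mu=r\lambda$ gives
\[
    \Pr(\widehat A>\kappa r)\le \Pr\bigl(\Po(r\lambda)\ge \kappa r\bigr)\le \exp\bigl(-D(\kappa r\,\|\,r\lambda)\bigr)=\exp\bigl(-r\,D(\kappa\|\lambda)\bigr),
\]
using the homogeneity $D(cx\|c\mu)=c\,D(x\|\mu)$. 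Positivity of $D(\kappa\|\lambda)$ for $\kappa\ne\lambda$ follows from strict convexity of $x\ln(x/\lambda)-x+\lambda$, which is minimized at $x=\lambda$ with value $0$. Combining the two displays gives the bound, and the limit statement follows by letting $r\to\infty$ with $J,k,\kappa$ fixed.

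The only delicate point is the coupling claim: that the executions agree on $\{\widehat A\le\kappa r\}$ at every step, not only at the end. I would argue it by induction over arrival times. Up to the first time the two executions differ, they have identical states $(F_i,P_i,A)$. A difference can only arise when the capacitated execution rejects because $A=K$ while a safe track exists. At that moment the uncapacitated execution would accept, pushing its final size to at least $K+1>\kappa r$, which contradicts the event. A secondary check is that $r$ here is the augmented rank $m$, which equals the rank used in $K=\lfloor\kappa r\rfloor$ under the full-rank convention.
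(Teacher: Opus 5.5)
Your proposal is correct and follows essentially the same route as the paper: couple with the uncapacitated execution, use $|\AUX^\infty|\le N_{[r]}[p,1)\sim\Po(r\lambda)$, take the main term from Theorem~\ref{thm:ksparse}, and bound the overflow probability with Lemma~\ref{lem:poisson-chernoff}(a). The differences are cosmetic. You define the good event through the final uncapacitated pool size rather than through $N_{[r]}[p,1)$, and you spell out the step-by-step induction showing that the two executions agree, which the paper only asserts.
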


\begin{proof}
Let $\AUX^{\infty}$ and $\AUX^{\kappa}$ be the pools produced by the
uncapacitated and capacitated algorithms under the same arrivals.

By the labeling scheme, every improving arrival in $[p,1)$ receives a label in
$[r]$. Hence the number of improving arrivals in $[p,1)$ is $N_{[r]}[p,1)$, and
Proposition~\ref{prop:poisson-label-general} gives
$N_{[r]}[p,1)\sim\Po(r\lambda)$. Since every accepted element is improving,
$|\AUX^{\infty}|\le N_{[r]}[p,1)$.

Let
\[
    \calE:=\{N_{[r]}[p,1)\le \lfloor\kappa r\rfloor\}.
\]
On $\calE$, the uncapacitated execution never exceeds the global capacity, so
$\AUX^{\kappa}=\AUX^{\infty}$. By Observation~\ref{obs:union-vs-output}, for a
fixed $v\in\OPT$,
\[
    \Pr(v\in\OPT(\AUX^\kappa))
    \ge
    \Pr(v\in\OPT(\AUX^\infty)) - \Pr(\calE^c).
\]
Theorem~\ref{thm:ksparse} yields
$\Pr(v\in\OPT(\AUX^\infty))\ge 1-O(e^{-\lambda})$.

Finally, since $\kappa>\lambda$, the Cram\'er--Chernoff bound
(Lemma~\ref{lem:poisson-chernoff}(a)) gives
\[
    \Pr(\calE^c)
    =
    \Pr\bigl(\Po(r\lambda)>\lfloor\kappa r\rfloor\bigr)
    \le
    \exp\bigl(-D(\kappa r\|\lambda r)\bigr)
    =
    \exp\bigl(-rD(\kappa\|\lambda)\bigr),
\]
where $D(\kappa r\|\lambda r)=rD(\kappa\|\lambda)$ by definition. This proves the
claim.
\end{proof}

\begin{remark}
\label{rem:kappa-ksparse}
Set $\lambda=J/(ke)$. The concentration term in
Corollary~\ref{cor:ksparse-kappa} vanishes as $r\to\infty$ only when
$\kappa>\lambda$. If $\kappa<\lambda$, then
$\Pr(\Po(r\lambda)>\lfloor\kappa r\rfloor)\to 1$ as $r\to\infty$, so this
coupling argument yields no asymptotic guarantee. At the boundary
$\kappa=\lambda$, the same term does not vanish exponentially.

If $J<ke$, then $\lambda=J/(ke)<1$, so one may choose $\kappa<1$ while still
having $\kappa>\lambda$. In that case the global capacity $\lfloor\kappa r\rfloor$
is smaller than the rank $r$, but Corollary~\ref{cor:ksparse-kappa} still
recovers the uncapacitated guarantee as $r\to\infty$.
\end{remark}

\subsection{Union-based algorithms for laminar matroids}
\label{sec:laminar}

We now turn to laminar matroids. In this class we use an algorithm that
maintains one candidate pool feasible in $\calM^{(J)}$. The algorithm does not
choose online a partition of the accepted elements into $J$ independent sets.

This distinction matters for laminar constraints. If an element belongs to a
chain
$C_1\subset C_2\subset\cdots\subset C_h$, then assigning it to a track consumes
capacity in every set of this chain inside that track. Later elements may fail
to find one track with remaining capacity in all relevant sets, even when the
total capacity across the $J$ tracks is sufficient. The union-based algorithm
checks feasibility directly in $\calM^{(J)}$.

We first prove an uncapacitated guarantee with $\kappa=\infty$. We then add the
global capacity constraint. For $\kappa\ge J$ the constraint is redundant, and
for $\kappa\in(J/e,J)$ the same asymptotic guarantee is recovered as $r\to\infty$.

\subsubsection{The union-based algorithm}
\label{sec:laminar-union}

Let $\calM = (N, \calI)$ be a laminar matroid of rank $r$, defined by a laminar
family $\calL$ and capacities $\mu(C)$ for $C\in\calL$. We assume without loss of
generality that $N\in\calL$ and $\mu(N)=r$; adding this root constraint does not
change the matroid.

The $J$-fold union $\calM^{(J)}$ is the laminar matroid over the same family
$\calL$ with capacities $J\mu(C)$. Indeed, for laminar constraints, a set can be
partitioned into $J$ independent sets if and only if it satisfies
$|X\cap C|\le J\mu(C)$ for every $C\in\calL$.

We analyze a union-based greedy algorithm, $\Greedy^{(J)}(p)$, which maintains a
single independent set in $\calM^{(J)}$.

\begin{algorithm}[H]
\caption{$\Greedy^{(J)}(p)$ for laminar matroids}
\label{alg:greedy-union}
\DontPrintSemicolon
\KwData{A laminar matroid $\calM=(N,\calI)$ of rank $r$, an integer $J\ge 1$,
a threshold $p\in(0,1)$, a global capacity parameter
$\kappa\in(0,\infty]$, and elements arriving online with arrival times $t_e$.}
\KwResult{A feasible output for the $(J,\kappa)$-MSP.}

$\AUX\leftarrow\emptyset$\;
Set $K\leftarrow\infty$ if $\kappa=\infty$, and
$K\leftarrow\lfloor\kappa r\rfloor$ otherwise\;
Reject every element arriving before time $p$\;

\For{each element $e$ arriving at time $t_e\in[p,1)$, in increasing order of $t_e$}{
    \If{$e\in\OPT_{\calM}(N_{t_e})$ \textbf{and}
        $\AUX\cup\{e\}\in\calI^{(J)}$ \textbf{and} $|\AUX|<K$}{
        $\AUX\leftarrow \AUX\cup\{e\}$\;
    }
}
\Return{$\OPT_{\calM}(\AUX)$}\;
\end{algorithm}

The condition $e \in \OPT_{\calM}(N_{t_e})$ ensures that elements are improving
with respect to the original matroid $\calM$.

\subsubsection{Labeling scheme and chain conflicts}

We use the chain-order labeling scheme of B\'erczi, Livanos, Soto, and Verdugo~\cite{BercziLSV24}. Fix
$e^*\in\OPT_{\calM}(N)$, and let $C_1\subset C_2\subset\cdots\subset C_h$ be the
maximal chain of laminar sets containing $e^*$. We choose a total order $\pi$ of
the ground set such that $e^*$ is first and, for every chain set $C$ containing
$e^*$, all elements contained in $C$ precede all elements outside $C$. Dummy
elements are ordered subject to the same nesting rule. The induced labeling
assigns to each element of $\OPT_t$ its relative rank in this order. Therefore,
for every such chain set $C$ and every improving time $t$, each element of
$\OPT_t\cap C$ receives a label at most $\mu(C)$.

This is a labeling scheme in the sense of Section~\ref{sec:prelim}: the order
$\pi$ is fixed, and at each improving time $t$ the labels are determined by the
relative order of the elements of $\OPT_t$ under $\pi$. Thus the assignment is
independent of the internal arrival order inside $N_t$.

\begin{definition}[$J$-well-indexed words]
Let $z$ be the sequence of labels of all improving arrivals in $[p,1)$, read in
reverse chronological order. A word $z\in\{1,\dots,r\}^*$ is
\emph{$J$-well-indexed} if it factors as $z=x1y$, where $x$ does not contain the
symbol $1$, and for every $c\in\{1,\dots,r\}$, the suffix $y$ satisfies
\[
    |\{i:y_i\le c\}|\le Jc-1.
\]
\end{definition}

\begin{lemma}
\label{lem:well-indexed-accepts}
Consider Algorithm~\ref{alg:greedy-union} with capacity parameter
$\kappa=\infty$. If the improving word $z$ of $e^*$ is $J$-well-indexed, then
Algorithm~\ref{alg:greedy-union} accepts $e^*$.
\end{lemma}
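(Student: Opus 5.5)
The plan is to follow the proof of the analogous statement for \textsc{Greedy-Improving} in B\'erczi, Livanos, Soto, and Verdugo~\cite{BercziLSV24}, replacing the capacities $\mu(C)$ by $J\mu(C)$.

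\emph{Locating $e^*$ in the word.} Under the chain-order labeling, $e^*$ is first in $\pi$, so whenever $e^*\in\OPT_t$ it receives label $1$. By Proposition~\ref{prop:folklore}, $e^*\in\OPT_t$ for every $t\ge t_{e^*}$, so after $t_{e^*}$ no improving arrival can carry label $1$. Hence $e^*$ is the last label-$1$ improving arrival. In the reverse-chronological word $z=x1y$, the symbol $1$ after $x$ therefore corresponds to $e^*$. Since the word only records arrivals in $[p,1)$, this also gives $t_{e^*}\ge p$. The suffix $y$ lists, in reverse order, the improving arrivals in $[p,t_{e^*})$. These are the only candidates that could have been inserted into $\AUX$ before $e^*$ arrives.

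\emph{Feasibility at the arrival of $e^*$.} Since $e^*$ is improving at $t_{e^*}$ and $\kappa=\infty$, the algorithm accepts $e^*$ iff $\AUX\cup\{e^*\}\in\calI^{(J)}$. By the description of $\calM^{(J)}$ as the laminar matroid with capacities $J\mu(C)$, and because $\AUX$ is already feasible, only sets containing $e^*$ can become tight. These are the chain sets $C_1\subset\cdots\subset C_h$. So it suffices to prove that $|\AUX\cap C_i|\le J\mu(C_i)-1$ for each $i$ at time $t_{e^*}$.

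\emph{Bounding $|\AUX\cap C_i|$.} Each element of $\AUX$ entered as an improving arrival at some time $s\in[p,t_{e^*})$, so it contributes one letter of $y$. If such an element lies in $C_i$, then it lies in $\OPT_s\cap C_i$. By the chain-order property, it then receives a label at most $|\OPT_s\cap C_i|\le\mu(C_i)$. Here the first $|\OPT_s\cap C_i|$ positions of $\OPT_s$ under $\pi$ are exactly the elements of $\OPT_s\cap C_i$, and $\OPT_s\cap C_i$ is independent in $\calM$. Therefore
\[
|\AUX\cap C_i|\le|\{j:y_j\le \mu(C_i)\}|\le J\mu(C_i)-1,
\]
where the last inequality applies $J$-well-indexedness with $c=\mu(C_i)$. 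Here $c\le r$ because $\mu(C_i)\le\mu(N)=r$ (we may cap capacities at $r$). With this bound, $e^*$ is accepted.

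\emph{Main obstacle.} The delicate step is justifying that the labels of improving elements in $C_i$ are at most $\mu(C_i)$. The order $\pi$ places every element of $C_i$ before everything outside it. Hence the elements of $\OPT_s\cap C_i$ occupy an initial segment of $\OPT_s$ in $\pi$-order. That segment has length at most $\mu(C_i)$ because $\OPT_s$ is independent in $\calM$, not merely in $\calM^{(J)}$. This uses that improvement is tested in $\calM$, as stated after Algorithm~\ref{alg:greedy-union}. I would also check that dummy elements obey the same nesting rule, so the argument is unchanged on the augmented instance.
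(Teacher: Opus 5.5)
Your proof is correct and follows essentially the same argument as the paper. You identify $e^*$ as the distinguished symbol $1$, observe that only chain sets containing $e^*$ can become tight in the laminar matroid with capacities $J\mu(C)$, and bound $|\AUX\cap C|$ by the number of letters of $y$ that are at most $\mu(C)$. The only differences are presentational: you argue directly rather than by contradiction, and you spell out why elements of $\OPT_s\cap C$ receive labels at most $\mu(C)$ (they form an initial segment of $\OPT_s$ under $\pi$, and $\OPT_s$ is independent in $\calM$), which the paper takes from its description of the labeling scheme.
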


\begin{proof}

Since $z$ factors as $x1y$ with $x$ containing no symbol $1$, the distinguished
symbol $1$ is the arrival of $e^*$. Indeed, by Proposition~\ref{prop:folklore},
$e^*\in\OPT_t$ for every $t\ge t_{e^*}$. Since $e^*$ is first in the order $\pi$,
it has label $1$ at all such times. Hence no later improving arrival can carry
label $1$. Thus $e^*$ arrives in $[p,1)$, and the
suffix $y$ corresponds to improving elements arriving in $[p,t_{e^*})$. Run
Algorithm~\ref{alg:greedy-union} with capacity parameter $\kappa=\infty$, and
let $\AUX^-$ be its candidate pool immediately before $t_{e^*}$.

If $e^*$ is rejected, then the condition
$\AUX^-\cup\{e^*\}\in\calI^{(J)}$ is false. Hence there exists a set $C\in\calL$
containing $e^*$ such that $|\AUX^-\cap C|=J\mu(C)$.
Let $c = \mu(C)$. By the
labeling scheme, every improving element in $C$ receives a label in $\{1, \dots,
c\}$. Thus, every element in $\AUX^- \cap C$ contributes a symbol $\le c$ to $y$.
This implies $|\{i : y_i \le c\}| \ge Jc$, which contradicts the assumption that
$z$ is $J$-well-indexed.
\end{proof}

\subsubsection{Uncapacitated probability-competitive ratio}

To lower bound the success probability, we analyze the discrete event that $z$ is
$J$-well-indexed. Conditional on the arrival time $t\in[p,1)$ of $e^*$, let
$N_c$ be the number of improving arrivals in $[p,t)$ with label at most $c$.
Then $N_c\sim\Po(c\ln(t/p))$. For the labeling fixed above, the condition that
$x$ contains no symbol $1$ is automatic once $e^*$ arrives. Hence, conditional on
$t\in[p,1)$, the word $z$ is $J$-well-indexed whenever $N_c\le Jc-1$ for all
$c\in\{1,\dots,r\}$.

\begin{theorem}
\label{thm:laminar-main}
For every laminar matroid $\calM$ and every $J\ge 1$,
Algorithm~\ref{alg:greedy-union} with capacity parameter $\kappa=\infty$ and
threshold $p=e^{-J/e}$ selects every $e^*\in\OPT_{\calM}(N)$ with probability
at least
\[
1-e^{-J/e}-\frac{e^{-J/e}}{1-e^{-J/e}}.
\]
In particular, its probability-competitive ratio is $1-O(e^{-J/e})$.
Moreover, the lower bound obtained from the $J$-well-indexed event is nondecreasing
in $J$ after optimizing over $p$.
\end{theorem}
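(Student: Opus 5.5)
Conditioning on $t_{e^*}$ and using the well-indexed criterion from Lemma~\ref{lem:well-indexed-accepts} is the natural route. Write $\lambda:=\ln(1/p)=J/e$. The probability that $e^*$ arrives before $p$ is $p=e^{-J/e}$; this gives the first error term.

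Now condition on $t_{e^*}=t\in[p,1)$ and put $\mu:=\ln(t/p)\le\lambda$. By the discussion preceding the theorem, together with the conditional form of Proposition~\ref{prop:poisson-label-general}, $e^*$ is accepted whenever $N_c\le Jc-1$ for every $c\in\{1,\dots,r\}$, where $N_c\sim\Po(c\mu)$. We do not need the joint structure of the $N_c$; a union bound suffices:
\[
\Pr(e^*\notin\ALG\mid t)\le\sum_{c\ge1}\Pr\bigl(\Po(c\mu)\ge Jc\bigr).
\]
Each term is bounded with Lemma~\ref{lem:poisson-chernoff}(b), applied with integer $Jc$ and mean $c\mu\le cJ/e<Jc$:
\[
\Pr\bigl(\Po(c\mu)\ge Jc\bigr)\le\Bigl(\frac{e\,c\mu}{Jc}\Bigr)^{Jc}e^{-c\mu}.
\]
As in Theorem~\ref{thm:ksparse}, this bound is increasing in the mean on $[0,Jc)$, so it is maximized at $\mu=\lambda=J/e$. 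There it equals $1^{Jc}e^{-cJ/e}=(e^{-J/e})^c$. Summing the geometric series over $c\ge1$ gives $e^{-J/e}/(1-e^{-J/e})$, uniformly in $t$. Integrating over $t\in[p,1)$ and adding the mass $p$ gives
\[
\Pr(e^*\notin\ALG)\le e^{-J/e}+(1-e^{-J/e})\frac{e^{-J/e}}{1-e^{-J/e}}\le e^{-J/e}+\frac{e^{-J/e}}{1-e^{-J/e}},
\]
which is the stated bound. The $1-O(e^{-J/e})$ conclusion follows immediately.

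For the monotonicity claim, define
\[
g_J(p):=\int_p^1\Pr\bigl(\forall c:\ \Po(c\ln(t/p))\le Jc-1\bigr)\,dt,
\]
the exact probability of the $J$-well-indexed event as a function of $p$. For fixed $p$ and $t$, each constraint $N_c\le Jc-1$ is weaker when $J$ is replaced by $J+1$, since $Jc-1\le(J+1)c-1$. Hence $g_J(p)\le g_{J+1}(p)$ pointwise, and taking the supremum over $p$ preserves this, so $\sup_p g_J(p)\le\sup_p g_{J+1}(p)$. The only subtlety is checking that this is the quantity meant by the bound ``after optimizing over $p$'': the event is defined via the same labeling for all $J$, so the pointwise comparison is legitimate.

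I expect the main obstacle to be none of the computations. It is justifying that the conditional Poisson laws hold with $\tau_0<p$ and that the reduction of the well-indexed property to the events $N_c\le Jc-1$ is valid. Both of these are taken from the preceding discussion.
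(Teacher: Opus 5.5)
Your proposal is correct and follows essentially the same route as the paper. You condition on $t_{e^*}\in[p,1)$ and union-bound the events $N_c\ge Jc$ with the Poisson Chernoff bound; your use of Lemma~\ref{lem:poisson-chernoff}(b) with $Jc$ in place of $J$ is the same inequality as part~(a). You then maximize at mean $cJ/e$ to get $p^c$, sum the geometric series, add the mass $p$ of $\{t_{e^*}<p\}$, and obtain monotonicity from the pointwise inclusion of $J$-well-indexed words into $(J+1)$-well-indexed ones. Your integration over $t\in[p,1)$ is in fact slightly sharper than the paper's (it gives $2e^{-J/e}$ before you weaken it to the stated bound), but the argument is otherwise identical.
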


\begin{proof}
Fix an element $e^*\in\OPT_{\calM}(N)$ and its arrival time $t\in[p,1)$. Let
$N_c$ be the number of improving arrivals in $[p,t)$ with label at most $c$.
Conditional on $t$, $N_c$ has a Poisson distribution with parameter
$c\lambda$, where $\lambda=\ln(t/p)$. Since $t\in[p,1)$ and $p=e^{-J/e}$, we
have $\lambda\in[0,J/e]$. By Lemma~\ref{lem:well-indexed-accepts}, the algorithm
accepts $e^*$ if the corresponding word is $J$-well-indexed. Therefore, if the
algorithm rejects $e^*$ then $N_c \ge Jc$ for some $c \in \{1,
\dots, r\}$.

To show that the guarantee is monotonically non-decreasing in $J$, we use a
coupling argument. Since $Jc - 1 \le (J+1)c - 1$ for every $c \ge 1$, any
sequence of arrivals that is $J$-well-indexed is deterministically
$(J+1)$-well-indexed. Thus, for any fixed $p$, the lower bound on the success
probability derived from the well-indexed event for $\Greedy^{(J+1)}(p)$ is
pointwise bounded below by that for $\Greedy^{(J)}(p)$. Taking the supremum over
$p$ preserves this monotonicity, ensuring that the analytical guarantee improves as
$J$ increases.

For the explicit bound, fix $p = e^{-J/e}$. The algorithm can reject $e^*$ only
if there exists an index $c \in \{1, \dots, r\}$ such that $N_c \ge Jc$.
Fix the arrival time $t\in[p,1)$ of $e^*$, and recall that
$N_c\sim\Po(c\lambda)$ with $\lambda=\ln(t/p)\in[0,J/e]$. By the union bound,
\[
\Pr(e^*\notin\ALG\mid t)\le \sum_{c=1}^r \Pr(N_c \ge Jc\mid t).
\]
The Chernoff bound for Poisson variables (Lemma~\ref{lem:poisson-chernoff}(a)) gives
\[
\Pr(N_c \ge Jc\mid t) \le \exp\big(-c (J \ln(J/\lambda) + \lambda - J)\big).
\]
The exponent $J\ln(J/\lambda)+\lambda-J$ is strictly decreasing in $\lambda$ on
$[0,J/e]$. Therefore, for every $t\in[p,1)$,
\[
\Pr(N_c \ge Jc\mid t) \le \exp\big(-c(J \ln(e) + J/e - J)\big) = \exp(-c J/e) = p^c.
\]
Summing over $c\ge 1$ yields
\[
\Pr(e^*\notin\ALG\mid t)\le \sum_{c=1}^r p^c \le \frac{p}{1-p}.
\]
Finally, $\Pr(t_{e^*}<p)=p$, and therefore
\[
\Pr(e^*\notin\ALG) \le \Pr(t_{e^*}<p) + \Pr(e^*\notin\ALG\mid t_{e^*}\in[p,1))
\le p+\frac{p}{1-p}.
\]
With $p=e^{-J/e}$, this gives the stated bound.
\end{proof}

The explicit lower bound is mainly useful in the large-$J$ regime. For small
values of $J$, and in particular for $J=1$, sharper threshold choices give
better bounds.

\begin{remark}
For $J=1$, Algorithm~\ref{alg:greedy-union} is the Greedy-Improving algorithm
of B\'erczi, Livanos, Soto, and Verdugo~\cite{BercziLSV24} for laminar
matroids. They proved that its optimal probability guarantee is exactly
$1-\ln 2$, attained with threshold $p=1/2$. The union-bound estimate in
Theorem~\ref{thm:laminar-main} does not recover this value, since it uses
$p=e^{-1/e}$. We use that estimate only to obtain a simple guarantee for
general $J$.
\end{remark}

\subsubsection{Capacitated probability-competitive ratio}

We now add the global capacity constraint $|\AUX|\le\lfloor\kappa r\rfloor$.
If $\kappa\ge J$, this constraint is redundant: since $N\in\calL$ and $\mu(N)=r$,
membership in $\calI^{(J)}$ already implies $|\AUX|\le Jr$. We therefore assume
$\kappa<J$.

The proof below compares the execution with capacity parameter $\kappa$ to the
execution with $\kappa=\infty$. With the threshold $p=e^{-J/e}$, the total
number of improving arrivals in $[p,1)$ has distribution $\Po(rJ/e)$. Therefore,
whenever $\kappa>J/e$ the additional loss due to the global capacity constraint
is exponentially small in the rank $r$.

\begin{corollary}
\label{cor:laminar-kappa}
Fix $J\ge 1$ and $\kappa\in(J/e,J)$. For every laminar matroid $\calM$ of
rank $r$, Algorithm~\ref{alg:greedy-union} with threshold $p=e^{-J/e}$ and
capacity parameter $\kappa$ satisfies, for every $e^*\in\OPT$,
\[
    \Pr(e^*\in\ALG)
    \ge
    1-O(e^{-J/e})
    -
    \exp\left(
        -r\,D\left(\kappa\,\middle\|\,\frac{J}{e}\right)
    \right),
\]
where $D(\kappa\|\lambda):=\kappa\ln(\kappa/\lambda)-\kappa+\lambda$.
Consequently, for fixed $J$ and $\kappa\in(J/e,J)$, the
probability-competitive ratio converges to $1-O(e^{-J/e})$ as $r\to\infty$.
\end{corollary}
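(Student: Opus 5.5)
The argument mirrors the proof of Corollary~\ref{cor:ksparse-kappa}: couple the capacitated execution with the uncapacitated one and pay only for the event that the global bound is ever reached. Run Algorithm~\ref{alg:greedy-union} with threshold $p=e^{-J/e}$ twice on the same arrivals, weights, and tie-breaking, once with parameter $\kappa$ and once with $\kappa=\infty$. Let the resulting pools be $\AUX^{\kappa}$ and $\AUX^{\infty}$.

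First, bound the uncapacitated pool by the improving process. Every element added to $\AUX^{\infty}$ is improving and arrives in $[p,1)$. Using the chain-order labeling scheme, each improving arrival gets some label in $[r]$. Hence $|\AUX^{\infty}|\le N_{[r]}[p,1)$. By Proposition~\ref{prop:poisson-label-general} with $S=[r]$, $N_{[r]}[p,1)\sim\Po(r\ln(1/p))=\Po(rJ/e)$; here we choose $\tau_0<p$ so that the proposition applies.

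Next, show the two executions agree on a high-probability event. Let $\calE:=\{N_{[r]}[p,1)\le\lfloor\kappa r\rfloor\}$. On $\calE$, the uncapacitated pool never reaches size $K=\lfloor\kappa r\rfloor$ before its final insertion, since $|\AUX|$ is nondecreasing and integer-valued. So the test $|\AUX|<K$ never binds. The remaining acceptance tests (improving, and $\AUX\cup\{e\}\in\calI^{(J)}$) depend only on the current pool. An induction over arrivals therefore shows that the two executions make identical decisions, and $\AUX^{\kappa}=\AUX^{\infty}$ on $\calE$. By Observation~\ref{obs:union-vs-output}, for a fixed $e^*\in\OPT$,
\[
\Pr(e^*\in\OPT(\AUX^{\kappa}))\ge \Pr(e^*\in\OPT(\AUX^{\infty}))-\Pr(\calE^c).
\]
Theorem~\ref{thm:laminar-main} bounds the first term below by $1-e^{-J/e}-e^{-J/e}/(1-e^{-J/e})=1-O(e^{-J/e})$.

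Finally, bound $\Pr(\calE^c)$. Because $N_{[r]}[p,1)$ is an integer, $\calE^c=\{N_{[r]}[p,1)>\lfloor\kappa r\rfloor\}\subseteq\{N_{[r]}[p,1)\ge\kappa r\}$. Since $\kappa r>rJ/e$, Lemma~\ref{lem:poisson-chernoff}(a) gives
\[
\Pr(\calE^c)\le\exp\bigl(-D(\kappa r\,\|\,rJ/e)\bigr)=\exp\bigl(-r\,D(\kappa\,\|\,J/e)\bigr),
\]
using the homogeneity of $D$. Moreover, $D(\kappa\|J/e)>0$ because $\kappa\neq J/e$, so this term vanishes as $r\to\infty$, which gives the convergence claim. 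The hypothesis $\kappa<J$ is used only to rule out the redundant case already discussed.

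The only delicate point is the coupling step: checking that, on $\calE$, the capacitated run never diverges. This rests on monotonicity of the pool size and the integrality of $\lfloor\kappa r\rfloor$, as in the proof of Theorem~\ref{thm:global_b_kappa}. I expect it to be routine once stated carefully.
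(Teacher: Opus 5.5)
Your proposal is correct and follows essentially the same coupling argument as the paper. The capacitated and uncapacitated runs agree on $\calE=\{N_{[r]}[p,1)\le\lfloor\kappa r\rfloor\}$, Theorem~\ref{thm:laminar-main} handles the uncapacitated run, and Lemma~\ref{lem:poisson-chernoff}(a) bounds $\Pr(\calE^c)$. Your integrality step $\calE^c\subseteq\{N_{[r]}[p,1)\ge\kappa r\}$ is a detail the paper leaves implicit when it applies that lemma.
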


\begin{proof}

Set $\lambda=J/e$, so $p=e^{-\lambda}$. Run the executions with parameters
$\kappa$ and $\infty$ on the same arrivals, weights, and tie-breaking rules.
Let $\AUX^\kappa$ and $\AUX^\infty$ be their candidate pools, and let
$\ALG^\kappa=\OPT_{\calM}(\AUX^\kappa)$ and
$\ALG^\infty=\OPT_{\calM}(\AUX^\infty)$.

Under the labeling scheme fixed for $e^*$, every improving arrival in $[p,1)$
receives a label in $[r]$. Hence the number of improving arrivals in $[p,1)$ is
$N_{[r]}[p,1)$. By Proposition~\ref{prop:poisson-label-general},
\[
    N_{[r]}[p,1)\sim\Po(r\lambda).
\]
Define
\[
    \calE
    :=
    \{N_{[r]}[p,1)\le \lfloor\kappa r\rfloor\}.
\]
On $\calE$, the final size of the uncapacitated candidate pool is at most
$\lfloor\kappa r\rfloor$. Since this size is nondecreasing during the
execution, the uncapacitated execution never violates the global capacity
bound. Thus the capacitated and uncapacitated executions add the same elements
to their candidate pools on $\calE$.

By Observation~\ref{obs:union-vs-output},
\[
    \Pr(e^*\in\ALG^\kappa)
    \ge
    \Pr(e^*\in\ALG^\infty)-\Pr(\calE^c).
\]
Theorem~\ref{thm:laminar-main} gives
$\Pr(e^*\in\ALG^\infty)\ge 1-O(e^{-J/e})$.
It remains to bound $\Pr(\calE^c)$. Since $\kappa>\lambda$, the
Cram\'er--Chernoff bound for Poisson variables (Lemma~\ref{lem:poisson-chernoff}(a)) gives
\[
    \Pr(\calE^c)
    =
    \Pr(\Po(r\lambda)>\lfloor\kappa r\rfloor)
    \le
    \exp(-rD(\kappa\|\lambda)).
\]
Substituting $\lambda=J/e$ proves the claim.\qedhere
\end{proof}

\section{Conclusion}

The main result of this paper is exact optimality for the $J$-MSP on
transversal matroids. For every fixed $J$, the optimal probability guarantee is
the optimal success probability of the classical $J$-choice secretary problem of
Gilbert and Mosteller~\cite{GilbertMosteller}. Thus the rank-one transversal
matroid is already a tight worst case for the whole transversal class.

The proof is local around each fixed element $v\in\OPT$. At the arrival time of
$v$, let $f_v$ be the right vertex matched to $v$ in the canonical matching.
Before $v$ arrives, the only relevant conflicts are improving arrivals whose
canonical partner is $f_v$. These arrivals form one label process. This process
has the same law as the record process in the $J$-choice secretary problem. The
Gilbert--Mosteller thresholds therefore transfer to Algorithm~\ref{alg:optJ}
without loss.

The capacitated transversal setting shows what changes when each right vertex
has capacity $b$ and the pool has global capacity $\kappa r$. For a fixed
optimal element, the obstruction is now a block of $b$ labels. The
single-threshold routing algorithm of Section~\ref{sec:routing} gives explicit
bounds in terms of $J$, $b$, $q$, and $\kappa$. When $q\to\infty$, the loss from
the global capacity constraint disappears if the expected uncapacitated load is
below the capacity threshold. For $b=1$ and $q\to\infty$, the optimized limiting
guarantee converges to $1-e^{-\kappa}$ as $J\to\infty$ for fixed $\kappa$.

The $k$-column-sparse and laminar examples show two different certificates for
the same model. The $k$-column-sparse algorithm is multi-track. It maintains
explicit sets $F_1,\ldots,F_J$ and proves feasibility track by track. Its
analysis must show that each accepted element can be routed online to a track
that is not blocked. The laminar algorithm is union-based. It maintains one pool
and enforces feasibility in $\calM^{(J)}$ directly. This avoids online routing
and avoids splitting laminar capacities across tracks. It shifts the burden to
enforcing the union constraint during the online execution.

For multi-track algorithms, a natural broader framework is given by the
$k$-directed certifiers of Marinkovic, Soto, and
Verdugo~\cite{MarinkovicSV26}. In that framework, each accepted element has a
certificate. A directed blocking relation between certificates is used to prove
feasibility. The parameter $k$ bounds how many elements of one independent set
can block a fixed certificate. The multi-track algorithm of
Section~\ref{sec:ksparse} follows this logic. An improving element is accepted
when some track contains no previously accepted element that blocks it. The
analysis works because the blockers of a fixed element of $\OPT$ are controlled
by few labels. Extending the labeling-scheme analysis to general $k$-directed
certifiers is a natural direction that we do not pursue here.

\paragraph*{Acknowledgements} This work was partially supported by ANID (Agencia Nacional de Investigaci\'on y Desarrollo, Chile) through FONDECYT Grant No.~1231669, ANID BECAS/DOCTORADO NACIONAL 21252789 and the BASAL Center for Mathematical Modeling (FB210005).
\bibliographystyle{plain}
\bibliography{transversal_mmsp_refs}

\end{document}